\newcommand{\bi}{\begin{itemize}}
\newcommand{\ei}{\end{itemize}}
\newcommand{\be}{\begin{enumerate}}
\newcommand{\ee}{\end{enumerate}}
\newcommand{\beq}{\begin{equation}}
\newcommand{\eeq}{\end{equation}}
\newcommand{\beqnr}{\begin{eqnarray}}
\newcommand{\eeqnr}{\end{eqnarray}}
\newtheorem{assumption}{ASSUMPTION}
\newtheorem{lemma}{Lemma}
\newtheorem{proposition}{Proposition}
\newtheorem{corollary}{Corollary}
\newtheorem{theorem}{Theorem}
\newtheorem{definition}{Definition}
\DeclareMathOperator*{\argmin}{arg\,min}
\begin{document}

%\begin{frontmatter}

\title{The effect of round-off error on long memory processes}

%% use optional labels to link authors explicitly to addresses:
%% \author[label1,label2]{<author name>}
%% \address[label1]{<address>}
%% \address[label2]{<address>}

\author[1]{Gabriele La Spada\thanks{Corresponding author. %Address: Department of Economics, Fisher Hall, Princeton University, Princeton, NJ 08544-1021, USA. 
E-mail: gla@princeton.edu.}} %Phone: +1 505 204 6885.}}
\affil{\it Department of Economics, Princeton University, Princeton, NJ 08544-1021, USA}

\author[2, 3, 4]{Fabrizio Lillo}
\affil[2]{\it Scuola Normale Superiore di Pisa, Piazza dei Cavalieri 7, 56126 Pisa, Italy}
\affil[3]{\it Dipartimento di Fisica, Universit\`a di Palermo, Viale delle Scienze, I-90128, Palermo, Italy}
\affil[4]{\it Santa Fe Institute, 1399 Hyde Park Road, Santa Fe, NM 87501, USA}

\maketitle

\begin{abstract}
We study how the round-off (or discretization) error changes the statistical properties of a Gaussian long memory process. We show that the autocovariance and the spectral density of the discretized process are asymptotically rescaled by a factor smaller than one, and we compute exactly this scaling factor. Consequently, we find that the discretized process is also long memory with the same Hurst exponent as the original process. We consider the properties of two estimators of the Hurst exponent, namely the local Whittle (LW) estimator and the Detrended Fluctuation Analysis (DFA). By using analytical considerations  and numerical simulations we show that, in presence of round-off error, both estimators are severely negatively biased in finite samples. Under regularity conditions we prove that the LW estimator applied to discretized processes is consistent and asymptotically normal. Moreover, we compute the asymptotic properties of the DFA for a generic (i.e. non Gaussian) long memory process and we apply the result to discretized  processes.
%We study how discretization changes the statistical properties of a Gaussian long memory process. We find that the discretized process is also long memory with the same Hurst exponent as the original process. We analytically obtain the asymptotic behavior of the autocorrelation function and of the spectral density. By using numerical simulations and analytical considerations we show that the log-periodogram regression and the Detrended Fluctuation Analysis (DFA) are severely negatively biased estimators of the Hurst exponent. Finally, we compute the asymptotic properties of the DFA of a generic long memory process and we apply the result to discretized  processes.
%We study how discretization, occurring when a continuously varying process is approximated by or observed on a grid of discrete values, changes the properties of a Gaussian long-memory process. By computing the asymptotic behavior of the autocovariance and of the spectral density, we find that the discretized process has the same Hurst exponent of the original process. We show that the log-periodogram regression and the Detrended Fluctuation Analysis (DFA) are severely negatively biased estimators of the Hurst exponent for discretized processes. We compute the asymptotics of the DFA for a generic long-memory process and we study them for discretized processes.
\end{abstract}

%\end{frontmatter}

\newpage

%% main text
\section{Introduction}
\label{intro}

``All economic data is discrete'' (Engle and Russell (2004)). Round-off errors
%\footnote{It is important to immediately discuss our choice of the terminology. Terms such as discretization, coarse-graining, round-off error, or discretization are used with different meaning in different fields. For example, in signal processing the process of approximating a continuous range of output values by a discrete (usually finite and small) grid is called discretization, while the discrete-time sampling of the input is called discretization. In this paper we will use the term discretization.} 
occur whenever a real valued process is observed on a grid of discrete values.
%\footnote{It is important to clarify that in this paper we do not consider the discretization in time of a continuous-time stochastic process, but the discretization of the values of a continuously valued discrete time stochastic process. The problem of discrete-time sampling is an important topic in the literature of stochastic processes, but we do not deal with this issue here. Studying the combined effect of discrete-time sampling and discretization of values of a continuously valued continuous-time process could be the subject of future works.}
A special case of round-off error is obtained by taking the sign of a real valued process. Round-off errors change the properties of the original stochastic process. Few exact results exist on the effect of round-off error on stochastic processes. Delattre and Jacod (1997), for example, proved a central limit theorem on a Brownian motion with round-off error sampled at discrete times. In this paper we study analytically and numerically the effect of round-off error on long memory processes. We will use the term {\it discretized process} to refer to the process with round-off error and the term {\it discretization} to refer to the rounding procedure.

Round-off errors can be either due to a limit in the resolution of the observing device or to the fact that an underlying real valued process can manifest itself only as a discrete valued process. When seen as a resolution effect, round-off error can also be considered as a special case of measurement error. One recent strand of econometric literature considers the problem of estimation of a process that can be observed only with some noise due to the measurement process. The typical modeling approach is to consider the measurement error as an additive white noise process uncorrelated with the unobserved latent process (see, for example, Hansen and Lunde (2010)). Round-off error can be considered as a different form of measurement error, which is a deterministic function of the underlying latent process and it is not uncorrelated with it. 

An important example of round-off error due to the fact that the process manifests itself only as a discrete valued process is the dynamics of asset prices. Despite the fact that the price of an asset is a real number, transaction prices (but also quote prices) can assume only values which are multiple of a minimum value called tick size. 
%When viewed over long time horizons the variance of the price process is usually quite large relative to the squared magnitude of the minimum price movement, and consequently the effect of discretization is weak. For transaction by transaction data, however, this is not the case. In certain circumstances the tick over price ratio can be large leading to a price dynamics that cannot be even approximated as a real valued process.
For transaction by transaction data the tick over price ratio can be large leading to a price dynamics that cannot be even approximated as a real valued process. In the market microstructure literature the round-off error is one of the main sources of disturbance in the estimation of integrated volatility. Several papers have considered how tick size affects the diffusion dynamics of price on short time scales. For example, Gottlieb and Kalay (1985) and Harris (1990) developed microstructure models to investigate the effect of price discretization on return variance and serial correlation. More recent papers consider the effect of price discretization on phase portrait of returns (Szpiro (1998)), on price-dividend relation (Bali and Hite (1998)), and on integrated volatility (Rosenbaum (2009)). See also La Spada {\it et al.} (2011) and references therein for a recent review of the effect of tick size on the diffusion properties of financial asset prices.
% (Do we want to motivate our work by citing Gillemot, Farmer, and Lillo (2005) and  the change in the Hurst exponent estimates across different tick-size periods? Do we want to talk about Epps effect and Munnix {\it et al.} (2010)?) T
Beside market microstructure, 
%We therefore expect that the results we present here could be of interest, for example, in financial market microstructure. Other cases where 
discretized processes emerge naturally in discrete choice models, binned data, computer vision, detectors, and digital signal processing. 
Recently, the effect of discretization has also been studied in the framework of rational inattention (Saint-Paul (2011)).
%Finally, our results can be useful for the numerical generation of  long memory discrete valued processes by discretizing a continuously valued long memory process.  Possible applications are, for example, in simulation of  DNA sequences or of digital transmissions signals.
As an example of a financial application of long memory processes when considering the sign of a variable, we mention the work of Lillo and Farmer (2004). There authors empirically show that the sign of the market order flow in a double auction financial market is a long memory process.

Long memory processes are ubiquitous in natural, social, and economic systems (see Beran (1994)).  However, a large part of the theory, statistics, and modeling of long memory processes is based on the assumption of normality of the distribution, and very often also on the hypothesis of linearity of the generating mechanism. A major step toward generalization has been the paper by Dittmann and Granger (2002) that investigated the properties of a process obtained from a non-linear transformation of a Gaussian long memory process. Their main result is that when the transformation can be written as a {\it finite} linear combination of Hermite polynomials, the transformed process has the same or a smaller Hurst exponent of the original process, depending on the Hermite rank of the transformation (see below for a more precise definition of these terms). However, their results ``need not hold for  transformations with infinite Hermite expansion" (see Dittmann and Granger (2002), Proposition 1). Moreover, recently, several papers have also studied the asymptotic properties of common estimators of the Hurst exponent when they are applied to non-Gaussian and non-linear time series (e.g., Dalla, Giraitis and Hidalgo (2006); see Section~\ref{estimation} for a detailed literature review). %Note that even  Dittmann and Granger (2002) did not discuss the issues related to the estimation of the Hurst exponent of non-linear transformations of Gaussian time series.  
The present paper contributes in this direction by presenting several results for a specific, yet important, class of non-linear transformations of a Gaussian long memory process with an {\it infinite} Hermite expansion, namely the class of discretization transformations. 

%As we show below, any discretization transformation can be written as an infinite sum of Hermite polynomials, and therefore the results of Dittmann and Granger (2002) cannot be directly applied. Moreover, the vast majority of Hurst exponent estimators have been constructed for either Gaussian or linear time series and therefore their properties cannot be used for discrete valued time series. 

In this paper we present  an in-depth analysis of the properties of a stochastic process obtained from the discretization of a large class of Gaussian long memory processes. We give the asymptotic behavior of the autocovariance and of the spectral density, by computing explicitly the leading term and the order of the second term in a series expansion. We find that the autocovariance and the autocorrelation are asymptotically rescaled by a factor smaller than one, and the Hurst exponent is the same for the continuous and the discretized process. The spectral density is also rescaled for small frequencies by the same scaling factor as the autocovariance. We find an explicit closed form of this scaling factor. It is worth noting that the decrease of the autocorrelation function holds for the discretization of any Gaussian weakly stationary process, either long-memory or short-memory. Our results are consistent with those in Hansen and Lunde (2010) on the estimation of the persistence and the autocorrelation function of a time series measured with error.

We then consider two classic methods to estimate the Hurst exponent, namely the local Whittle (LW) estimator suggested by K\"unsch (1987) and the Detrended Fluctuation Analysis (DFA) introduced by Peng {\it et al.} (1994). The LW estimator is a very popular semiparametric method for investigating the long memory properties of a stochastic process. It has been extensively studied in econometric theory and frequently used in empirical work in financial econometrics. However, most of the results in the literature do not hold for the discretized process (see discussion in Section~\ref{estimation}). Recently, Dalla, Giraitis and Hidalgo (2006) proved consistency and asymptotic normality of the LW estimator for a general class of nonlinear processes, but, as discussed in detail in Section~\ref{estimation}, we cannot directly apply their results to the class of processes considered in this paper. As we show below, under suitable regularity conditions we are able to extend the results in Dalla {\it et al.} (2006) to our setting and prove that the LW estimator is consistent and asymptotically normal for Gaussian long memory processes observed with roundoff error. Moreover, we also show that one of the main results of Dalla {\it et al.} (2006) can be generalized by relaxing some assumptions that are not necessary. 

The DFA (Peng {\it et al.} (1994)) is another very popular semiparametric method for the investigation of long memory properties of generic processes. It was introduced more than fifteen years ago to investigate physiological data, in particular the heartbeat signal. Since its introduction it has been applied to a large variety of systems, including physical, biological, economic, and technological data. In economics and finance it has been applied for example in Schmitt {\it et al.} (2000), Lillo and Farmer (2004), Di Matteo {\it et al.} (2005), Yamasaki {\it et al.} (2005), and Alfarano and Lux (2007). In a recent paper Bardet and Kammoun (2008) computed explicitly the  asymptotic properties of the DFA for the fGn and for a general class of Gaussian weakly stationary long memory processes. Here we generalize their results to a non-Gaussian generic long memory process and we applied them to the discretized process. As a byproduct we show that the order of the error of the root-mean-square fluctuation given by Theorem 4.2 of Bardet and Kammoun (2008)  is not correct for a generic Gaussian process.  Because of the cancellation of a term the theorem is correct for the fGn and the fARIMA process as claimed in Bardet and Kammoun (2008). By comparing the root-mean-square fluctuation of the discretized process to that of the continuous process we argue that the second-order term induces a negative bias in the estimation of the Hurst exponent. 

%We also run extensive numerical simulations to test the performance of two estimators of the Hurst exponent on time series of finite length. We considered the log-periodogram regression and the DFA method. As detailed below, we find that these estimators are often biased and that what is optimal for a Gaussian time series might be non optimal for discretized time series. Specifically, we find that, in general, the estimate of the Hurst exponent of the discretized process is negatively biased. This bias is severe on relatively small samples and for coarse discretization.

The paper is organized as follows. In Section~\ref{longmemo} we define the class of long memory processes we consider in the present paper. In Section~\ref{nullmodel} we derive some analytical results on the distribution, autocovariance, and spectral density of the discretized process. Section~\ref{estimation} presents analytical and numerical results on the estimation of the Hurst exponent obtained by using the LW estimator and the DFA.  Finally, Section~\ref{conclusion} concludes. 

%
%%%%%%%%%%%% The null model %%%%%%%%%

\section{Long memory processes}
\label{longmemo}

In this paper we are interested in studying the effect of round-off error on a long memory process.
There are several possible definitions of a long memory process.  The very general definition we are using in the present paper is the following.
\begin{definition}\label{generalDef}
A discrete time weakly stationary stochastic process $\{X(t)\}_{t \in \mathbb{N}}$ is long memory if its autocovariance function $\gamma(k)$ behaves as 
\beq
\gamma(k)= k^{2 H - 2} L(k) \qquad for \;\; k\geq 1\,,\label{ACVgeneralDef}
\eeq
where $H\in(0.5, 1)$ and $L(k)$ is a slowly varying function  at infinity\footnote{$L(x)$ is a slowly varying function if $\lim_{x \to \infty} L(tx)/L(x) = 1$, $\forall t>0$ (see Embrechts {\it et al.} (1997)).  In the definitions above, and for the purposes of this paper, we are considering only positively correlated long-memory processes. Negatively correlated long-memory processes also exist, but the long-memory processes we will consider in the rest of the paper are all positively correlated.}.
\end{definition}
The parameter $H$ is called Hurst exponent or, sometimes, the self-similarity parameter. Under this definition of long memory the autocovariance function does not necessarily decay as a pure power-law. Consider the case $L(k)=\log k$, or any power of the logarithm function.
The autocorrelation function of a long memory process is not integrable on the positive real line and, as a consequence, the process does not have a typical time scale. 
 
%Long memory processes can be characterized by the exponent $\alpha$ describing the asymptotic behavior of the autocorrelation function or equivalently in terms of the Hurst exponent $H$ that, for long memory processes, is $H=1-\alpha/2 \in (0.5, 1)$. Long memory processes can also be characterized in terms of the fractional differencing parameter $d=(1-\alpha)/2\in(0, 0.5)$. In this paper we will use interchangeably $\alpha$, $H$, and $d$ for characterizing the long memory properties of the process.

%Another widespread definition of long memory process is the following.
%\begin{definition}\label{generalDef2}
%A discrete time weakly stationary stochastic process $\{X(t)\}_{t \in \mathbb{N}}$ is long memory if its spectral density $\phi(\omega)$ behaves as 
%
%\beq
%\phi (\omega) = |\omega|^{\alpha-1}  L_{\phi} (\omega)
%\eeq
%where $\alpha\in(0, 1)$ and $L_{\phi}$ is slowly varying at zero.
%\end{definition} 

%The exact relationship between the autocovariance characterization and the spectral density characterization can be found in Beran (2010). Briefly, the two definitions above are equivalent if the slowly varying function $L$ is such that for any $\delta>0$, the functions $g_1(u) = u^{\delta} L(u)$ and $g_2(u) = u^{-\delta} L(u)$ are eventually monotone (as $u \to \infty$), $L_{\phi}(\omega)=L(\omega^{-1})$, and $\phi(\omega)$ is of bounded variation in $(\delta, \pi)$ for any $0<\delta<\pi$.

%\subsection{A special case of slowly varying function: power series expansion}
The class of long memory processes defined above is quite large due to the arbitrariness of the slowly varying function $L(k)$. Some of the results we will present below hold for a more restricted class of long-memory processes characterized by the properties of the slowly varying function in Definition~\ref{generalDef}. 

%\subsection{Preliminary notation and lemmas}
Following Embrechts {\it et al.} (1997) we denote with $\mathcal{R}_0$ the set of slowly varying functions at infinity.
We introduce the following definition
\begin{definition}\label{wellbehavedDef}
We define the set of well behaved slowly varying function as
\beq
\mathcal{L}(K, I, b_i, \beta_i) \equiv \left\{ l \in \mathcal{R}_0:  \exists \, K>0 \; s.t. \;\;  l(k) =  \sum_{i=0}^{I} b_i k^{-\beta_i}  \quad \forall\, k > K\right\}
\eeq
with $I \leq \infty$, $b_0>0$, $b_i\neq0$ $\forall \, i\geq 1$,  $\beta_0=0$, and $\beta_{i} < \beta_{i+1}$ $\forall \, i \geq 0$. Moreover, if $I=\infty$, the parameters $b_i$ and $\beta_i$ are such that  the series $\sum_{i=0}^{\infty} b_i k^{-\beta_i}$  converges absolutely $\forall\, k > K$.
\end{definition}

Note that all the slowly varying functions which are analytic at infinity are well behaved with $I=\infty$. Hereafter, we abbreviate $\mathcal{L}(K, I, b_i, \beta_i)$ as $\mathcal{L}$.
% See Lang 1997, Theorem 6.3.

We are interested here in the discretization of long memory processes  characterized by a Gaussian distribution.  A stationary Gaussian process is completely characterized by the mean (hereafter assumed to be zero), the variance $D$ and the autocorrelation function $\rho(k)=\gamma(k)/D$. Two classes of stationary Gaussian long memory processes are often considered in the literature. The first one is the fractional Gaussian noise (fGn) (see Mandelbrot and van Ness (1968)), characterized by the autocorrelation function
\begin{equation}
\rho(k) =\frac{1}{2} [(k+1)^{2H}+(k-1)^{2H} -2k^{2H}].
\label{eq:fgnACF}
\end{equation}
For large $k$ the asymptotic expansion of (\ref{eq:fgnACF}) is %$\rho(k)=\frac{H(2H-1) }{k^{2-2H}}\left(1 +\frac{(2H-2)(2H-3)}{12}\frac{1}{k^{2}}+~\ldots\right)$.
\begin{equation}
\rho(k)=\frac{H(2H-1) }{k^{2-2H}}\left(1 +\frac{(2H-2)(2H-3)}{12}\frac{1}{k^{2}}+~\ldots\right).\nonumber
%\label{fgnasymp}
\end{equation}

The second important example is the  fARIMA$(0,d,0)$ process\footnote{In the following we consider only fARIMA$(0,d,0)$. Therefore, when in the following we refer to fARIMA, we mean fARIMA$(0,d,0)$. For results on more general fARIMA see Section \ref{discussionH}.}, where $d=H-0.5$, whose autocorrelation is
\begin{equation}\label{eq:farimaACF}
\rho(k)=\frac{\Gamma(3/2-H) \Gamma(H+k-1/2)}{\Gamma(H-1/2) \Gamma(k+3/2-H)},
\end{equation}
where $\Gamma(\cdot)$ is the gamma function. The asymptotic expansion of (\ref{eq:farimaACF}) is %$\rho(k) = \frac{\Gamma(3/2-H)}{\Gamma(H-1/2)}\frac{1}{k^{2-2H}} \left(1-\frac{(4H^3-12H^2+11H-3)}{12}\frac{1}{k^2}+~\ldots\right)$.
\begin{equation}
\rho(k) = \frac{\Gamma(3/2-H)}{\Gamma(H-1/2)}\frac{1}{k^{2-2H}} \left(1-\frac{(4H^3-12H^2+11H-3)}{12}\frac{1}{k^2}+\ldots\right) \nonumber.
\end{equation}

Note that both for the fGn and for the fARIMA process the slowly varying function $L(k)$ is analytic at infinity and therefore $L \in \mathcal{L}$ with $I=\infty$. In the following we will present results for the discretization of generic stationary Gaussian long memory processes, and we will consider the fGn or the fARIMA as special cases.

\section{The discretized process}
\label{nullmodel}

Given a discrete-time real-valued process  $\{X(t)\}_{t \in \mathbb{N}}$ and  a grid of points $ j\delta$ with $j \in \mathbb{Z}$ and $\delta >0$, the discretized process at time $t$ is $X_d(t)=[X(t)/\delta] \delta$, where $[z]$ is the integer part of $z$. The parameter $\delta$ sets the level of round-off error. This type of discretization appears, for example, whenever a weakly stationary process is discretized through a binning procedure. 

The probability mass function of the discretized process is
\begin{equation}
p_d(x)=\sum_{n=-\infty}^{\infty} q_n \delta_D(x-n\delta)\,, \qquad \text{where} \qquad q_n=\int_{(n-1/2)\delta}^{(n+1/2)\delta}p(x)dx\label{eq:pmf}
\end{equation}
%
%where
%\begin{equation}
%q_n=\int_{\delta(n-1/2)}^{\delta(n+1/2)}p(x)dx
%\end{equation}
$p(x)$ is the probability density function of $X_t$ and $\delta_D(x)$ is the Dirac delta function.

It is useful to introduce the adimensional scaling variable
\begin{equation}
\chi=\frac{D}{\delta^2}
\end{equation}
Since $X(t)$ is Gaussian distributed, the variance $D_d$ of the discretized process can be calculated explicitly. For detailed analytical results on the distributional properties of $X_d(t)$ see Appendix \ref{App:DistrProp}. Left panel of Figure \ref{varFigureIncr} shows the ratio $D_d/D$ as a function of the scaling parameter $\chi$. It is worth noting that  this ratio is not monotonic. For small $\chi$ the ratio goes to zero because for $\delta \gg D$ essentially all the probability mass falls in the bin centered at zero. Finally, the parameter $\chi$ sets the fraction $q_0$ of points which in the discretized process have value zero.  It is direct to show that, if $X(t)$ is Gaussian, $q_0=erf\left[1/2\sqrt{2\chi}\right]$, where $erf[\cdot]$ is the error function. This is clearly a monotonically decreasing function. In the numerical examples below we will use $\chi=0.1$, $0.25$, and $0.5$ corresponding to $q_0=0.886$, $0.683$, and $0.521$, respectively. 

A different type of discretization that we will consider below is obtained by taking the sign of $X_t$. 
Assuming that the distribution function of $X_t$ is absolutely continuous with respect to the Lebesgue measure in a neighborhood of $0$,  so
that the event $X(t)=0$ has zero probability, this discretization leads
to $X_s(t)=sign(X(t))=\pm 1$ with probability 1. %Trivially, when the discretization is obtained by taking the sign function the variance of the discretized process is $D_s=1$. 
We discuss the sign transformation in more detail in Appendix~\ref{App:Sign}.

%Finally, it is useful to introduce the adimensional scaling variable
%\begin{equation}
%\chi=\frac{D}{\delta^2}
%\end{equation}

%%%%%%% FIGURE VARIANCE %%%%%%%

\begin{figure}[ht]
\begin{center}
\includegraphics[scale=0.28]{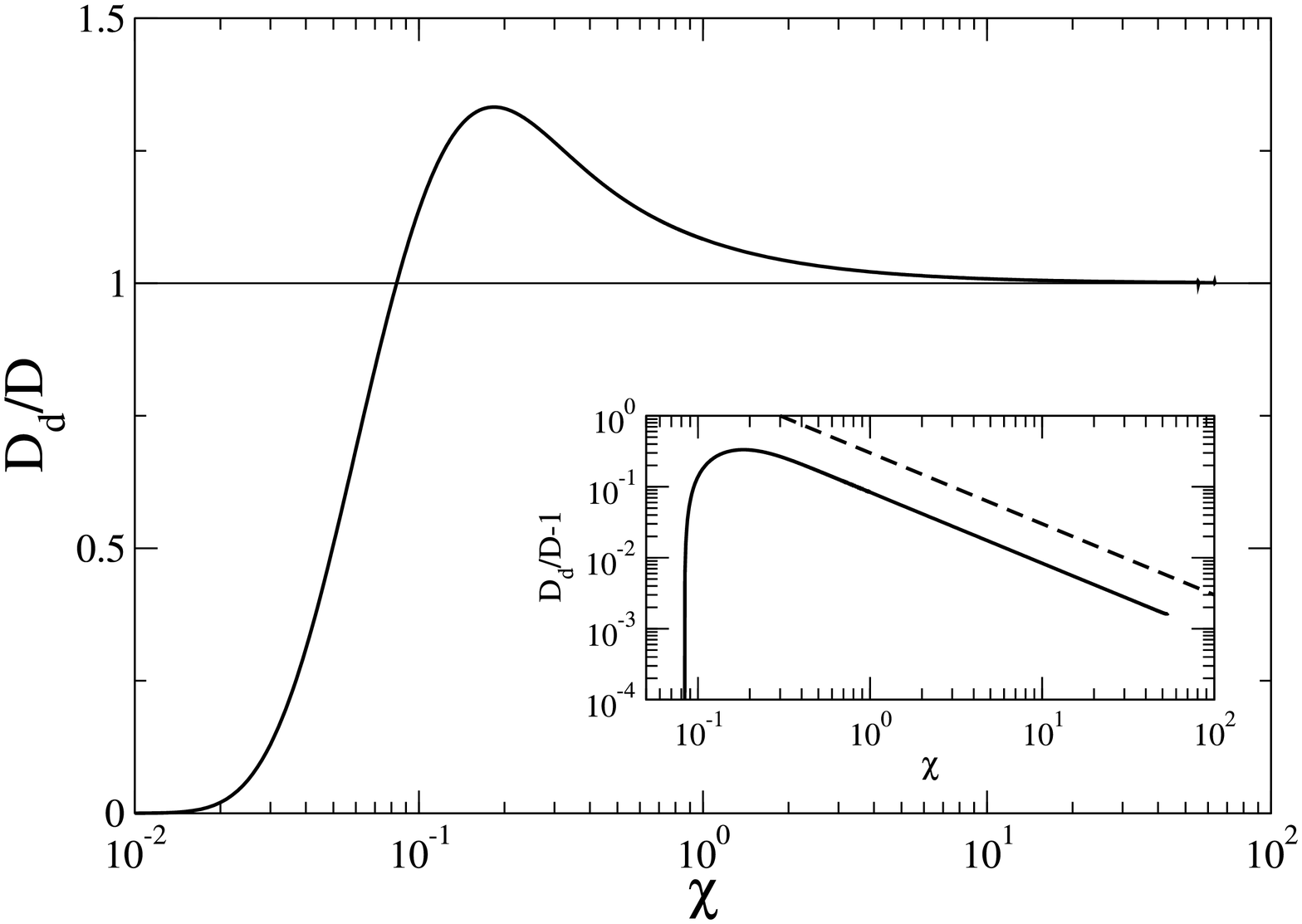}
\includegraphics[scale=0.28]{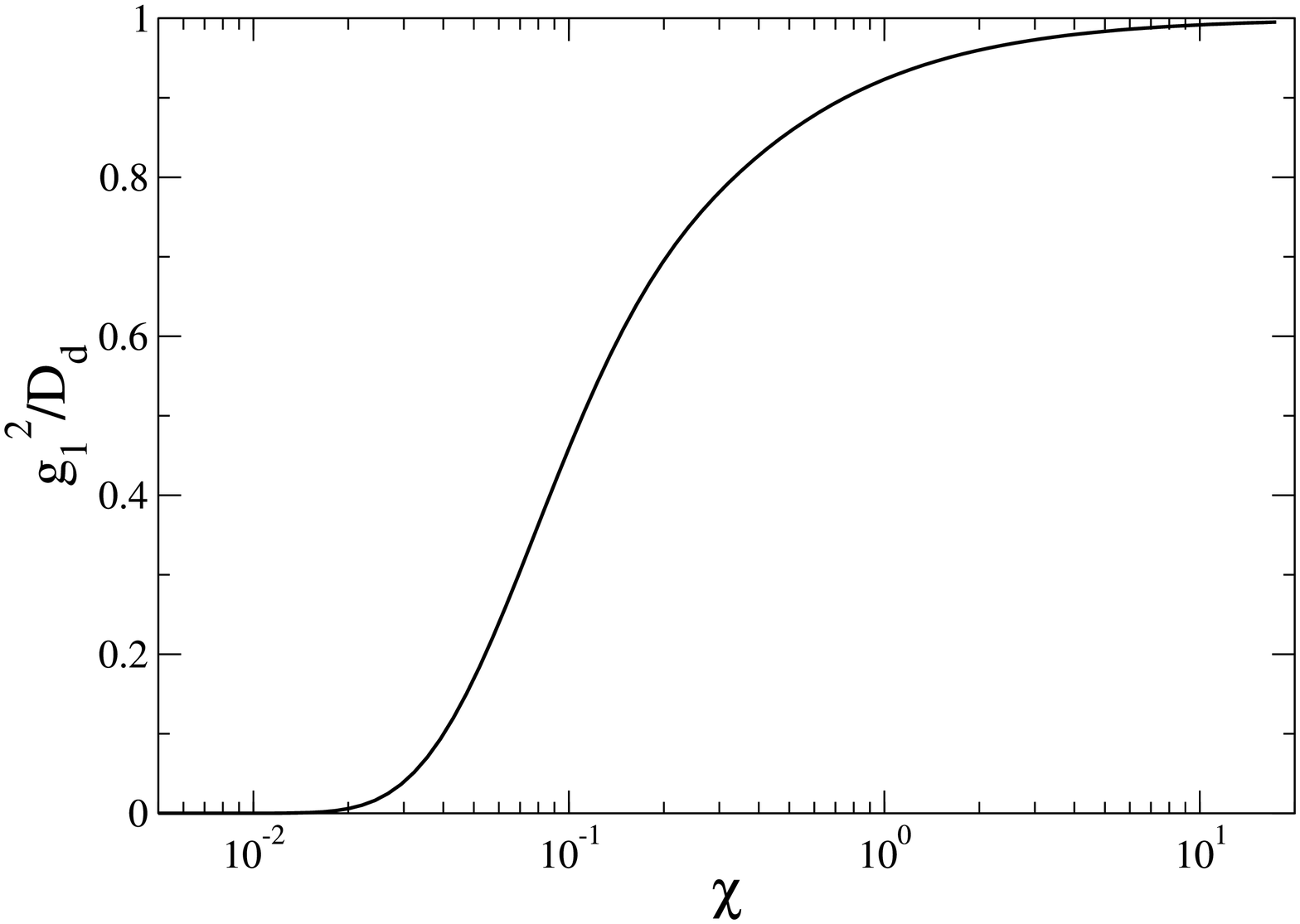}
\caption{Left panel. Ratio between the variance $D_d$ of the discretized process and the variance $D$ of the original Gaussian process as a function of the scaling parameter $\chi$. The inset shows the fractional change $D_d/D-1$ as a function of $\chi$ in a log-log scale. The dashed line shows the $\chi^{-1}$ behavior. Right panel. Ratio between the squared coefficient $g_1^2$ (see formula (\ref{gCoeff})) and the variance of the discretized process as a function of the scaling parameter $\chi$. This ratio is equal to the ratio between the autocorrelation function $\rho_d(k)$ of the discretized process and the autocorrelation function $\rho(k)$ for large values of lags $k$ (see Corollary~\ref{CorACF}).}
\label{varFigureIncr}
\end{center}
\end{figure}

%%%%%%%%%%%%%%%%%

\subsection{Autocovariance and autocorrelation function}\label{SubSec:ACV}

In order to study how the correlation properties change when a stationary Gaussian process is discretized we will make use of a general theory presented, for example, in Beran (1994) and Dittmann and Granger (2002). For the benefit of the reader we recapitulate here this approach. 

Without loss of generality, we consider the case of an underlying stationary Gaussian process of unit variance. %In order to extend the theory to non-unit variance we need simply to do the transformation $X\to X/\sqrt{D}$. 
The starting point is a series expansion of the bivariate Gaussian density function in Hermite polynomials. Hermite polynomials are an orthonormal polynomial system with Gaussian weight. Specifically, we use the same normalization of Hermite polynomials as in Dittmann and Granger (2002), i.e.,
\begin{equation}
\int_{-\infty}^{\infty} H_n(x) H_m(x) \frac{e^{-x^2/2}}{\sqrt{2\pi}} dx=\delta_{nm}
\end{equation}
where $\delta_{nm}$ is the Kronecker delta.
%\footnote{Note that this normalization for the Hermite polynomials is slightly different from the one used in Physics. Of course our results do not depend on the normalization.}.
The expansion of the bivariate Gaussian density function $P(x,y)$ in Hermite polynomials is (see Barrett and Lampard (1955))
\begin{equation}
P(x,y)=P(x)P(y)\left[1+\sum_{j=1}^{\infty} \rho^j H_j(x) H_j(y)\right]
\end{equation} 
where $P(x)$ is the univariate Gaussian density function and $\rho$ is the correlation coefficient between variables $x$ and $y$.

Following Lemma 1 of Dittmann and Granger (2002), if we transform two Gaussian random variables $X$ and $Y$ with a nonlinear transformation $g(\cdot)$ that can be decomposed in Hermite polynomials
\begin{equation}
g(x)=g_0+\sum_{j=1}^\infty g_j H_j(x)
\label{Hermite}
\end{equation}
the linear covariance and the linear correlation of the transformed variables are
\begin{equation}
Cov[g(X)g(Y)]=\sum_{j=1}^\infty g_j^2 \rho^j  ~~~~~~~~~~~~~Cor[g(X)g(Y)]=\frac{\sum_{j=1}^\infty g_j^2 \rho^j}{\sum_{j=1}^\infty g_j^2}
\label{hexpansion}
\end{equation}
The proof is straightforward. The coefficients $g_j$ in (\ref{Hermite}) are
\begin{equation}\label{gCoeff}
g_j=\int_{-\infty}^{\infty} g(x) H_j(x) \frac{e^{-x^2/2}}{\sqrt{2\pi}} dx
\end{equation}
The smallest $j>0$ such that $g_j$ is non-vanishing is called the Hermite rank of the function $g(x)$. Note that the second equation in  (\ref{hexpansion}) Êimplies that any non-linear transformation of a bivariate Gaussian distribution decreases the correlation between the variables (the covariance can of course increase or decrease). When $X$ and $Y$ describes the same process at two different times the above equations can be used to compute the autocovariance and autocorrelation properties of the transformed process.

Dittmann and Granger (2002) used the above expansion to study  how the long memory properties change as a result of nonlinear transformations that can be written as finite Hermite expansions. (See Proposition 1 of Dittmann and Granger (2002).) They mention that this approach  cannot be used if the transformation has an infinite Hermite expansion.
%\footnote{In the proof of Proposition 1(a) Dittmann and Granger claim that ``Proposition 1(a) need not hold if the transformation has an infinite Hermite expansion''.}. 
As we will show below, the discretization can be expressed as an infinite sum of Hermite polynomials and therefore we cannot use directly their Proposition 1. In the following we compute explicitly the asymptotic behavior of the autocovariance function and of the spectral density of the discretized process in order to directly obtain its long memory properties.

%\subsubsection{Discretized process}

By using the theory outlined above we compute the asymptotic behavior of the autocovariance $\gamma_d(k)$ and the autocorrelation $\rho_d(k)$ of the discretized process.

%%%%%% PROPOSITION  AUTOCOVARIANCE %%%%%%%%

%
\begin{proposition}\label{Prop:ACVGen}
Let $\{X(t)\}_{t \in \mathbb{N}}$  be a stationary Gaussian process with autocovariance function given by Definition \ref{generalDef}. Then the autocovariance function of the discretized process $\{X_d(t)\}_{t \in \mathbb{N}}$ satisfies
\beq
\gamma_d(k)= \left(\frac{ \vartheta_2(0,e^{-1/2\chi}) }{\sqrt{2 \pi \chi}}\right)^2 k^{2H-2} L(k) \left(1 + O\left(k^{4H-4} L^2(k)\right)\right)\qquad as \;\; k\rightarrow \infty\,,\label{eq:ACVGen}
\eeq
where $\vartheta_a(u,q)$ is the elliptic theta function.
\end{proposition}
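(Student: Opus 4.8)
\quad The plan is to realise $X_d(t)$ as a fixed nonlinear transformation of the underlying Gaussian, expand that transformation in Hermite polynomials, and read off the asymptotics of $\gamma_d$ from formula~(\ref{hexpansion}). First I would normalise to unit variance: writing $X(t)=\sqrt{D}\,Z(t)$ with $\{Z(t)\}$ standard Gaussian of autocorrelation $\rho(k)=\gamma(k)/D$, the symmetric rounding map $x\mapsto\delta\,\mathrm{round}(x/\delta)$ defined by~(\ref{eq:pmf}) becomes, after factoring out $\sqrt{D}$, the map $g(z)=\delta'\,\mathrm{round}(z/\delta')$ with $\delta'=1/\sqrt{\chi}$, so that $X_d(t)=\sqrt{D}\,g(Z(t))$. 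Since $|g(z)-z|\le\delta'/2$, the variable $g(Z)$ lies in $L^2$, its Hermite expansion $g=\sum_{j\ge0}g_jH_j$ converges in $L^2$, and $\sum_{j\ge1}g_j^2=\mathrm{Var}[g(Z)]<\infty$ by Parseval. Moreover $g$ is odd outside a Lebesgue-null set, so $g_0=0$ and $g_j=0$ for every even $j$; in particular $g$ has Hermite rank one.

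Applying~(\ref{hexpansion}) to the pair $(Z(0),Z(k))$ gives
\[
\gamma_d(k)=D\sum_{j\ge1}g_j^2\,\rho(k)^j=D\,g_1^2\,\rho(k)+D\!\!\sum_{\substack{j\ge3\\ j\ \mathrm{odd}}}\!\! g_j^2\,\rho(k)^j .
\]
Since the process is positively correlated, $\rho(k)=k^{2H-2}L(k)/D\to0^{+}$, so for $k$ large enough $0<\rho(k)<1$ and the remainder is controlled by $\big|\sum_{j\ge3}g_j^2\rho(k)^j\big|\le\rho(k)^3\sum_{j\ge3}g_j^2\le\rho(k)^3\,\mathrm{Var}[g(Z)]$. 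Dividing by the leading term, $\gamma_d(k)=D\,g_1^2\,\rho(k)\big(1+O(\rho(k)^2)\big)$; and since $D\rho(k)=k^{2H-2}L(k)$ and $\rho(k)^2=O(k^{4H-4}L^2(k))$, this is exactly the form~(\ref{eq:ACVGen}) with scaling factor $g_1^2$. Only $\rho(k)\to0$ is used here, matching the paper's remark that this rescaling holds for any weakly stationary Gaussian process.

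It remains to evaluate $g_1=\int g(z)\,H_1(z)\,\tfrac{e^{-z^2/2}}{\sqrt{2\pi}}\,dz=E[g(Z)Z]$, where $H_1(z)=z$ in the present normalisation. Letting $\phi$ denote the standard Gaussian density, one splits the integral over the bins $\big[(n-\tfrac12)\delta',(n+\tfrac12)\delta'\big)$, applies $\int z\,\phi(z)\,dz=-\phi(z)$, and telescopes the resulting sum — equivalently, integrates by parts against the distributional derivative $g'(z)=\delta'\sum_{n}\delta_D\!\big(z-(n-\tfrac12)\delta'\big)$ — to obtain
\[
g_1=\delta'\sum_{n\in\mathbb{Z}}\phi\!\big((n-\tfrac12)\delta'\big)
   =\frac{\delta'}{\sqrt{2\pi}}\sum_{n\in\mathbb{Z}}e^{-(n-1/2)^2\delta'^2/2}
   =\frac{1}{\sqrt{2\pi\chi}}\,\vartheta_2\!\big(0,e^{-1/2\chi}\big),
\]
using $\delta'^2=1/\chi$ and $\vartheta_2(0,q)=2\sum_{n\ge0}q^{(n+1/2)^2}$; squaring yields the stated prefactor. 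I expect this evaluation to be the delicate step: $g$ is only piecewise constant, so the passage producing the theta function must be justified either by a careful summation-by-parts on the absolutely convergent bin integrals or by distribution theory, with the Gaussian tail supplying the dominated convergence needed to interchange sum and integral; one also has to check that $\sum_{j\ge3}g_j^2$ is finite and dominates the remainder, which requires $\rho(k)<1$. The rest of the argument is bookkeeping.
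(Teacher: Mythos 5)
Your proposal is correct and follows essentially the same route as the paper: expand the rounding map in Hermite polynomials via (\ref{hexpansion}), use oddness to get Hermite rank one, evaluate $g_1$ as the elliptic theta function, and absorb the higher-order terms into a relative error $O(\rho(k)^2)$. Your explicit bound $\bigl|\sum_{j\ge 3}g_j^2\rho^j\bigr|\le \rho^3\sum_j g_j^2$ just makes precise the paper's informal statement that for large $k$ only $g_1$ and $g_3$ matter, so no substantive difference.
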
%
The proof of this and of the other propositions in this section can be found in Appendix \ref{App:Proof1}.  According to Definition \ref{generalDef}, this proposition proves that the discretized process is a long memory process with the same Hurst exponent as the original process. Note that asymptotically the autocovariance of the discretized process is proportional to the autocovariance of the original process.

Moreover, as for the original process, the autocovariance function does not necessarily decay as a pure power-law. The second order corrections are nested inside the slowly varying function, and it may happen that the first order term in  $O(k^{4H-4}L^{2}(k))$ dominates the second order term in $k^{2H-2}L(k)$. 
%This might be the case, for example, if  $L(k)$ converges to a constant. 
To compute explicitly the second-order corrections we need to specify a functional form for the slowly varying function $L(k)$. To this end we consider long memory processes whose slowly varying function in the autocovariance is well behaved according to Definition \ref{wellbehavedDef}.

\begin{proposition}\label{Prop:ACV}
Let $\{X(t)\}_{t \in \mathbb{N}}$  be a stationary Gaussian process with autocovariance function given by Definition \ref{generalDef} and  $L \in \mathcal{L}$. Then, the autocovariance function of the discretized process $\{X_d(t)\}_{t \in \mathbb{N}}$ satisfies
\beq
\gamma_d(k)= \left(\frac{ \vartheta_2(0,e^{-1/2\chi}) }{\sqrt{2 \pi \chi}}\right)^2 b_0 \, k^{2H-2}  \left(1 + O\left(k^{-\min(4-4H, \beta_1)} \right)\right)\qquad as \;\; k\rightarrow \infty,
\eeq
where $\vartheta_a(u,q)$ is the elliptic theta function.
\end{proposition}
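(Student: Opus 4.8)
\emph{Proof proposal.} The plan is to obtain Proposition~\ref{Prop:ACV} directly from Proposition~\ref{Prop:ACVGen} by substituting the explicit well-behaved form of $L$ and then bookkeeping the resulting error terms; no new probabilistic input is needed. The first step would be to record the elementary fact that if $L\in\mathcal{L}$ then
\[
L(k)=b_0\bigl(1+O(k^{-\beta_1})\bigr)\qquad\text{as }k\to\infty .
\]
When $I<\infty$ this is immediate, since for $k>K$ one has $L(k)-b_0=\sum_{i=1}^{I}b_i k^{-\beta_i}$ and each term with $i\geq 1$ is $O(k^{-\beta_1})$ because $\beta_i\geq\beta_1>0$. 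When $I=\infty$ one has to control the whole tail uniformly in $i$: fixing any $k_0>K$, absolute convergence of $\sum_i b_i k_0^{-\beta_i}$ together with the bound $|b_i|k^{-\beta_i}\leq |b_i|k_0^{-\beta_i}(k_0/k)^{\beta_1}$ for $k\geq k_0$ and $i\geq 1$ (using $\beta_i\geq\beta_1$ and $k_0/k\leq 1$) gives $\sum_{i\geq 1}|b_i|k^{-\beta_i}\leq k^{-\beta_1}\,k_0^{\beta_1}\sum_{i\geq1}|b_i|k_0^{-\beta_i}=O(k^{-\beta_1})$. In particular $L(k)\to b_0$, so $L(k)=O(1)$.

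The second step is to insert this into~(\ref{eq:ACVGen}). The leading factor becomes $k^{2H-2}L(k)=b_0\,k^{2H-2}\bigl(1+O(k^{-\beta_1})\bigr)$, while the relative error term, using $L(k)=O(1)$, is $O\!\left(k^{4H-4}L^2(k)\right)=O(k^{4H-4})=O\!\left(k^{-(4-4H)}\right)$, which tends to $0$ since $H<1$. Multiplying the two correction factors, and noting that $\beta_1>0$ and $4-4H>0$ so that the cross term is of strictly higher order,
\[
\bigl(1+O(k^{-\beta_1})\bigr)\bigl(1+O(k^{-(4-4H)})\bigr)=1+O\!\left(k^{-\min(4-4H,\,\beta_1)}\right).
\]
Collecting everything yields
\[
\gamma_d(k)=\left(\frac{\vartheta_2(0,e^{-1/2\chi})}{\sqrt{2\pi\chi}}\right)^{2} b_0\, k^{2H-2}\bigl(1+O(k^{-\min(4-4H,\,\beta_1)})\bigr),
\]
which is the assertion.

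The only genuine subtlety is the uniform-in-$i$ estimate needed in the $I=\infty$ case, i.e.\ making sure the infinite tail of the expansion of $L$ is still $O(k^{-\beta_1})$ and does not accumulate; this is exactly what the absolute-convergence clause in Definition~\ref{wellbehavedDef} buys us, as shown above. Everything else is routine composition of $O(\cdot)$ bounds together with the bounds $2H-2<0$, $4H-4<0$, $\beta_1>0$. (Alternatively, one could give a self-contained argument by rerunning the Hermite-expansion computation behind Proposition~\ref{Prop:ACVGen} with $L(k)=\sum_i b_i k^{-\beta_i}$ substituted from the outset, tracking the $k^{-\beta_1}$ correction alongside the $k^{4H-4}$ correction that is already present; but reusing Proposition~\ref{Prop:ACVGen} is the cleaner route.)
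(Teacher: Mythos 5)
Your proposal is correct and follows essentially the same route as the paper, which likewise writes $L(k)=\sum_i b_i k^{-\beta_i}=b_0\bigl(1+O(k^{-\beta_1})\bigr)$ and substitutes this into (\ref{eq:ACVGen}); your uniform-in-$i$ tail estimate for the $I=\infty$ case is a detail the paper leaves implicit, and it is handled correctly.
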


The behavior of the autocorrelation function is trivially obtained from the two above propositions. For example, for the more general case we have the following.

%%%%%%%%% COROLLARY AUTOCORRELATION %%%%

\begin{corollary}\label{CorACF}
Under the conditions of Proposition \ref{Prop:ACVGen}, the autocorrelation function $\rho_d(k)$ of the discretized process $\{X_d(t)\}_{t \in \mathbb{N}}$ satisfies, as $k\rightarrow \infty$, 
\beq
\rho_d(k) =\left(\frac{ \vartheta_2(0,e^{-1/2\chi}) }{\sqrt{2 \pi \chi}}\right)^2 \frac{k^{2H-2} L(k)}{D_d} \left(1 + O\left(k^{4H-4} L(k)^2\right)\right) =
\frac{g_1^2}{D_d}\rho(k)\left(1 + O\left(k^{4H-4} L(k)^2\right)\right)
\eeq
\end{corollary}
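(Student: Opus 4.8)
The plan is to obtain Corollary~\ref{CorACF} as an immediate consequence of Proposition~\ref{Prop:ACVGen}: the only operations needed are dividing the autocovariance by the variance $D_d$ and re-expressing the leading constant through the first Hermite coefficient $g_1$. First I would note that, by definition, $\rho_d(k)=\gamma_d(k)/\gamma_d(0)=\gamma_d(k)/D_d$, where $D_d=\mathrm{Var}[X_d(t)]=\sum_{j\ge 1}g_j^2$ is finite (the discretization map lies in $L^2$ of the Gaussian measure, since it grows at most linearly) and strictly positive (the discretized process is nondegenerate for every $\chi>0$). Because $D_d$ is a positive constant, dividing the asymptotic expansion of Proposition~\ref{Prop:ACVGen} by $D_d$ leaves the order of the remainder unchanged and gives directly
\[
\rho_d(k)=\left(\frac{\vartheta_2(0,e^{-1/2\chi})}{\sqrt{2\pi\chi}}\right)^{2}\frac{k^{2H-2}L(k)}{D_d}\,\Bigl(1+O\!\bigl(k^{4H-4}L(k)^2\bigr)\Bigr),\qquad k\to\infty,
\]
which is the first displayed identity in the statement.

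For the second identity I would re-express $k^{2H-2}L(k)$ through the autocorrelation of the underlying process. By Definition~\ref{generalDef}, $\gamma(k)=k^{2H-2}L(k)$ for $k\ge 1$, so $k^{2H-2}L(k)=\gamma(k)=D\,\rho(k)$. On the other hand, the Hermite-expansion representation recalled in Section~\ref{SubSec:ACV}, $\gamma_d(k)=\sum_{j\ge 1}g_j^2\rho(k)^j$, has leading term $g_1^2\rho(k)$ as $k\to\infty$ (the remaining terms are $O(\rho(k)^2)$ relative to it because $\sum_j g_j^2<\infty$ and $|\rho(k)|\to 0$; in fact $g_2=0$, since the discretization becomes an odd function after subtracting $\delta/2$, so the first correction is the $g_3^2\rho(k)^3$ term). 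Comparing this leading term with Proposition~\ref{Prop:ACVGen} identifies the prefactor, $\bigl(\vartheta_2(0,e^{-1/2\chi})/\sqrt{2\pi\chi}\bigr)^2\,k^{2H-2}L(k)=g_1^2\rho(k)$, equivalently $g_1^2=D\,\bigl(\vartheta_2(0,e^{-1/2\chi})/\sqrt{2\pi\chi}\bigr)^2$, consistent with the closed-form evaluation of $g_1=\mathrm{E}[g(X)\,H_1(X)]$ carried out in the proof of Proposition~\ref{Prop:ACVGen}. Substituting $k^{2H-2}L(k)=D\rho(k)$ into the first identity, a factor $D$ cancels and one is left with $\rho_d(k)=(g_1^2/D_d)\,\rho(k)\,\bigl(1+O(k^{4H-4}L(k)^2)\bigr)$, which is the second displayed identity.

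Since all the analytic content is already contained in Proposition~\ref{Prop:ACVGen} and in the explicit computation of $g_1$, there is no genuine obstacle: the corollary is pure bookkeeping. The only point deserving a line of care is the consistency of the two remainder terms — one checks that the tail $\sum_{j\ge 3}g_j^2\rho(k)^j$ of the Hermite series is $O(\rho(k)^3)=O(k^{6H-6}L(k)^3)$, which is $O(k^{4H-4}L(k)^2)$ relative to the leading term, exactly the order already present in Proposition~\ref{Prop:ACVGen}, so the error stated in the corollary is no coarser than the one we start from.
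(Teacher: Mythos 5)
Your proposal is correct and follows essentially the same route as the paper, whose proof of Corollary~\ref{CorACF} is simply to divide the expansion of Proposition~\ref{Prop:ACVGen} by $D_d$ and use $\gamma(k)=D\rho(k)$ together with the closed form of $g_1$ from (\ref{eq:g1}). Your extra checks (finiteness and positivity of $D_d$, the tail of the Hermite series being $O(\rho(k)^3)$, hence within the stated $O(k^{4H-4}L(k)^2)$ relative error) are sound bookkeeping consistent with the paper's argument.
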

%
%%%%%%%%%

For large $k$, $\rho_d(k)/\rho(k) \sim g_1^2/D_d$. Hereafter, the notation $x_k \sim y_k$ means that $x_k/y_k \to 1$ as $k\to \infty$, unless specified otherwise. The right panel of Figure \ref{varFigureIncr} shows $g_1^2/D_d$ as a function of $\chi$. We observe that the more severe is the discretization, the larger is the reduction of the autocorrelation function. 
%We will consider only the first term in the asymptotic expansion, because the subdominant term has always a small effect.

We numerically tested our propositions and the error made by considering only the leading term in the asymptotic expansion. We simulated a fGn with unit variance for which $\lim_{k\to\infty} L(k) =H(2H-1)$. Figure \ref{acv}  shows the sample autocovariance of the discretized process with $\chi=0.1$ for $H=0.7$ and $H=0.85$, and for different sample size, namely $n=2^{10}=1,024$ and $n=2^{14}=16,384$. Before discussing this figure, we remind that the sample autocovariance (and the autocorrelation) is a biased estimator for long memory time series. Hosking (1996) showed that, for a generic long memory process with an autocovariance asymptotically decaying as $\gamma(k)\sim\lambda k^{2H-2}$ with $0.5 < H <1$ and $\lambda > 0$,  the sample autocovariance $\hat\gamma(k)$ has an asymptotic bias
\begin{equation}\label{biasEq}
E[\hat\gamma(k)]-\gamma(k)\sim\frac{-\lambda n^{2H-2}}{H(2H-1)} \qquad as \;\; n\to\infty\,,
\end{equation}
where $n$ is the length of the sample time series. Since Hosking's theorem only requires that the process is long memory, we can apply it also to the discretized time series. Figure \ref{acv} shows a very good agreement between simulations and analytical results with bias correction. 

%{\bf Che fare con la figura \ref{acf} dell'autocorrelazione?}

%%%%%% FIGURE ACV %%%%%%

%%% QUANTIZED %%%
\begin{figure}[t]
\begin{center}
\includegraphics[scale=0.28]{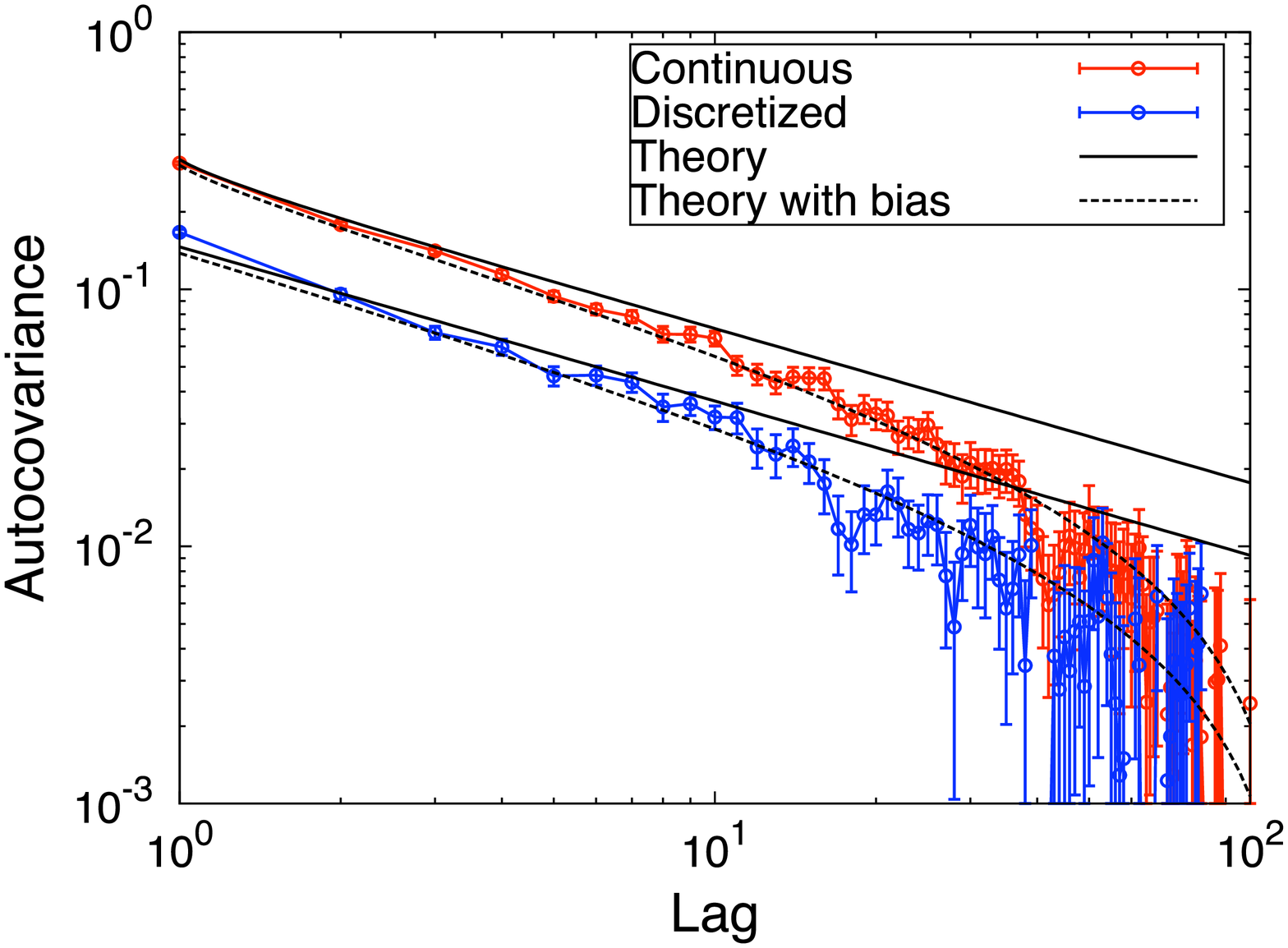}
\includegraphics[scale=0.28]{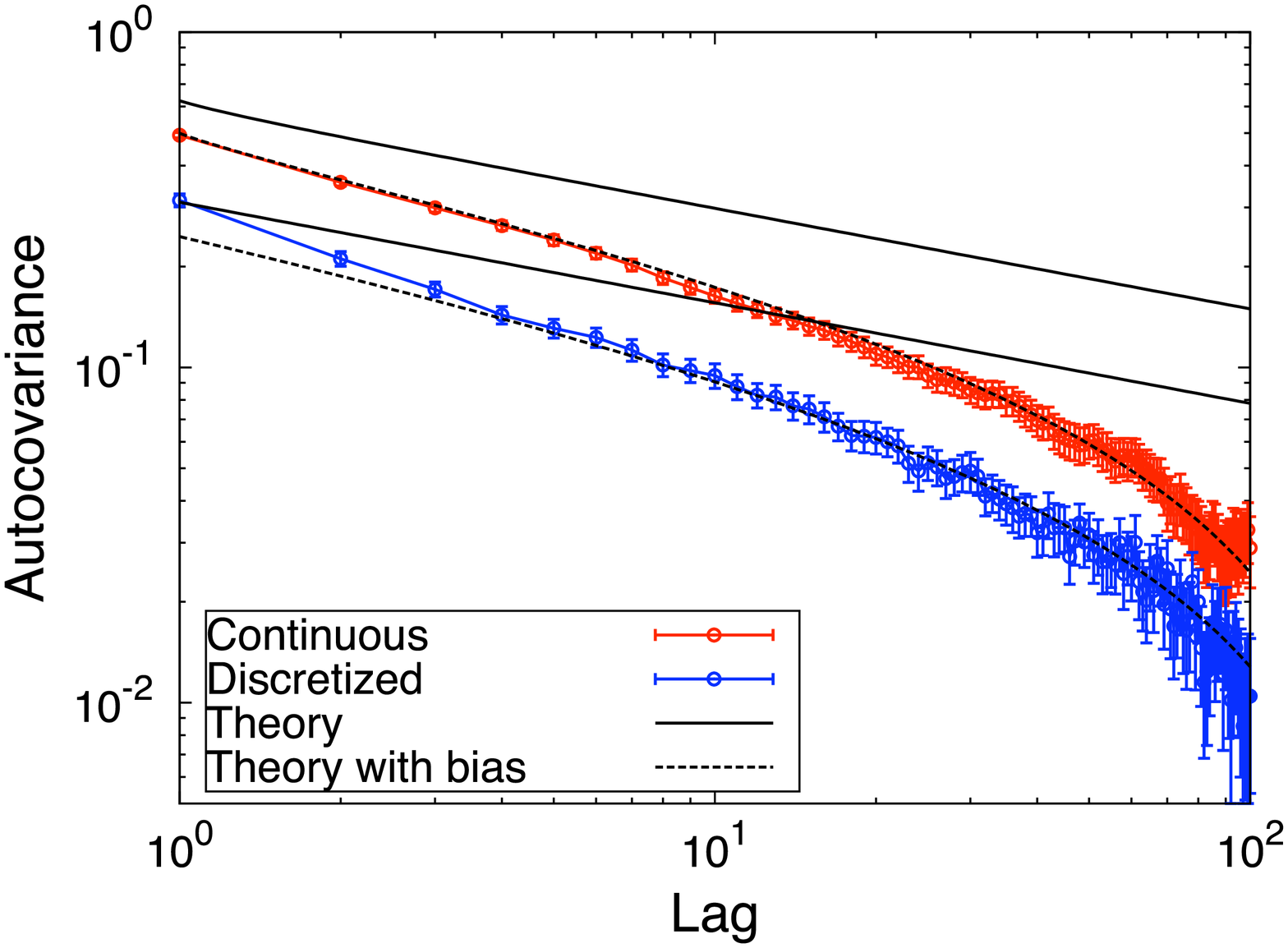}
\includegraphics[scale=0.28]{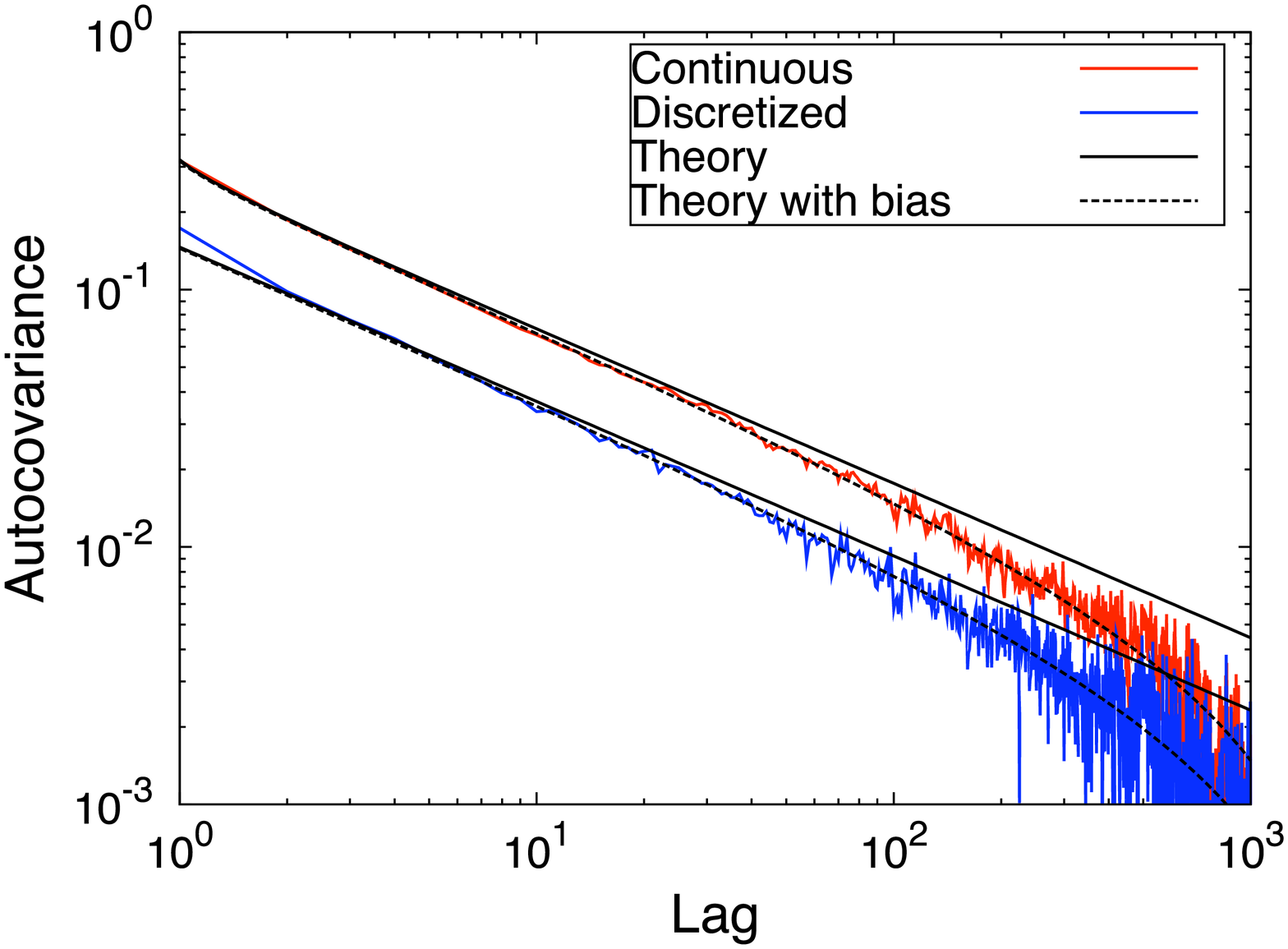}
\includegraphics[scale=0.28]{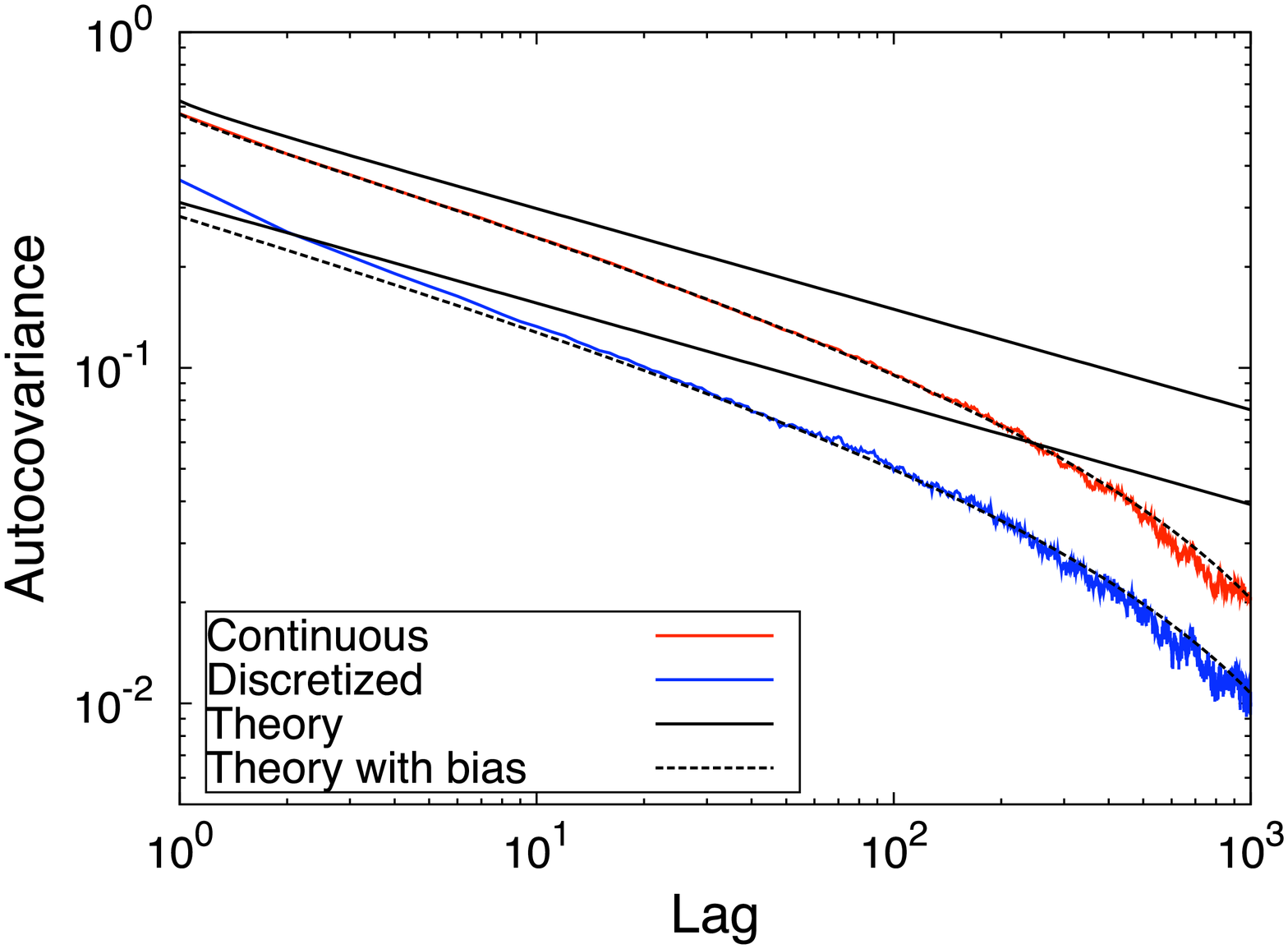}
\caption{Sample autocovariance of a numerical simulation of a fGn and its discretization with a scaling parameter $\chi=0.1$. The time series has length $n=2^{10}$ (top) and $n=2^{14}$ (bottom). The Hurst exponent of the fGn is $H=0.7$ (left) and $H=0.85$ (right). The figure also shows the asymptotic theoretical autocovariance and the autocovariance corrected for the finite sample bias of  (\ref{biasEq}) both for the fGn and for its discretization.}
\label{acv}
\end{center}
\end{figure}

\subsection{Spectral density}

%From the Herglotz's theorem we know that the spectral density  of a weakly stationary discrete-time random process is the Fourier transform  of the autocovariance function. We use this fact to calculate the small frequency behavior of the spectral density of the discretization of a Gaussian long memory process.

%\subsubsection{Discretized process}

%In the case of a discretization with scaling parameter $\chi$ we prove the following proposition:

\begin{proposition}\label{Prop:GenSpecDens}
Let $\{X(t)\}_{t \in \mathbb{N}}$  be a stationary Gaussian process with autocovariance function given by Definition \ref{generalDef}. If $L$ belongs to the Zygmund class\footnote{A positive measurable function $f$ belongs to the Zygmund class if, for every $\alpha>0$, $x^{\alpha} f(x)$ is ultimately incrasing, and $x^{-\alpha}f(x)$ is ultimately decreasing (see Zygmund (1959)).}, %Note that  the Zygmund class is a special case of slowly varying functions (see Palma (2007)).}
 the spectral density $\phi_d(\omega)$ of the discretized process $\{X_d(t)\}_{t \in \mathbb{N}}$ satisfies
\beq
\lim_{\omega\rightarrow 0^+} \phi_d(\omega)/\left[ \left(\frac{ \vartheta_2(0,e^{-1/2\chi}) }{\sqrt{2 \pi \chi}}\right)^2 c_{\phi} |\omega|^{1-2H} L(\omega^{-1})\right]=1
\eeq
where $\vartheta_a(u, q)$ is the elliptic theta function, and $c_{\phi}=\pi^{-1} \Gamma(2H-1) \sin(\pi H)$.
\end{proposition}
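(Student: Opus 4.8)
The plan is to derive the low-frequency behavior of the spectral density $\phi_d(\omega)$ directly from the autocovariance asymptotics already established in Proposition~\ref{Prop:ACVGen}. Recall that a weakly stationary process with summable-enough autocovariance has spectral density $\phi_d(\omega)=\tfrac{1}{2\pi}\sum_{k=-\infty}^{\infty}\gamma_d(k)e^{-ik\omega}$. The key structural fact is that Proposition~\ref{Prop:ACVGen} gives $\gamma_d(k)=\big(\vartheta_2(0,e^{-1/2\chi})/\sqrt{2\pi\chi}\big)^2\,k^{2H-2}L(k)\,(1+O(k^{4H-4}L^2(k)))$, i.e.\ the discretized autocovariance is, up to the explicit multiplicative constant $g_1^2=\big(\vartheta_2(0,e^{-1/2\chi})/\sqrt{2\pi\chi}\big)^2$, asymptotically equal to a regularly varying sequence of index $2H-2$ with the \emph{same} slowly varying factor $L$. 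The first step, then, is to split $\gamma_d(k)$ into its leading regularly varying part $g_1^2\,k^{2H-2}L(k)$ plus a remainder $r(k)=O(k^{4H-4}L^2(k))\cdot k^{2H-2}L(k)$, which is itself regularly varying of index $6H-6<2H-2$ (for $H<1$), hence summable and contributing a term that is $o(|\omega|^{1-2H}L(\omega^{-1}))$ near zero.

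The second step is to invoke a Tauberian/Abelian theorem for Fourier series of regularly varying sequences. The classical result (Zygmund (1959); see also Bingham, Goldie and Teugels, \emph{Regular Variation}) states that if $\gamma(k)\sim \lambda\,k^{-\alpha}L(k)$ with $0<\alpha<1$ and $L$ slowly varying \emph{and in the Zygmund class}, then $\phi(\omega)\sim \tfrac{\lambda}{\pi}\,\Gamma(1-\alpha)\sin(\tfrac{\pi\alpha}{2})\,|\omega|^{\alpha-1}L(|\omega|^{-1})$ as $\omega\to 0^+$. Setting $\alpha=2-2H\in(0,1)$ gives exponent $|\omega|^{1-2H}$ and constant $\Gamma(2H-1)\sin(\pi(1-H))=\Gamma(2H-1)\sin(\pi H)$, which matches $c_\phi=\pi^{-1}\Gamma(2H-1)\sin(\pi H)$ after pulling out the factor $\lambda=g_1^2$. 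The Zygmund-class hypothesis on $L$ is exactly what makes this Tauberian theorem applicable — it is the standard regularity condition guaranteeing that the slowly varying factor passes through the transform unchanged — so this hypothesis in the statement is used precisely here. This is essentially the same argument that links Definition~\ref{generalDef} to the usual spectral-density characterization of long memory for the \emph{original} process; one may even quote that known equivalence and simply track the extra multiplicative constant $g_1^2$.

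The third step is to control the remainder contribution rigorously: write $\phi_d(\omega)=g_1^2\phi^{*}(\omega)+\tfrac{1}{2\pi}\sum_k r(k)e^{-ik\omega}$, where $\phi^{*}$ is the spectral density associated with $k^{2H-2}L(k)$. The first term is handled by the Abelian theorem above. For the second term, since $\sum_k|r(k)|<\infty$, the remainder series is a bounded continuous function of $\omega$, hence $O(1)=o(|\omega|^{1-2H}L(\omega^{-1}))$ as $\omega\to0^+$ because $1-2H<0$ forces $|\omega|^{1-2H}\to\infty$. Dividing by $g_1^2 c_\phi|\omega|^{1-2H}L(\omega^{-1})$ and letting $\omega\to0^+$ yields the claimed limit equal to $1$.

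The main obstacle is the second step: one must verify that the particular slowly varying function $L$ produced inside $\gamma_d(k)$ genuinely satisfies the Zygmund-class condition (it does, by hypothesis, since $L$ is the \emph{same} slowly varying function as in the original process by Proposition~\ref{Prop:ACVGen}), and — more delicately — that the error term in Proposition~\ref{Prop:ACVGen}, being of the form $k^{2H-2}L(k)\cdot O(k^{4H-4}L^2(k))$, is uniformly dominated by a genuinely regularly varying, absolutely summable sequence so that the interchange of summation and the asymptotic estimate is legitimate; this requires a careful use of Potter's bounds on $L$ to absorb the $L^2(k)$ factor into an arbitrarily small power of $k$, ensuring $6H-6+\epsilon<-1$ for suitable $\epsilon$, which holds for all $H<1$. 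Everything else is a direct appeal to standard Abelian–Tauberian theory for regularly varying sequences.
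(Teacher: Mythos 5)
Your overall strategy is the same one the paper uses: pass the asymptotics $\gamma_d(k)\sim g_1^2\,k^{2H-2}L(k)$ from Proposition~\ref{Prop:ACVGen} through an Abelian theorem for trigonometric series with regularly varying, Zygmund-class coefficients (the paper does this in one line by citing Theorem 3.3(a) of Palma (2007)). Your identification of the constant, $\Gamma(1-\alpha)\sin(\pi\alpha/2)$ with $\alpha=2-2H$ giving $\pi c_\phi=\Gamma(2H-1)\sin(\pi H)$, and your explanation of where the Zygmund hypothesis enters are both correct.

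There is, however, a genuine gap in your treatment of the remainder $r(k)=\gamma_d(k)-g_1^2k^{2H-2}L(k)=O\bigl(k^{6H-6}L^3(k)\bigr)$. You assert that $r$ is absolutely summable because ``$6H-6+\epsilon<-1$ for suitable $\epsilon$, which holds for all $H<1$''; that inequality in fact requires $H<5/6$. For $H\in[5/6,1)$ the index $6H-6$ lies in $[-1,0)$, the remainder (whose leading part is $g_3^2\rho^3(k)$) is not summable, and the step ``the remainder series is a bounded continuous function of $\omega$, hence $O(1)$'' fails. The paper's own Proposition~\ref{Prop:SpecDens} and Corollary~\ref{Cor:SpecDensAnalytic} make this explicit: for $H>5/6$ the second-order contribution to $\phi_d$ diverges like $|\omega|^{5-6H}$ (like $\ln|\omega|^{-1}$ at $H=5/6$). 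The conclusion of the proposition still holds, because $5-6H>1-2H$ for $H<1$, so this divergence is strictly slower than that of the leading term and the contribution remains $o\bigl(|\omega|^{1-2H}L(\omega^{-1})\bigr)$; but establishing that requires applying the same regular-variation/Abelian machinery (or the polylogarithm-type estimates used in the proof of Proposition~\ref{Prop:SpecDens}) to the remainder itself, e.g.\ to the dominating sequence $\rho^3(k)$ whose slowly varying part $L^3$ is again of Zygmund class, rather than invoking summability. As written, your argument proves the statement only for $H<5/6$; with that repair it covers the full range $H\in(1/2,1)$ and is essentially equivalent to the paper's proof.
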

%As Proposition~\ref{Prop:ACVGen}, this proposition, in conjunction with Definition \ref{generalDef2},  proves that the discretized process is long memory with the same Hurst exponent as the original process.

Similarly to the case of the autocovariance, second-order corrections to the spectral density are hidden inside the slowly varying function $L(k)$ and the terms we neglected. To compute these corrections explicitly in terms of powers of $|\omega|$ we need to specify a functional form for the slowly varying function. We therefore consider long memory processes whose slowly varying function in the autocovariance is well behaved according to Definition \ref{wellbehavedDef}. Moreover, we introduce the following assumption
%
%\begin{assumption}\label{Ass:CoeffGenDef1}
%In Definition \ref{wellbehavedDef}, either $I<\infty$, or, if $I=\infty$, then $\sup\{\beta_i\}>2H-1$.
%\end{assumption}
%
%
\begin{assumption}\label{Ass:AbsSum}
In Definition \ref{wellbehavedDef}, either $I<\infty$, or, if $I=\infty$, then 
\bi
\item[(i)] $K=1$;
\item[(ii)] $\sup\{\beta_i\}>2H-1$;
\item[(iii)] $\sup_i \left\{\Gamma(2H-1-\beta_i)\right\}<\infty$.
\ei
\end{assumption}
Assumption~\ref{Ass:AbsSum} (i) provides that $L(k)=\sum_{i=0}^{\infty} b_i k^{-\beta_i}$ converges absolutely $\forall \, k\geq1$. Under this assumption the Fourier transform of $k^{2H-2}L(k)$ can be written as the series of Fourier transforms of the terms $k^{2H-2-\beta_i}$. Assumption~\ref{Ass:AbsSum} (ii) implies that there is only a finite number of terms in $L(k)$ that are not summable over $k$. Assumption~\ref{Ass:AbsSum} (iii) ensures that we can rearrange an infinite number of terms in a series expansion of polylogarithms that we use below to represent  Fourier series (see the proof of Proposition~\ref{Prop:SpecDens} and Lemma~\ref{Lemma:SpecDensGaussian}).

Then, we can prove the following
\begin{proposition}\label{Prop:SpecDens}
Let $\{X(t)\}_{t \in \mathbb{N}}$  be a stationary Gaussian process with autocovariance function given by Definition \ref{generalDef} and  $L \in \mathcal{L}$. 
Then, the spectral density $\phi_d(\omega)$ of the discretized process $\{X_d(t)\}_{t \in \mathbb{N}}$ satisfies, as $\omega \to 0^+$,
\beqnr\label{eq:SpecDensRound1stOrd}
\phi_d(\omega) =  \left(\frac{\vartheta_2(0,e^{-1/2\chi}) }{\sqrt{2 \pi \chi}}\right)^2 c_{\phi} b_0 \, |\omega|^{1-2H} + g(\omega) + o\left(g(\omega)\right)
\eeqnr
where $\vartheta_a(u, q)$ is the elliptic theta function, $c_{\phi}=\pi^{-1}\Gamma(2H-1)\sin(\pi H)$, and\footnote{Note that the third equation in (\ref{eq:SpecDensRound2ndOrd_1}) may hold only if $\beta_1<\frac{2}{3}$, so that $\frac{1+\beta_1}{2}<1-\frac{\beta_1}{4}$ and the set for $H$ is non-empty.}
\beqnr\label{eq:SpecDensRound2ndOrd_1}
g(\omega)=
\begin{cases}
c_0 \in \mathbb{R} & if \;\; H<\min\left(\frac{5}{6}, \frac{1 + \beta_1}{2}\right),\\
 g_3^2\left(\frac{b_0}{D}\right)^3 \pi^{-1} \Gamma(6H-5)\sin(3 H \pi)|\omega|^{5-6H} & if  \;\; H > \max \left(\frac{5}{6}, 1-\frac{\beta_1}{4}\right), \\%(i.e.,  $(j, i)=(1, 0)$);  
 \left(\frac{\vartheta_2(0,e^{-1/2\chi}) }{\sqrt{2 \pi \chi}}\right)^2 b_1 \pi^{-1} \Gamma(2H-1-\beta_1)\sin\left(\frac{2H-\beta_1}{2}\pi\right)|\omega|^{1-2H+\beta_1} & if \;\; H\in\left(\frac{1+\beta_1}{2}, 1-\frac{\beta_1}{4}\right), \\ %corresponding to $(j, i)=(0, 1)$,
 \left(\frac{\vartheta_2(0,e^{-1/2\chi}) }{\sqrt{2 \pi \chi}}\right)^2 b_1 \pi^{-1} \ln |\omega|^{-1} & if \;\; H=\frac{1+\beta_1}{2}<\frac{5}{6}, \\
 g_3^2\left(\frac{b_0}{D}\right)^3 \pi^{-1}\ln |\omega|^{-1} & if \;\; H=\frac{5}{6}<\frac{1+\beta_1}{2}, \\
\left (\left(\frac{\vartheta_2(0,e^{-1/2\chi}) }{\sqrt{2 \pi \chi}}\right)^2  b_1+ g_3^2\left(\frac{b_0}{D}\right)^3\right) \pi^{-1} \ln |\omega|^{-1} & if \;\; H=\frac{1+\beta_1}{2}=\frac{5}{6},
\end{cases}
\eeqnr
where $g_3$ is given by (\ref{eq:g3}). 

In addition, under Assumption~\ref{Ass:AbsSum}, $\forall\,j\geq0$ let $\{\widetilde{b}_{j, i}, \widetilde{\beta}_{j, i}\}_{0\leq i\leq I}$ be the real numbers given by the Cauchy product
\beq
\left(\sum_{i=0}^I b_i k^{-\beta_i}\right)^{2j+1} = \sum_{i=0}^{\widetilde{I}_j} \widetilde{b}_{j, i} k^{-\widetilde{\beta}_{j, i}} \qquad \forall \; k\geq 1\nonumber%\quad and \quad \forall \; j\geq 0\,, \nonumber
\eeq
where $\widetilde{I}_j=\binom{2j+I}{I-1}$ if $I<\infty$, and $\widetilde{I}_j=\infty$ $\forall\, j\geq0$ if $I=\infty$. Then, if $H<\min\left(\frac{5}{6}, \frac{1 + \beta_1}{2}\right)$,%the second-order term $O(1)$ in (\ref{eq:SpecDensRound1stOrd}) is
\beq\label{eq:SpecDensRound2ndOrd_2}
g(\omega) = \frac{D_d}{2\pi} + \pi^{-1} \sum_{j=0}^{\infty} \sum_{i=0}^{\widetilde{I}_j} \frac{g_{2j+1}^2}{D^{2j+1}}\, \widetilde{b}_{j, i}\, \zeta\left((2j+1) (2-2H)+\widetilde{\beta}_{j, i}\right)\,,
\eeq
where $D_d$ is the variance of the discretized process, $g_{2j+1}$ are the Hermite coefficients of the discretization, and $\zeta(\cdot)$ is the Riemann zeta function. 
\end{proposition}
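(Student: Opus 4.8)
The plan is to pass through the Hermite expansion of Section~\ref{SubSec:ACV}, reduce $\phi_d$ to a superposition of elementary cosine series $\sum_{k\ge1}k^{-s}\cos(k\omega)$, expand each of these near $\omega=0$, and then read off the leading and second-order terms by comparing exponents; I freely use Propositions~\ref{Prop:ACVGen}--\ref{Prop:GenSpecDens}, Corollary~\ref{CorACF}, and the auxiliary Lemma~\ref{Lemma:SpecDensGaussian}. The rounding map is odd almost everywhere (visible from the symmetric bin structure $q_n=\int_{(n-1/2)\delta}^{(n+1/2)\delta}p(x)\,dx$ in (\ref{eq:pmf})), so its Hermite expansion (\ref{Hermite}) contains only odd-order polynomials: $g_{2j}=0$ for all $j\ge0$. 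Then (\ref{hexpansion}) gives $\gamma_d(k)=\sum_{j\ge0}g_{2j+1}^2\,\rho(k)^{2j+1}$ for $k\ge1$ and $\gamma_d(0)=D_d=\sum_{j\ge0}g_{2j+1}^2$, and substituting into $\phi_d(\omega)=\frac1{2\pi}\sum_{k\in\mathbb Z}\gamma_d(k)e^{-ik\omega}$ and interchanging the $j$- and $k$-sums (legitimate since $|\rho(k)|\le Ck^{2H-2}$ and $\sum_jg_{2j+1}^2<\infty$) yields
\beq
\phi_d(\omega)=\frac{D_d}{2\pi}+\frac1\pi\sum_{j\ge0}g_{2j+1}^2\sum_{k\ge1}\rho(k)^{2j+1}\cos(k\omega).\nonumber
\eeq

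\emph{Expansion of each inner series.} Under Assumption~\ref{Ass:AbsSum}(i), $L(k)=\sum_ib_ik^{-\beta_i}$ for all $k\ge1$, so with $\rho(k)=D^{-1}k^{2H-2}L(k)$ and the Cauchy product of the statement, $\rho(k)^{2j+1}=D^{-(2j+1)}\sum_i\widetilde b_{j,i}\,k^{-s_{j,i}}$ (absolutely convergent) with $s_{j,i}=(2j+1)(2-2H)+\widetilde\beta_{j,i}$. Next I invoke the classical small-$\omega$ expansion of the cosine power series --- the content of Lemma~\ref{Lemma:SpecDensGaussian}, equivalently the real part of the standard expansion of $\mathrm{Li}_s(z)$ as $z\to1$ --- namely, for $s>0$, $s\ne1$,
\beq
\sum_{k\ge1}\frac{\cos(k\omega)}{k^{s}}=\Gamma(1-s)\sin(\pi s/2)\,|\omega|^{s-1}+\zeta(s)+o(1)\qquad(\omega\to0^+),\nonumber
\eeq
and $\sum_{k\ge1}k^{-1}\cos(k\omega)=\ln|\omega|^{-1}+o(1)$ for $s=1$. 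With $s=s_{0,0}=2-2H$, using $g_1^2/D=(\vartheta_2(0,e^{-1/2\chi})/\sqrt{2\pi\chi})^2$ (Corollary~\ref{CorACF}) and the identities $\Gamma(1-(2-2H))=\Gamma(2H-1)$, $\sin(\pi(2-2H)/2)=\sin(\pi H)$, the $(j,i)=(0,0)$ contribution reproduces the leading term $(\vartheta_2/\sqrt{2\pi\chi})^2\,c_\phi\,b_0\,|\omega|^{1-2H}$ of (\ref{eq:SpecDensRound1stOrd}).

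\emph{Identification of $g(\omega)$.} Each remaining $(j,i)$ contributes a constant $\propto\zeta(s_{j,i})$ plus a power $|\omega|^{s_{j,i}-1}$ (a $\ln|\omega|^{-1}$ when $s_{j,i}=1$). The next-most-singular exponent among all non-leading terms equals $\min(1-2H+\beta_1,\,5-6H)$, attained at $(j,i)=(0,1)$ (where $\widetilde b_{0,1}=b_1$, $\widetilde\beta_{0,1}=\beta_1$) and $(j,i)=(1,0)$ (where $\widetilde b_{1,0}=b_0^3$): indeed any $(0,i)$ with $i\ge2$ has exponent $1-2H+\beta_i>1-2H+\beta_1$, any $(1,i)$ with $i\ge1$ has exponent at least $5-6H+\beta_1>5-6H$, and any $j\ge2$ has exponent $s_{j,i}-1\ge(2j+1)(2-2H)-1\ge9-10H>5-6H$ for $H<1$. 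Comparing $5-6H$, $1-2H+\beta_1$ and $0$ --- with a logarithm occurring precisely when $5-6H=0$ or $1-2H+\beta_1=0$ --- and evaluating the coefficients via $\Gamma(1-s)\sin(\pi s/2)$ (for the second-order term $s$ always lies in $(0,1]$, so $\Gamma(1-s)$ is never singular), e.g. $\Gamma(1-(6-6H))\sin(\pi(6-6H)/2)=\Gamma(6H-5)\sin(3\pi H)$ and $\Gamma(1-(2-2H+\beta_1))\sin(\pi(2-2H+\beta_1)/2)=\Gamma(2H-1-\beta_1)\sin(\pi(2H-\beta_1)/2)$, reproduces the six cases for $g(\omega)$. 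When $H<\min(5/6,(1+\beta_1)/2)$ every non-leading power has exponent $s_{j,i}-1>0$ and therefore vanishes, so only the constants $\zeta(s_{j,i})$ survive; gathering these together with $D_d/2\pi$ gives (\ref{eq:SpecDensRound2ndOrd_2}).

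\emph{Main obstacle.} The delicate part --- and the role of the rest of Assumption~\ref{Ass:AbsSum} --- is justifying the term-by-term Fourier summation and the rearrangement of the resulting double series over $(j,i)$ when $I=\infty$. Part (ii), $\sup_i\{\beta_i\}>2H-1$, ensures that only finitely many exponents $s_{j,i}$ lie below $1$, so the genuinely singular contributions form a finite set and can be handled individually; part (iii), boundedness of $\Gamma(2H-1-\beta_i)$, gives the uniform control on the polylogarithm coefficients $\Gamma(1-s_{j,i})$ needed for absolute convergence of the $\zeta$-series in (\ref{eq:SpecDensRound2ndOrd_2}) and of the analogous series feeding the other regimes. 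A further point requiring care: for $j\ge1$ the power $\rho(k)^{2j+1}$ has effective Hurst exponent $(2j+1)H-2j$, which is $<1$ but may exceed $1/2$ when $H$ is close to $1$, so $\rho(k)^{2j+1}$ need not be summable and its Fourier series may retain a singular part $|\omega|^{(2j+1)(2-2H)-1}$; one checks, as in the exponent comparison above, that for $j\ge2$ this is always dominated by the $(j,i)=(1,0)$ contribution $|\omega|^{5-6H}$.
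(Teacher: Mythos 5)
Your route is essentially the paper's: Hermite expansion of the (odd) rounding map, Herglotz's theorem, the polylogarithm expansion of $\sum_{k\geq1}k^{-s}\cos(k\omega)$ near $\omega=0$, and exponent bookkeeping among $0$, $5-6H$, $1-2H+\beta_1$; your coefficient computations (e.g. $\Gamma(6H-5)\sin(3\pi H)$ for the $(j,i)=(1,0)$ term and $\Gamma(2H-1-\beta_1)\sin\left(\frac{2H-\beta_1}{2}\pi\right)$ for $(0,1)$) and the collection of the constants into (\ref{eq:SpecDensRound2ndOrd_2}) agree with the paper's proof. Two caveats, though.

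First, the inline justification you give for swapping the $j$- and $k$-sums (``$|\rho(k)|\le Ck^{2H-2}$ and $\sum_j g_{2j+1}^2<\infty$'') is not sufficient: for a long-memory process $\sum_k|\rho(k)|=\infty$, so the double series is not absolutely convergent and Fubini does not apply directly. The valid justification is the one you only sketch at the end: under Assumption~\ref{Ass:AbsSum}(ii) only finitely many exponents $s_{j,i}$ are $\leq 1$, so the terms that are not summable in $k$ form a finite set that is handled individually, and the remaining triple series is absolutely convergent (this is exactly how the paper invokes Rudin's theorem on interchanging summation). Second, the case formula (\ref{eq:SpecDensRound2ndOrd_1}) is asserted under $L\in\mathcal{L}$ alone, with possibly $K>1$ and without Assumption~\ref{Ass:AbsSum}, whereas your derivation invokes Assumption~\ref{Ass:AbsSum}(i) (i.e.\ $K=1$) and the full Cauchy-product rearrangement from the outset; as written it therefore proves the first claim only under stronger hypotheses than stated. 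To obtain it in the stated generality, argue as the paper does by truncation: keep only the $g_1^2$, $g_3^2$, $g_5^2$ contributions (the Hermite tail is uniformly $O(\rho^5(k))$ because $\sum_j g_{2j+1}^2=D_d<\infty$), expand $L(k)$ for $k>K$ only to the first two or three powers with an explicit $O(\cdot)$ remainder, treat the finitely many lags $k\leq K$ as an analytic $O(1)+O(\omega^2)$ contribution, and then apply the polylogarithm expansion term by term; no interchange of infinite sums is needed for (\ref{eq:SpecDensRound1stOrd})--(\ref{eq:SpecDensRound2ndOrd_1}), and Assumption~\ref{Ass:AbsSum} is reserved, as in the statement, for the exact constant (\ref{eq:SpecDensRound2ndOrd_2}). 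With these repairs your argument coincides with the paper's proof.
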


Note that, if we just assume $L \in \mathcal{L}$, we are able to compute exactly only the second-order terms $O\left(|\omega|^{5-6H}\right)$, $O\left(|\omega|^{1-2H+\beta_1}\right)$ and $O\left(\ln |\omega|^{-1}\right)$. However, under Assumption~\ref{Ass:AbsSum} we can calculate exactly also the term $O\left(1\right)$ in (\ref{eq:SpecDensRound2ndOrd_1}), and therefore we have the exact second-order correction for all values of $H$.

As we have mentioned above, an important case is given by long memory processes whose slowly varying function is analytic at infinity. This class includes the fGn and the fARIMA process. If the slowly varying function is analytic at infinity, then $\beta_1 \in \mathbb{N}$ and therefore we have the following

\begin{corollary}\label{Cor:SpecDensAnalytic}
If $L\in \mathcal{R}_0$ and $L$ is analytic at infinity, then, as $\omega \to 0^+$,
\beq\label{eq:SpecDensAnalytic}
\phi_d(\omega) =  \left(\frac{\vartheta_2(0,e^{-1/2\chi}) }{\sqrt{2 \pi \chi}}\right)^2 c_{\phi} b_0 \, |\omega|^{1-2H} + 
\begin{cases}
c_0 + o\left(1\right)& \; for \;\; H\in (1/2,5/6)\\
c_1 \ln |\omega|^{-1} + o \left(\ln |\omega|^{-1}\right) & \; for \;\; H = 5/6\\
%O\left(\ln |\omega|^{-1}\right) & \; for \;\; H = 5/6\\
c_2 |\omega|^{5-6H} + o\left(|\omega|^{5-6H}\right) & \; for \;\; H\in (5/6, 1)
\end{cases}
\eeq
where $c_1>0$ and $c_2>0$ are given by the fifth and the second equation of (\ref{eq:SpecDensRound2ndOrd_1}), respectively. %and $c_2>0$ is given by the second equation of (\ref{eq:SpecDensRound2ndOrd_1}).

In addition, under Assumption~\ref{Ass:AbsSum}, $c_0$ is given by (\ref{eq:SpecDensRound2ndOrd_2}).
\end{corollary}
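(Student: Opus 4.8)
The plan is to derive the Corollary from Proposition~\ref{Prop:SpecDens} by a simple case reduction; the only real content is to identify which of the six branches in~(\ref{eq:SpecDensRound2ndOrd_1}) survive when $L$ is in addition analytic at infinity. First I would recall, as noted just before the statement, that analyticity of $L$ at infinity means $L(k)=\sum_{i\ge0}b_ik^{-\beta_i}$ with all exponents $\beta_i$ non-negative integers; since $\beta_0=0$ and the sequence is strictly increasing, the first correction exponent $\beta_1$ is a positive integer, in particular $\beta_1\ge1$. Thus $L\in\mathcal L$ and Proposition~\ref{Prop:SpecDens} is applicable.

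Next I would run the case analysis. From $\beta_1\ge1$ one gets $\tfrac{1+\beta_1}{2}\ge1$, hence $\tfrac{1+\beta_1}{2}>H$ for every admissible $H\in(1/2,1)$; consequently the branches of~(\ref{eq:SpecDensRound2ndOrd_1}) that require $H=\tfrac{1+\beta_1}{2}$ (the fourth and sixth) are vacuous, and so is the third, whose interval $\big(\tfrac{1+\beta_1}{2},\,1-\tfrac{\beta_1}{4}\big)$ is empty because $\tfrac{1+\beta_1}{2}\ge1>1-\tfrac{\beta_1}{4}$. Moreover $\min\!\big(\tfrac56,\tfrac{1+\beta_1}{2}\big)=\tfrac56$ and, since $1-\tfrac{\beta_1}{4}\le\tfrac34<\tfrac56$, also $\max\!\big(\tfrac56,1-\tfrac{\beta_1}{4}\big)=\tfrac56$. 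Hence exactly three branches are left, corresponding to $H<\tfrac56$, $H=\tfrac56$, $H>\tfrac56$; they give, respectively, $g(\omega)=c_0\in\mathbb R$, $g(\omega)=g_3^2(b_0/D)^3\pi^{-1}\ln|\omega|^{-1}$, and $g(\omega)=g_3^2(b_0/D)^3\pi^{-1}\Gamma(6H-5)\sin(3H\pi)|\omega|^{5-6H}$. Substituting these into~(\ref{eq:SpecDensRound1stOrd}), and noting that $o(g(\omega))$ becomes $o(1)$, $o(\ln|\omega|^{-1})$, $o(|\omega|^{5-6H})$ in the three regimes, produces~(\ref{eq:SpecDensAnalytic}) with $c_1=g_3^2(b_0/D)^3\pi^{-1}$ (the fifth line of~(\ref{eq:SpecDensRound2ndOrd_1})) and $c_2=g_3^2(b_0/D)^3\pi^{-1}\Gamma(6H-5)\sin(3H\pi)$ (the second line). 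The final assertion is then immediate: under Assumption~\ref{Ass:AbsSum} the regime $H\in(1/2,5/6)$ coincides with the regime $H<\min(\tfrac56,\tfrac{1+\beta_1}{2})$ of the second part of Proposition~\ref{Prop:SpecDens}, so $g(\omega)=c_0$ is given by~(\ref{eq:SpecDensRound2ndOrd_2}).

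The one place that goes beyond transcription is the strict positivity of $c_1$ and $c_2$. Since $b_0>0$ and $D>0$, and $g_3\ne0$ by its explicit value~(\ref{eq:g3}), the common prefactor $g_3^2(b_0/D)^3\pi^{-1}$ is strictly positive, which gives $c_1>0$ at once. For $c_2$ it remains to check $\Gamma(6H-5)\sin(3H\pi)>0$ on $H\in(\tfrac56,1)$: there $6H-5\in(0,1)$, so $\Gamma(6H-5)>0$, and $3H\pi=2\pi+(3H-2)\pi$ with $(3H-2)\pi\in(\tfrac\pi2,\pi)$, so $\sin(3H\pi)>0$. I expect this elementary sign verification (together with recalling that $g_3\ne0$) to be the only genuine step; everything else is a direct specialization of Proposition~\ref{Prop:SpecDens}, with no new analytic estimate required --- which is exactly why it is stated as a corollary.
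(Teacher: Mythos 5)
Your proposal is correct and follows essentially the same route as the paper: analyticity at infinity gives $L\in\mathcal{L}$ with $\beta_1\in\mathbb{N}$ (so $\beta_1\ge 1$), after which Proposition~\ref{Prop:SpecDens} reduces to the three surviving branches $H<5/6$, $H=5/6$, $H>5/6$, and positivity of $c_1,c_2$ follows from $g_3\neq 0$, $b_0>0$ and $\Gamma(6H-5)\sin(3\pi H)>0$ on $(5/6,1)$. The paper's proof is just a terser version of this (it dismisses the sign check as ``straightforward''), so your explicit case elimination and sign verification only spell out what the paper leaves implicit.
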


The sign of (\ref{eq:SpecDensRound2ndOrd_2}) depends on the specific autocovariance function (or, equivalently, on the specific spectral density) of the underlying Gaussian process. For example, the fGn satisfies Assumption~\ref{Ass:AbsSum} and in this case we can prove the following 

\begin{corollary}\label{Cor:SpecDensRoundfGn}
The spectral density of the discretization of a fGn satisfies (\ref{eq:SpecDensAnalytic}) and the second-order term is strictly positive for all $H \in (0.5, 1)$.
\end{corollary}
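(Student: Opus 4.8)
The plan is to apply Corollary~\ref{Cor:SpecDensAnalytic} to the fGn and then check, range by range in $H$, that the constant it produces ($c_0$, $c_1$ or $c_2$) is strictly positive. First I would record the relevant data. Starting from~(\ref{eq:fgnACF}) and expanding $(k\pm1)^{2H}=k^{2H}(1\pm 1/k)^{2H}$ by the binomial series, which converges absolutely for every $k\ge1$, one gets $\rho(k)=\sum_{i\ge0}b_i\,k^{2H-2-2i}$ with $b_i=\binom{2H}{2i+2}$; hence for the fGn $L\in\mathcal L$ is analytic at infinity with $K=1$, $\beta_i=2i$ and $b_0=H(2H-1)>0$, and moreover $\sup_i\beta_i=\infty>2H-1$ while $|\Gamma(2H-1-2i)|\to0$, so Assumption~\ref{Ass:AbsSum} holds and Corollary~\ref{Cor:SpecDensAnalytic} applies. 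Since $\beta_1=2$ forces $1-\beta_1/4=\tfrac12<\tfrac56$, only the three cases $H<\tfrac56$, $H=\tfrac56$ and $H\in(\tfrac56,1)$ of~(\ref{eq:SpecDensRound2ndOrd_1}) can occur, so it remains to show $c_0>0$, $c_1>0$ and $c_2>0$ respectively.

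For $H\ge\tfrac56$ the coefficient is $c_2=g_3^2(b_0/D)^3\pi^{-1}\Gamma(6H-5)\sin(3H\pi)$ if $H>\tfrac56$, and $c_1=g_3^2(b_0/D)^3\pi^{-1}$ if $H=\tfrac56$; here $(b_0/D)^3>0$, and for $H\in(\tfrac56,1)$ one has $6H-5\in(0,1)$ so $\Gamma(6H-5)>0$, and $3H\pi\in(\tfrac52\pi,3\pi)$ so $\sin(3H\pi)>0$. Thus both cases reduce to the claim $g_3\neq0$ for every $\chi>0$. To prove it I would use that the discretization $g$ is an odd map (by~(\ref{eq:pmf})) equal to $x-\delta\,\psi(x/\delta)$, where $\psi$ is the period-$1$ centred sawtooth ($\psi(y)=y$ for $|y|<\tfrac12$) with Fourier expansion $\psi(y)=\pi^{-1}\sum_{m\ge1}\tfrac{(-1)^{m+1}}{m}\sin(2\pi m y)$. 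Since $\int xH_3(x)\tfrac{e^{-x^2/2}}{\sqrt{2\pi}}\,dx=0$ and $\int\sin(\lambda x)H_3(x)\tfrac{e^{-x^2/2}}{\sqrt{2\pi}}\,dx$ is a negative multiple of $\lambda^3e^{-\lambda^2/2}$, term-by-term integration yields the closed form~(\ref{eq:g3}), which with $q=e^{-2\pi^2\chi}\in(0,1)$ equals a positive multiple of $\chi\sum_{m\ge1}(-1)^{m+1}m^2q^{m^2}=-\tfrac{\chi}{2}\,q\,\partial_q\vartheta_4(0,q)$. Because $\vartheta_4(0,q)=\prod_{n\ge1}(1-q^{2n})(1-q^{2n-1})^2$ is a product of positive, strictly decreasing functions of $q$ on $(0,1)$, it is strictly decreasing, so $\partial_q\vartheta_4(0,q)<0$ and hence $g_3>0$ for all $\chi>0$.

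For the remaining range $H<\tfrac56$ I would work from the exact expression~(\ref{eq:SpecDensRound2ndOrd_2}). For each $j$ the Cauchy product gives $\sum_i\widetilde{b}_{j,i}\,k^{-((2j+1)(2-2H)+\widetilde{\beta}_{j,i})}=\big(k^{2H-2}L(k)\big)^{2j+1}=\rho(k)^{2j+1}$ for $k\ge1$, so the associated $\zeta$-series is the regularized sum of $\rho(k)^{2j+1}$ over $k\ge1$; for $j\ge1$ all its arguments exceed $1$, so it equals the ordinary sum $\sum_{k\ge1}\rho(k)^{2j+1}\ge0$, while for $j=0$ the telescoping identity $\sum_{k=1}^n\rho_{\mathrm{fGn}}(k)=\tfrac12\big[(n+1)^{2H}-n^{2H}-1\big]$ shows the regularized value is $-\tfrac12$, since $(n+1)^{2H}-n^{2H}$ carries no $n^0$ term. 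Substituting, and using $D_d=\gamma_d(0)=\sum_{j\ge0}g_{2j+1}^2$, the $g_1^2$ contributions cancel and one is left with
\[
c_0=\frac{1}{2\pi}\sum_{j\ge1}g_{2j+1}^2\sum_{k\in\mathbb Z}\rho(k)^{2j+1}\ \ge\ \frac{g_3^2}{2\pi}\sum_{k\in\mathbb Z}\rho(k)^{3}\ >\ 0,
\]
because $0\le\rho(k)\le1$ and $g_3\neq0$. This covers all three ranges.

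I expect the case $H<\tfrac56$ to be the main obstacle: the raw formula~(\ref{eq:SpecDensRound2ndOrd_2}) contains the manifestly negative term $\pi^{-1}g_1^2b_0\,\zeta(2-2H)$ (as $\zeta<0$ on $(0,1)$) together with further sign-indefinite pieces, and positivity only surfaces after recognising the exact cancellation $\sum_{k\ge1}\rho_{\mathrm{fGn}}(k)=-\tfrac12$ in the regularized sense---reminiscent of the cancellation that keeps the Bardet--Kammoun root-mean-square formula valid for the fGn---and then repackaging the remainder as $\sum_{k\in\mathbb Z}\rho(k)^{2j+1}$. The subsidiary difficulty is locating the theta-function identity that makes $g_3>0$ transparent; both are elementary once set up correctly, but neither is apparent from~(\ref{eq:SpecDensRound2ndOrd_1})--(\ref{eq:SpecDensRound2ndOrd_2}) alone.
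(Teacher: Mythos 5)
Your proposal is correct and in fact lands on the same final expression as the paper --- after the cancellation of the $g_1^2$ pieces one is left with $c_0=\frac{1}{2\pi}\sum_{j\ge1}g_{2j+1}^2\bigl(1+2\sum_{k\ge1}\rho(k)^{2j+1}\bigr)$, which is exactly your sum over $k\in\mathbb{Z}$ --- but your route to the crucial identity $\sum_{i\ge0}\binom{2H}{2(i+1)}\zeta\left(2(i+1-H)\right)=-\tfrac12$ is genuinely different. The paper stays in the frequency domain: it takes the known Sinai--Beran closed form of the fGn spectral density, whose small-$\omega$ expansion has no constant term, expands the same density again through the polylogarithm representation of Lemma~\ref{Lemma:SpecDensGaussian}, whose constant term is $D\pi^{-1}\bigl(\tfrac12+\sum_i\binom{2H}{2(i+1)}\zeta\left(2(i+1-H)\right)\bigr)$, and equates the two expansions, so the identity drops out of uniqueness of the asymptotic expansion. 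You instead work in the time domain, using the telescoping identity $\sum_{k=1}^n\rho(k)=\tfrac12\bigl[(n+1)^{2H}-n^{2H}-1\bigr]$ and a zeta-regularization (Ramanujan/Euler--Maclaurin constant-term) argument; this is more self-contained, since it never needs the exact fGn spectral density, but the one step you should flesh out is the interchange hidden in ``the regularized value is $-\tfrac12$'': you are summing the termwise analytic continuations $\sum_i b_i\,\zeta(2-2H+2i)$ of infinitely many power terms and identifying the total with the constant term in the asymptotics of the partial sums, which needs Euler--Maclaurin applied term by term with uniformly summable remainders --- precisely the technical point the paper's detour through the known density sidesteps. Two further remarks: for $H\ge 5/6$ the paper simply invokes the positivity asserted in Corollary~\ref{Cor:SpecDensAnalytic}, whereas you supply the missing ingredient $g_3>0$ for every $\chi$ via the sawtooth Fourier series and the strict monotonicity of $\vartheta_4(0,q)$, a worthwhile addition; note only that the term-by-term integration directly produces the Poisson-dual representation of $g_3$ in the variable $q=e^{-2\pi^2\chi}$, which agrees with (\ref{eq:g3}) only after a Jacobi theta (modular) transformation, so it does not literally ``yield'' (\ref{eq:g3}), and your strict positivity of $c_0$ also quietly uses $\rho(k)\ge0$ for the fGn (convexity of $x^{2H}$) together with $g_3\neq0$, both of which you have but should state.
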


Unfortunately, the  fARIMA process does not satisfy Assumption~\ref{Ass:AbsSum} (i) and we are not able to prove that $c_0$ satisfies (\ref{eq:SpecDensRound2ndOrd_2}) and is strictly positive. However, from the first part of Corollary~\ref{Cor:SpecDensAnalytic} we know that the second-order term is strictly positive for $H \geq 5/6$ and our extensive numerical simulations (see below) indicate that it is strictly positive also for $H<5/6$.

\subsubsection{The spectral density of the discretization of fGn and fARIMA processes}

The fGn and the fARIMA process are often used in modeling Gaussian long memory processes. Starting from the explicit functional form of the spectral density of either of these two processes it is direct to show (see Beran (1994)) that 
%the function $L(k)$ is analytic at infinity. Moreover, by factoring out the $O(|\omega|^{1-2H})$ term, 
the relative second-order term of the spectral density  is $O(\omega^2)$, which is a very small correction to the leading term even for relatively large frequencies. Hence, roughly speaking, on a log-log plot the spectral density of both the fGn and the fARIMA process is close to a straight line even for relatively large frequencies. 

On the other hand, the two corollaries above state that for the discretization of a Gaussian long memory process with analytic $L(k)$,  as the fGn and the fARIMA, the relative second order correction to the spectral density around $0^+$ is always larger than $O\left(|\omega|^{2/3}\right)$, which is a relatively large correction even for small frequencies. On a log-log plot this leads to a significant deviation from the straight-line behavior (leading term) even for relatively small frequencies. 
%By using the theory of asymptotic Fourier transforms (Zygmund (1959)) we can compute explicitly the prefactor of the second order correction of the spectral density of a discretization of a fGn and of a fARIMA process when $H>5/6$ (i.e. $\alpha<1/3$). In both cases we find that the prefactor is positive. Moreover, it is easy to prove that fGn satisfies also Assumption~\ref{Ass:AbsSum} and that the prefactor is positive for all $H$.

In Figures~\ref{perCoarse} we show examples of the sample periodogram of the discretized fGn ($\chi=0.1$) for different time series length and Hurst exponent. As expected, while the periodogram of the fGn is very straight in a log-log scale, the periodogram of the discretized process changes its slope for increasing frequencies. For the discretized process the absolute slope of the periodogram decreases for increasing frequencies for all values of $H$ due to the strictly positive second-order term. We ran the same simulations for various fARIMA processes and we obtained similar results. 
%However, by running extensive Monte Carlo simulations we can argue that the second order correction is always positive. For a more detailed discussion see Section~\ref{Pgram}.
%As we will show in Section~\ref{Pgram}, analytical calculations and numerical simulations indicate that the prefactor of the second order correction is always positive.

As we will show below, these observations have important consequences for the Hurst exponent estimators based on the periodogram.%, which is an estimator of the spectral density. 

%%%%% FIGURES PERIODOGRAM %%%%

%%% QUANTIZED PROCESS %%%

\begin{figure}[t]
\begin{center}
\includegraphics[scale=0.28]{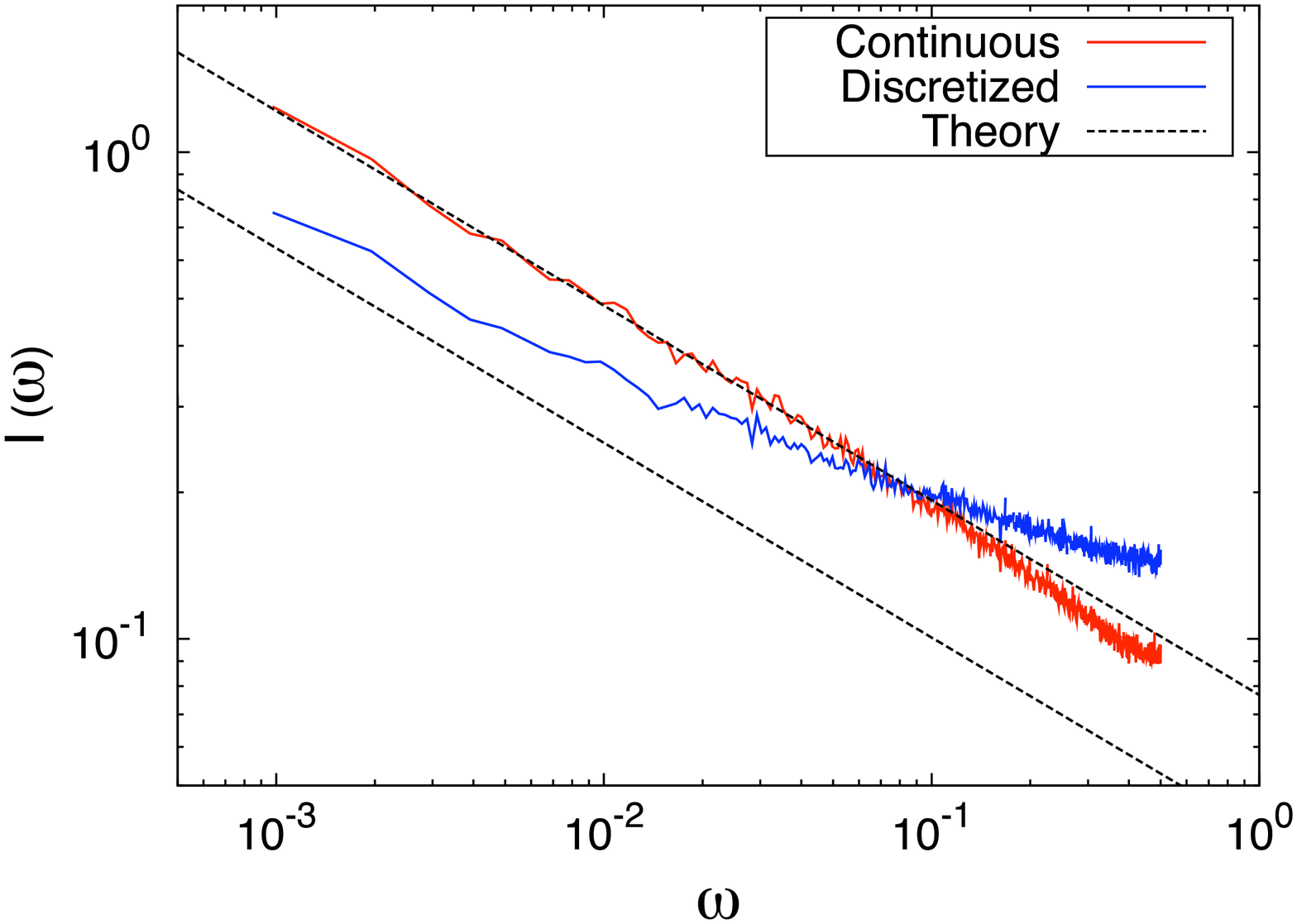}
\includegraphics[scale=0.28]{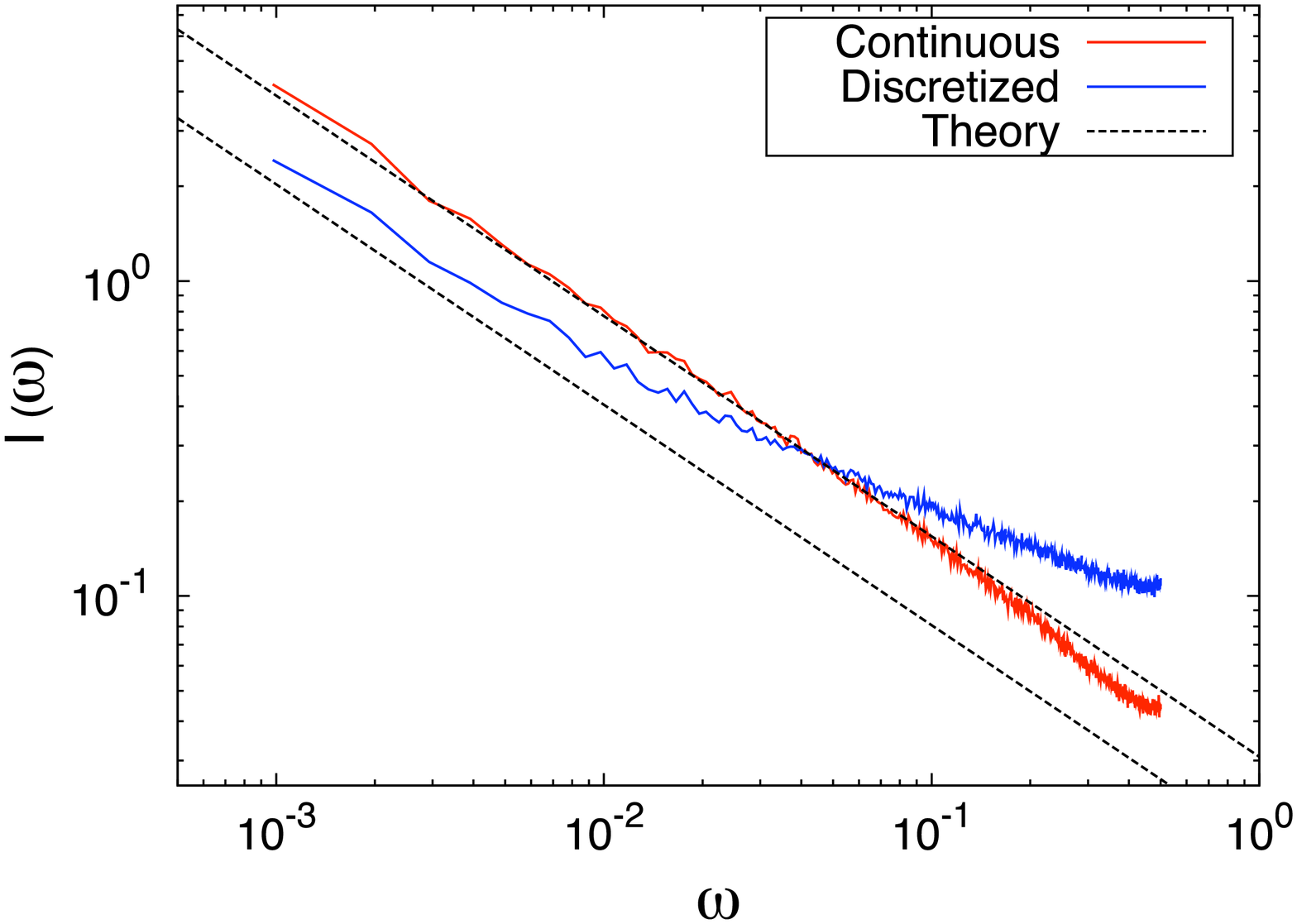}
\includegraphics[scale=0.28]{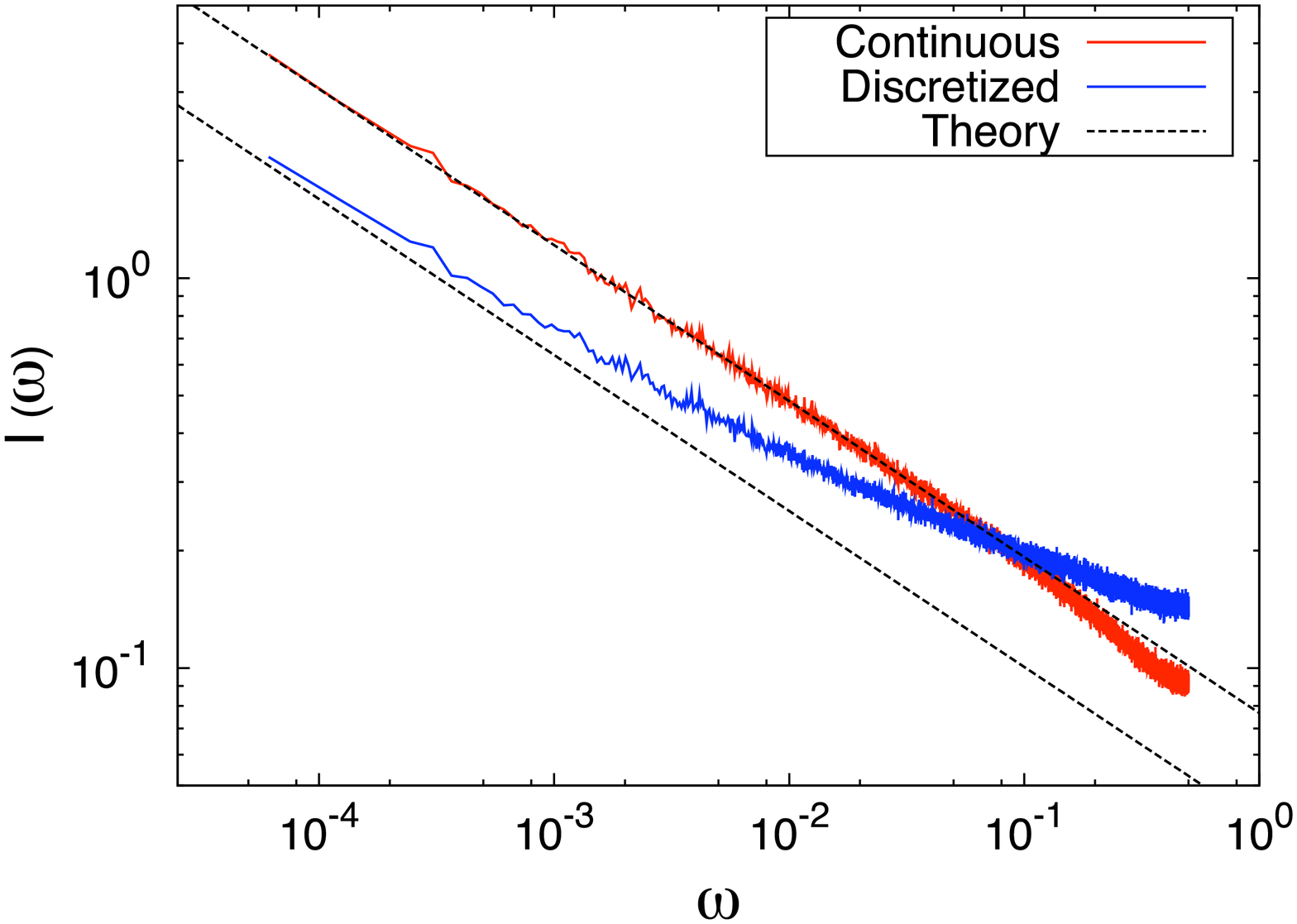}
\includegraphics[scale=0.28]{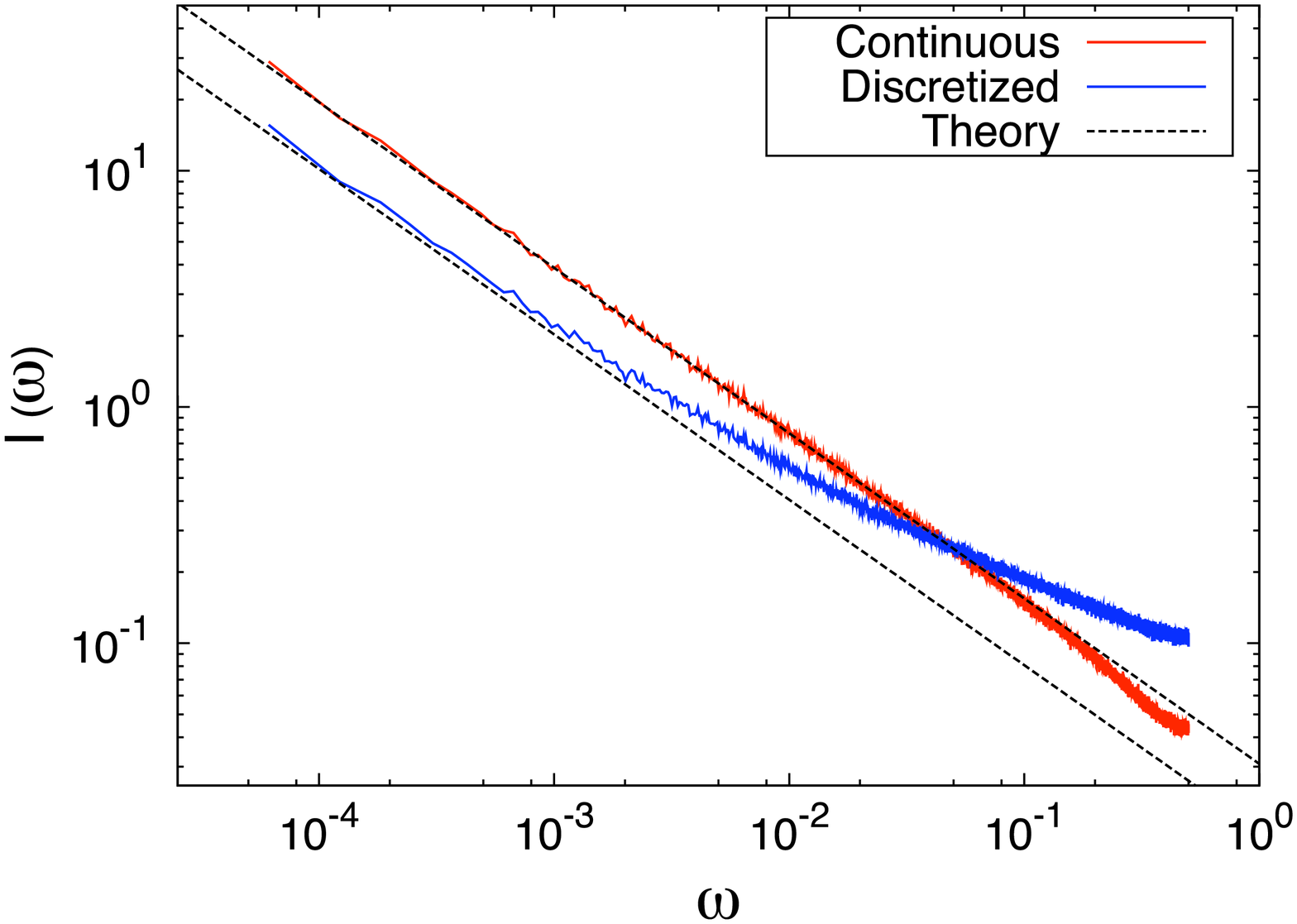}
\caption{Sample periodogram  of a numerical simulation of a fGn and its discretization with a scaling parameter $\chi=0.1$. The time series has length $n=2^{10}$ (top) and $n=2^{14}$ (bottom). The Hurst exponent of the fGn is $H=0.7$ (left) and $H=0.85$ (right). The figure also shows the leading term of the expansion of the spectral density for small $\omega$.}
\label{perCoarse}
\end{center}
\end{figure}

%%%%% SIGN PROCESS %%%%%%%%%

%\begin{figure}[t]
%\begin{center}
%\includegraphics[scale=0.28]{}
%\includegraphics[scale=0.28]{}
%\includegraphics[scale=0.28]{}
%\includegraphics[scale=0.28]{}
%\caption{Sample periodogram  of a numerical simulation of a fGn and its sign. The time series has length $n=2^{10}$ (top) and $n=2^{14}$ (bottom). The Hurst exponent of the fGn is $H=0.7$ (left) and $H=0.85$ (right). The figure also shows the leading term of the expansion of the spectral density for small $\omega$.}
%\label{perSign}
%\end{center}
%\end{figure}

%%%%%%%%%%%%%%%%%%%%%%%

%%% ESTIMATION OF THE HURST EXPONENT %%%%

\section{Estimation of the Hurst exponent}
\label{estimation}

%According to the result of Proposition \ref{Prop:ACV} the leading term of the asymptotic expansion of the autocovariance function is proportional to $k^{2H-2}$, as the underlying fGn process. This would suggest that the discretized process is also long memory with the same Hurst exponent $H$. Dittman and Granger (2002) state that for a nonlinear transformation of a Gaussian process that can be written as an infinite series of Hermite polynomials, their theorem does not  necessarily hold. In order to consider this issue for the discretization and to investigate the role of finite size effect, we investigated numerically this issue.   {\bf Expand}

In this section we investigate the estimation of the Hurst exponent from finite time series of the discretized process. We numerically generate time series of discretized long memory processes and we compute the Hurst exponent by using  the local Whittle estimator and the Detrended Fluctuation Analysis. 
Despite the fact that, as we have demonstrated in the previous sections, the Hurst exponent of the discretized process is the same as the one of the original process, we will show here that both methods give estimates of the Hurst exponent which are systematically and significantly negatively biased. 
These results are consistent with the literature on long-memory signal plus noise processes (Arteche (2004), and Hurvich, Moulines and Soulier (2005)) and on general non-linear transformations of long-memory processes (Dalla, Giraitis and Hidalgo (2006)).

\subsection{Local Whittle estimator}\label{sec:Whittle}
The local Whittle (LW) estimator  is a Gaussian semiparametric estimator that works in the frequency domain. It has been suggested by K\"unsch (1987). Since its introduction it has been extensively studied in econometric theory (see Robinson (1995b), Velasco (1999), Phillips and Shimotsu (2004), Andrews and Sun (2004), Shimotsu and Phillips (2005), Shimotsu and Phillips (2006), Dalla, Giraitis and Hidalgo (2006), Shao and Wu (2007), Abadir, Distaso and Giraitis (2007)), and widely used in theoretical and applied works in financial econometrics to estimate long memory in volatility (see Hurvich and Ray (2003), Arteche (2004) and Hurvich, Moulines and Soulier (2005)).

Let $\{X_t\}_{t \in \mathbb{N}}$ be a weakly stationary long memory process with spectral density $\phi(\omega)$ satisfying
\beq
\phi(\omega) = c\,|\omega|^{1-2 H} \left(1 + o(1)\right) \qquad as \;\; \omega \to 0^+\,\label{eq:SpecWhittleGeneral}
\eeq
for some $c>0$.
For a time series $\{X_t\}$, $t=1, \ldots, n$, define the periodogram
\beq
I_n(\omega_j) = \left(2 \pi n\right)^{-1} \left|\sum_{t=1}^n X_t \exp^{i t \omega_j}\right|\nonumber
\eeq
Let $\omega_j = 2 \pi j/n$, $j=1, \ldots, n$, be the Fourier frequencies. The LW estimator is defined as the minimizer of the objective function $U_n(h; m)$, i.e.
\beq
\hat{H} \equiv \hat{H}_n = \argmin_{h \in [0, 1]} U_n(h; m)\label{LW:Def}
\eeq
where the local objective function is
\beq
U_n(h; m) = \log \left(\frac{1}{m} \sum_{j=1}^m \omega_j^{2h-1}I_{n}(\omega_j)\right) - \frac{2 h}{m} \sum_{j=1}^m \log \omega_j\label{LW:ObjFun}
\eeq
$m=m(n)$ is an integer-bandwidth parameter such that
\beq
m\to\infty, \;\; m=o(n), \;\; \text{as} \;n\to \infty \nonumber
\eeq
Note that, in semiparametric models, the spectral density function has property (\ref{eq:SpecWhittleGeneral}) and is only locally parameterized around $\omega=0$ by the parameters $H$ and $c$. Therefore, contrary to the parametric Whittle estimation, which employs the full spectrum of frequencies, the local Whittle estimator uses only the first $m$ Fourier frequencies.

Under Gaussianity assumption Fox and Taqqu (1986) proved consistency and asymptotic normality of the LW estimator, while Dahlhaus (1989) proved consistency and asymptotic normality of an exact likelihood approach.
Giraitis and Surgalis (1990) proved consistency and asymptotic normality of the LW estimator under the assumption of linearity for $H\in(0.5, 1)$.
Robinson (1995b) proved consistency and asymptotic normality of the LW estimator under the assumption that the innovations in the Wold representation are a martingale difference sequence with finite fourth moments, for $H \in (0, 1)$.
Under a similar set of assumptions Velasco (1999) extended RobinsonÕs results to the non-stationary region and showed that the estimator is consistent for $H\in (0, 1.5)$, and asymptotically normally distributed for $H\in (0, 1.25)$. Velasco also showed that, upon adequate tapering of the observations, the region of consistent estimation of $H$ may be extended
but with corresponding increases in the variance of the limit normal distribution. Hurvich and Chen (2000) also proved consistency and asymptotic normality of a tapered LW estimator for $H \in (0, 2)$.
Phillips and Shimotsu (2004) considered a special model of non-stationary fractional integration and extended the results of Velasco (1999) proving that the LW estimator: (i) has a non-normal limit distribution for $H \in [1.25, 1.5)$, (ii) has a mixed normal limit distribution for $H = 1.5$, and (iii) converges to unity in probability for $H >1.5$.  Shimotsu and Phillips (2005, 2006) proposed an exact\footnote{The word ``exact'' is used to distinguish the proposed estimator (which relies on an exact algebraic manipulation) from the conventional local Whittle estimator, which is based on the approximation $I_{X}(\omega_j ) \sim \omega_j^{-2d} I_u (\omega_j)$. Of course, the Whittle likelihood is itself an approximation of the exact likelihood.} form of the local Whittle estimator which is based on an exact algebraic manipulation of the Whittle likelihood and does not rely on differencing or tapering. They call this estimator the exact local Whittle (ELW) estimator and show that it is consistent and asymptotically normal when the optimization covers an interval of width less than $9/2$. 

From Proposition~\ref{Prop:SpecDens} we know that the discretized process satisfies equation (\ref{eq:SpecWhittleGeneral}). However, the discretized process is nonlinear, and  therefore we cannot apply the results cited above. Recently, several papers have studied the asymptotic properties of the LW estimator for nonlinear processes. For example, Hurvich and Ray (2003), Arteche (2004), and Hurvich, Moulines and Soulier (2005)  have studied the asymptotic properties of the LW estimator for signal-plus-noise processes in the presence of long-memory. However, even though the discretized process can be seen as a signal-plus-noise process (see Proof of Theorem~\ref{thm:Whittle} in Appendix~\ref{App:Proof2} for more details), we cannot apply their results because they assume that the noise is either a white noise or independent of the signal, which is not the case for the discretized process. Under regularity conditions Dalla, Giraitis and Hidalgo (2006) and Shao and Wu (2007) proved consistency and asymptotic normality of the LW for a very general class of nonlinear processes. Abadir, Distaso and Giraitis (2007) introduced a fully extended local Whittle (FELW) estimator, which is applicable not only for traditional cases but also for nonlinear and non-Gaussian processes, and they proved that it is consistent.

%It is worth mentioning that the discretized process can be seen as a signal-plus-noise process (see Proof of Theorem~\ref{thm:Whittle} in Appendix~\ref{App:Proof2} for more details). Several papers have studied the asymptotic properties of the LW estimator applied to signal-lus-noise processes in the presence of long-memory (see, for example, Hurvich and Ray (2003), Arteche (2004), and Hurvich, Moulines and Soulier (2005)). However, we cannot apply their results because they assume that the noise is either a white noise or independent of the signal, which is not the case for the discretized process.

In this paper we follow the approach of Dalla, Giraitis and Hidalgo (2006) (henceforth DGH). The relevant result for our work is their Theorem 4; however, we cannot directly apply it for two reasons. First, DGH define long memory processes in the frequency domain, and therefore they impose assumptions directly on the spectral density. On the contrary, in the present paper we define long memory processes in the time domain (see Definitions~\ref{generalDef} and \ref{wellbehavedDef}) and therefore we want to impose assumptions on the autocovariance function. For this reason we introduce Lemma~\ref{Lemma:SpecDensGaussian} that makes the connection between the class of processes considered here and those studied in DGH.
Second, as we will show below, some assumptions in Theorem 4 of DGH is too restrictive for the processes considered in this paper. Instead, we can prove consistency and asymptotic normality of the LW estimator for the discretized process under weaker assumptions.

%It is clear that the spectral density of a stationary Gaussian process $\{y(t)\}_{t \in \mathbb{Z}}$ satisfying the assumptions of Proposition~\ref{Prop:SpecDens} is given by
%
%\beq\label{specGaussianGeneralDef}
%\phi_{y}\left(\omega\right) = c_{\phi} b_0 |\omega|^{\alpha-1} \left(1+O\left(|\omega|^{\min(\beta_1, 1-\alpha)}\right)\right)
%\eeq
%
%The proof is straightforward, being a special case of the proof of Proposition~\ref{Prop:SpecDens}, and thus omitted. This result implies that the process $\{y(t)\}$ satisfies Assumption A1$^\prime$ in Robinson (1995b). However, unless $\beta_1<1-\alpha$, (\ref{specGaussianGeneralDef}) does not provide the exact rate of convergence on $\phi_{y}\left(\omega\right) \sim c_{\phi} b_0 |\omega|^{\alpha-1}$, because the term $O(|\omega|^{1-\alpha})$ comes from approximating the convergent component of the spectral density as an error uniformly $O(1)$. In some special cases the convergent component of the spectral density might be smaller than $O(1)$ as $\omega$ goes to zero. In order to find the exact relative second-order correction in (\ref{specGaussianGeneralDef}) I need to introduce the following assumptions

%We introduce the following assumption.
%
\begin{assumption}\label{Ass:CoeffGenDef2}
In Definition \ref{wellbehavedDef}, $\beta_1\neq 2H-1$, and either $2H\neq\beta_1\leq 2$ or $D/2 \neq -\sum\limits_{i:  \beta_i \neq 2H-1}^{I} b_i \zeta(2-2H+\beta_i)$,  where $\zeta(\cdot)$ is the Riemann zeta function.
%\beqnr\nonumber
%D+ \sum_{i=0}^{I} b_i \zeta(\alpha+\beta_i)  \neq 0 \qquad if \;\; \alpha + \beta_i \neq 1 \; \forall \, i \in \mathbb{N}\\
%D+ \sum_{i: \alpha + \beta_i \neq 1}^{I} b_i \zeta(\alpha+\beta_i)  \neq 0 \qquad if \;\; \alpha + \beta_i \neq 1 \; 
%\eeqnr
\end{assumption}
Note that if $L(k)$ in Definition~\ref{wellbehavedDef} is analytic at infinity, then $\mathbb{N}  \ni \beta_1\neq 2H-1$, but not necessarily $\beta_1\leq 2$.
% and Assumption~\ref{Ass:CoeffGenDef2} is satisfied.
%Assumption~\ref{Ass:CoeffGenDef2} provides that the second order term of the spectral density is $O(|\omega|^\beta)$ with $\beta\leq 2$. 
%
\begin{lemma}\label{Lemma:SpecDensGaussian}
Let $\{X(t)\}_{t\in\mathbb{N}}$ be a stationary Gaussian process with autocovariance given by Definition~\ref{generalDef} with  $L \in \mathcal{L}$. Let Assumptions~\ref{Ass:AbsSum} and \ref{Ass:CoeffGenDef2} hold. Then, as $\omega\to0^+$, the spectral density satisfies
\beq\label{eq:SpecDensGaussian}
\phi\left(\omega\right) = c_{\phi} |\omega|^{1-2H} \left(b_0 + c_{\beta} |\omega|^{\beta} + o\left(|\omega|^{\beta}\right)\right)\,,
\eeq
where $c_{\phi}>0$ is defined as in Proposition~\ref{Prop:GenSpecDens}, $c_\beta\neq 0$ and $\beta \in (0, 2]$.
\end{lemma}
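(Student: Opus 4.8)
The plan is to read off the expansion (\ref{eq:SpecDensGaussian}) directly from the spectral representation
\[
\phi(\omega)=\frac{\gamma(0)}{2\pi}+\frac{1}{\pi}\sum_{k\ge 1}\gamma(k)\cos(k\omega),
\]
interpreted in the usual conditionally-convergent / $L^1$ sense (legitimate because $\gamma(k)=k^{2H-2}L(k)$ is ultimately monotone). Since $L\in\mathcal L$ I would substitute $\gamma(k)=\sum_{i=0}^{I}b_i\,k^{2H-2-\beta_i}=\sum_{i}b_i\,k^{-s_i}$ for $k>K$, writing $s_i:=2-2H+\beta_i$, which is strictly increasing with $s_0=2-2H\in(0,1)$, and I would absorb the finitely many terms with $k\le K$ into a real-analytic remainder $\frac1\pi\sum_{k\le K}r_k\cos(k\omega)=\text{(const)}+O(\omega^2)$ that affects only the $\omega^0$ coefficient. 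For each block I would invoke the classical Fourier/polylogarithm expansion valid near $\omega=0$,
\[
\sum_{k\ge1}\frac{\cos(k\omega)}{k^{s}}=\Gamma(1-s)\sin\!\Big(\tfrac{\pi s}{2}\Big)\,\omega^{\,s-1}+\sum_{n\ge0}\frac{(-1)^n\zeta(s-2n)}{(2n)!}\,\omega^{2n},
\]
with the usual modification (a $\ln\omega^{-1}$ factor) when $s-2n$ is a positive odd integer, to obtain the exact expansion of the $i$-th block.

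Collecting the blocks, the $i=0$ singular term produces the leading behaviour $c_\phi b_0|\omega|^{1-2H}$ with $c_\phi=\pi^{-1}\Gamma(2H-1)\sin(\pi H)>0$, while the remaining contributions to $\phi(\omega)\big/\big(c_\phi|\omega|^{1-2H}\big)-b_0$ are: (a) a pure constant of relative order $|\omega|^{2H-1}$, assembled from $\tfrac{D}{2\pi}$, $\pi^{-1}\sum_{i:\beta_i\neq 2H-1} b_i\zeta(s_i)$ and the $k\le K$ remainder; (b) singular terms $\pi^{-1}b_i\Gamma(2H-1-\beta_i)\sin\!\big(\tfrac{(2H-\beta_i)\pi}{2}\big)|\omega|^{1-2H+\beta_i}$ of relative order $|\omega|^{\beta_i}$; and (c) genuinely higher-order terms of relative order $|\omega|^{1+2H}$ and beyond (including, when some $\beta_i=2H-1$, a harmless $|\omega|^{2H-1}\ln\omega^{-1}$ term). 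Because $2H-1\in(0,1)$ and $\{\beta_i\}$ is increasing, the dominant correction has relative order $\beta=\min(2H-1,\beta_1)\in(0,1)$, unless the coefficient attached to that order vanishes; Assumption~\ref{Ass:CoeffGenDef2} allows that only when $\beta_1\le 2$ and $\beta_1\neq 2H$, in which case the next available exponent is $\beta_1\le 2$. Thus in every admissible case $\beta\in(0,2]$, and Assumption~\ref{Ass:CoeffGenDef2} is precisely what forces $c_\beta\neq 0$: the clause $\beta_1\neq 2H-1$ keeps the $|\omega|^{2H-1}$ and $|\omega|^{\beta_1}$ contributions from merging, and the dichotomy guarantees that whichever of the two is dominant carries a nonzero coefficient — the $\Gamma$-factor $\Gamma(2H-1-\beta_1)$ being finite and the $\sin$-factor nonzero exactly because $\beta_1\notin\{2H-1,2H\}$ and $\beta_1\le2$ in the relevant sub-case, while in the alternative sub-case the explicit condition $D/2\neq-\sum_{i:\beta_i\neq 2H-1}b_i\zeta(2-2H+\beta_i)$ is just ``constant $\neq 0$''.

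The steps needing real work arise when $I=\infty$, and this is where Assumption~\ref{Ass:AbsSum} is used in full: part (i) ($K=1$, together with $\sum_i|b_i|<\infty$) reduces the small-$k$ discrepancy to finitely many terms and legitimizes writing $\phi$ as the sum over $i$ of the Fourier series of $b_i k^{-s_i}$ and the constant $\pi^{-1}\sum_i b_i\zeta(s_i)$; part (ii) ($\sup_i\beta_i>2H-1$) ensures only finitely many blocks have $s_i\le 1$, hence only finitely many singular contributions $\omega^{\,s_i-1}$ carry a non-positive power of $\omega$, the rest contributing strictly positive powers; and part (iii) ($\sup_i\Gamma(2H-1-\beta_i)<\infty$) keeps the coefficients of the $\omega^{1-2H+\beta_i}$ terms bounded (in particular none sits at a pole of $\Gamma$), so the tail of singular terms beyond the dominant one converges and is $o(\cdot)$ of the dominant one as $\omega\to 0^+$, i.e. the polylogarithm series may be rearranged safely. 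I expect the main obstacle to be exactly this uniform control of the rearranged series, combined with the case bookkeeping needed to certify $c_\beta\neq 0$ and $\beta\in(0,2]$ under the clauses of Assumption~\ref{Ass:CoeffGenDef2}; the single-block asymptotics and the identification of the leading term $c_\phi b_0|\omega|^{1-2H}$ are routine, and are in any case the Hermite-rank-one ($g(x)=x$) special case of the computation behind Propositions~\ref{Prop:GenSpecDens} and \ref{Prop:SpecDens}.
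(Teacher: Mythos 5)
Your proposal follows essentially the same route as the paper's proof: Herglotz's representation, substitution of the well-behaved expansion of $L$, block-by-block polylogarithm expansions (\ref{eq:ExpansionPolyLog1})--(\ref{eq:ExpansionPolyLog2}), rearrangement justified by Assumption~\ref{Ass:AbsSum}, and the same case analysis in which Assumption~\ref{Ass:CoeffGenDef2} forces either the constant term or the $|\omega|^{1-2H+\beta_1}$ term to carry a nonzero coefficient, yielding $\beta\in(0,2]$. The bookkeeping of which clause of Assumption~\ref{Ass:CoeffGenDef2} rules out which degenerate case matches the paper's argument, so the proposal is correct.
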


In the above lemma $\beta$ is either $\beta_1$ or $2H-1$, depending on the specific functional form of the autocovariance of the underlying Gaussian process (see proof below). %If the coefficient in front of the term $O(|\omega|^{2H-1})$ is zero, then $\beta=\beta_1 \leq 2$. If the coefficient in front of the term $O(|\omega|^{1-\alpha})$ is non-zero, then $\beta=\min(1-\alpha, \beta_1)<1$. Note that the coefficient in front of the term $O(|\omega|^{\beta_1})$ is always non-zero by Definition~\ref{wellbehavedDef}. 
The fGn  is a special case of Lemma~\ref{Lemma:SpecDensGaussian} with $\beta=\beta_1=2$. The fARIMA process does not satisfy the conditions of Lemma~\ref{Lemma:SpecDensGaussian} (specifically, Assumption~\ref{Ass:AbsSum} $(i)$), but nonetheless its spectral density satisfies (\ref{eq:SpecDensGaussian}) with $\beta=\beta_1=2$
%Starting from the explicit form of their autocovariance it is direct to show (see Beran (1994)) that the function $L(k)$ is analytic with $\beta_1=2$. Due to the specific functional form of their spectral density, by factoring out the $O(|\omega|^{\alpha-1})$ term, the (relative) second order term of the spectral density  is $O(\omega^2)$, which is a very small correction to the (relative) leading term of order $O(1)$ even for relatively large frequencies. Hence, roughly speaking, on a log-log plot the spectral density of both the fGn and the fARIMA process is close to a straight line even for relatively large frequencies.

More important, Lemma~\ref{Lemma:SpecDensGaussian} provides conditions under which the underlying Gaussian process $\{X(t)\}_{t\in \mathbb{N}}$ satisfies assumption $T(\alpha_0, \beta)$ in DGH with $\alpha_0 = 2H-1$.

In order to prove asymptotic normality we need some smoothness assumption on the spectral representation of the underlying process. We assume that $\{X(t)\}_{t\in \mathbb{N}}$ is %a purely non-deterministic stationary Gaussian process, 
purely non-deterministic, so that it is also linear with finite fourth moments. Therefore, we can write
\beq
X_t = \sum_{j=0}^{\infty} a_j \varepsilon_{t-j}\nonumber
\eeq
where $\sum_{j=0}^{\infty} a_j^2 < \infty$ and $\varepsilon_t$ are i.i.d. Gaussian variables with zero mean and unit variance. Let 
\beq
\alpha(\omega) = \sum_{j=0}^{\infty} a_j e^{i j \omega}\nonumber
\eeq
We introduce the following assumption
\begin{assumption}\label{Ass:SmoothSpecAlpha}
\beq
\frac{d }{d \omega} \alpha(\omega) = O\left(\frac{|\alpha(\omega)|}{\omega}\right) \qquad as \;\; \omega \to 0^+
\eeq
\end{assumption}

Assumption~\ref{Ass:SmoothSpecAlpha} is the same as Assumption A2$^\prime$ in Robinson (1995b) and is the standard smoothness assumption used in the literature to prove asymptotic normality of the LW estimator.
Unfortunately, we are not able to relax this assumption for the class of long-memory processes considered in this paper.

%In order to prove consistency and asymptotic normality of the LW estimator for the discretized process we also need some trimming condition.
%
%\begin{assumption}\label{Ass:Trim1}
%As $n\to \infty$,
%\beq
%$m= o (n)$ and $m \geq n^{\gamma}$ %\nonumber
%\eeq
% for some $\gamma \in (0, 1)$.
%\end{assumption}
%
%\renewcommand{\theassumption}{\arabic{assumption}$^\prime$} 

%
%\begin{assumption}\label{Ass:Trim2}
%According to the notation in Lemma~\ref{Lemma:SpecDensGaussian}, as $n\to \infty$,
%\beq
%\frac{1}{m} + \frac{m^{1+2\beta} \left(\log m\right)^2}{n^{2\beta}} \to 0
%\eeq
%\end{assumption}

%Assumption~\ref{Ass:Trim1} is minimal and we will use it to prove consistency of the LW estimator. It corresponds to equation (56) in Dalla {et al.} (2006). Assumption~\ref{Ass:Trim2} is stronger and we will use it to prove asymptotic normality. Note that it is the same as Assumption 4$^\prime$ in Robinson (1995b).

Following Theorems 3 and 4 in DGH and under the assumptions above we derive the following theorem for a generic measurable transformation of an underlying Gaussian process.

\begin{theorem}\label{thm:Whittle}
Let $\{X(t)\}_{t \in \mathbb{N}}$  be a stationary purely non-deterministic Gaussian process with autocovariance given by Definition \ref{generalDef} and $L \in \mathcal{L}$. Let $Y(t) = g\left(X(t)\right)$ $\forall\, t \in \mathbb{N}$ for some measurable function $g$ with Hermite rank $j_0=1$. %Let Assumption~\ref{Ass:CoeffGenDef1} hold. 
Then, as $n\to\infty$, the LW estimator $\widehat{H}^Y_{LW}$ of the process $\{Y(t)\}_{t\in \mathbb{N}}$ satisfies:
\begin{itemize}
\item[(i)] $\widehat{H}^Y_{LW} \overset{p}\rightarrow H$.
%
%\beq
%\widehat{H}_Y \overset{p}\rightarrow H
%\eeq
%
\item[(ii)] Let Assumptions~\ref{Ass:AbsSum}--\ref{Ass:SmoothSpecAlpha} hold. If  $m=o\left(n^{2\beta/(2\beta+1)}\right)$, then
\beqnr
\widehat{H}^Y_{LW} - H &=& O_{P} \left(m^{-1/2} + \left(\frac{m}{n}\right)^{\beta} + \left(\frac{m}{n}\right)^{r}\right)\label{eq:WhittleConvRate}
%\hat{d}_d^{LW} - d_x = O_{P} \left(m^{-1/2} + \left(\frac{m}{n}\right)^{\beta} + \left(\frac{m}{n}\right)^{r}\right)
\eeqnr
where $\beta$ is defined as in Lemma~\ref{Lemma:SpecDensGaussian}, and 
\beqnr\nonumber
r =
\begin{cases}
 H-1/2 \qquad &if \;\; j_1 (2-2H) > 1\\
 (j_1-1) (1-H) \qquad &if \;\; j_1 (2-2H) < 1\\
 (j_1-1)/(2 j_1) - \varepsilon \qquad &if \;\; j_1(2-2H) = 1
\end{cases}
\eeqnr
for any $\varepsilon > 0$, where $j_1\geq 2$ is the order of the second non-vanishing Hermite coefficient.
%
%
%BIAS? Can I say something?
%
%

\item[(iii)]In addition, if $m=o(n^{2r/(2r+1)})$, then
\beqnr\label{eq:WhittleAsympNorm}
%\sqrt{m} \left(\hat{d}_x^{LW} - d_x\right)  O_{P} \left(m^{-1/2} + \left(\frac{m}{n}\right)^{r}\right)\\
\sqrt{m} \left(\widehat{H}^Y_{LW} - H\right) \overset{d}{\rightarrow} N\left(0, \frac{1}{4}\right).
\eeqnr
\end{itemize}
\end{theorem}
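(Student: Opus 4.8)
The plan is to deduce all three statements from Theorems~3 and~4 of DGH, the preceding lemma serving to recast our time-domain hypotheses in the frequency-domain form those theorems require. The bridge is Lemma~\ref{Lemma:SpecDensGaussian}: under Assumptions~\ref{Ass:AbsSum} and~\ref{Ass:CoeffGenDef2} the underlying Gaussian process satisfies DGH's condition $T(\alpha_0,\beta)$ with $\alpha_0=2H-1$, that is, $\phi_X(\omega)=c_\phi|\omega|^{1-2H}(b_0+c_\beta|\omega|^\beta+o(|\omega|^\beta))$ as $\omega\to0^+$. First I would expand the subordinated process, $Y(t)-E[Y(t)]=\sum_{j\ge1}g_jH_j(X(t))$, with $g_1\ne0$ because $g$ has Hermite rank one and with $\sum_{j\ge1}g_j^2\,j!<\infty$ because $g$ is square integrable against the Gaussian law (automatic for the bounded discretization map). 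The spectral density of $H_j(X(t))$ equals, up to a constant, the $j$-fold self-convolution of $\phi_X$, which near the origin behaves like $|\omega|^{j(2-2H)-1}$ when $j(2-2H)<1$, like $\ln|\omega|^{-1}$ when $j(2-2H)=1$, and is bounded when $j(2-2H)>1$; summing over $j$ (using Assumption~\ref{Ass:AbsSum}(iii) to keep the constants under control) gives $\phi_Y(\omega)=g_1^2c_\phi b_0|\omega|^{1-2H}(1+o(1))$ with an explicit second-order term governed by $\beta$ (from $\phi_X$ itself) and by $j_1\ge2$, the order of the second non-vanishing Hermite coefficient. In particular $\phi_Y$ has a pure power-law leading term at $0$ with exponent $1-2H$, so $Y$ itself satisfies a $T(\cdot,\cdot)$-type condition and DGH becomes applicable. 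Part~(i) then follows from Theorem~3 of DGH, the only substantive input being the near-origin behaviour of $\phi_Y$ just established, the remaining conditions being standard for Gaussian-subordinated processes with $g\in L^2$.

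For parts~(ii) and~(iii) I would follow the architecture of Theorem~4 of DGH, writing $\widehat H^Y_{LW}-H$ as the sum of three contributions: a score term of order $m^{-1/2}$; a deterministic bias of order $(m/n)^\beta$ produced by the $c_\beta|\omega|^\beta$ correction in $\phi_X$, handled exactly as in the Gaussian/linear case (Robinson~(1995b)); and a term of order $(m/n)^r$ collecting both the bias and the additional stochastic error contributed by the higher-order Hermite components $\sum_{j\ge j_1}g_jH_j(X(t))$. It is for this last term that DGH impose a condition linking the bandwidth to the Hermite structure that is stronger than necessary. Instead I would estimate the contribution $m^{-1}\sum_{j=1}^m\omega_j^{2h-1}I_{Y,\ge j_1}(\omega_j)$ of those components to the local Whittle objective directly, using the MA$(\infty)$ representation $X_t=\sum_{\ell\ge0}a_\ell\varepsilon_{t-\ell}$ (available since $X$ is purely non-deterministic, hence linear with Gaussian i.i.d.\ innovations), Assumption~\ref{Ass:SmoothSpecAlpha} on the smoothness of $\alpha(\omega)$, and the usual moment/diagram bounds for multilinear forms in i.i.d.\ variables; this produces the exponent $r$ in the three regimes listed, according to the sign of $j_1(2-2H)-1$, with the logarithmic correction at $j_1(2-2H)=1$ responsible for the arbitrarily small $\varepsilon$. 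For part~(iii), when in addition $m=o(n^{2r/(2r+1)})$ (on top of the bound $m=o(n^{2\beta/(2\beta+1)})$ already assumed in (ii)), $\sqrt m$ times each of the two bias/error terms tends to zero, so $\sqrt m(\widehat H^Y_{LW}-H)$ is asymptotically equivalent to $\sqrt m$ times the score; a martingale central limit theorem for the score — available thanks to the linear representation and Assumption~\ref{Ass:SmoothSpecAlpha}, as in Theorem~4 of DGH — then gives the $N(0,1/4)$ limit, the variance $1/4$ being the universal local Whittle value and remaining unchanged precisely because the leading term of $\phi_Y$ at $0$ is a pure power law.

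The hard part will be the analysis of the higher-order Hermite chaoses: one must produce near-origin estimates for the $j$-fold self-convolution of $\phi_X$ that are uniform enough in $j$ to be summed against $\{g_j^2\,j!\}$, and then show that the cross contributions between distinct Hermite levels in $I_Y$ — which is not the sum of the individual periodograms $I_{H_j(X)}$ — are negligible over the band $[0,\omega_m]$; this is exactly where the linear representation and Assumption~\ref{Ass:SmoothSpecAlpha} are needed, in the spirit of Robinson~(1995b), and also where one verifies that DGH's stronger assumption is superfluous and that the exponent $r$ obtained is the best possible. A secondary subtlety is that, writing $Y(t)=g_1X(t)+v(t)$ with $v(t)=\sum_{j\ge j_1}g_jH_j(X(t))$ of Hermite rank at least two, the ``noise'' $v$ is correlated with the ``signal'' $X$, so the signal-plus-(independent or white) noise results of Arteche~(2004) and Hurvich, Moulines and Soulier~(2005) do not apply directly; what makes the argument go through is that $v$ nonetheless has a strictly less singular spectral density at $0$ than $X$, which is what preserves both the leading power law and the limit variance $1/4$.
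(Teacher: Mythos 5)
Your proposal follows essentially the same route as the paper's proof: decompose $Y_t=g_1X_t+Z_t$ with $Z_t=\sum_{j\ge j_1}g_jH_j(X_t)$, verify DGH's Assumptions A/B and the condition $T(2H-1,\beta)$ for the linear signal via Lemma~\ref{Lemma:SpecDensGaussian}, bound the spectral density of $Z_t$ near the origin in the three regimes determined by the sign of $j_1(2-2H)-1$ (yielding the exponent $r=H-H_z$), and conclude (i) from DGH's Theorem~3 and (ii)--(iii) from the Robinson-type expansion with bias term $(m/n)^{\beta}$, noise contribution $(m/n)^{r}$, and the CLT $\sqrt{m}\,V_m\to N(0,1)$ giving the variance $1/4$. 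The only small deviation worth noting is that for part (i) you lean on Lemma~\ref{Lemma:SpecDensGaussian} and Assumption~\ref{Ass:AbsSum}(iii) to sum the chaos contributions spectrally, whereas the paper needs no such assumptions there: it controls the remainder at the covariance level, $|\gamma_Z(k)|\le \rho(k)^{j_1}\sum_j g_j^2$, so consistency holds under $L\in\mathcal{L}$ alone, exactly as stated.
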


Note that, contrary to DGH in their Theorem 4 (see {\it Assumption LM} and equation (50) therein), we do not require $\beta=2$ in (\ref{eq:SpecDensGaussian}). As shown in the proof, we do not need this assumption to prove consistency and asymptotic normality of the LW estimator. Moreover, to establish an upper bound on the convergence rate of the LW estimate, i.e. (\ref{eq:WhittleConvRate}), we do not need to assume $m\geq n^{\gamma}$ for some $0<\gamma<1$ as in DGH (see equations (56) and (58) therein). Our trimming conditions on $m$ are weaker than theirs and our asymptotic results hold true even if $m(n)\to \infty$ more slowly than any power of $n$. 
%Therefore, our result is more general than the one in DGH.
% Even if DGH do not explicitly assume Assumption~\ref{Ass:SmoothSpecAlpha} to establish their upper bound on the rate of convergence in their Theorem 4 (see eq (58) therein), they use that assumption to obtain equation (25) in their Proposition 5, which in turn they use, together with their (16) and their Theorem 2, to obtain their (29). Subsequently, they use their equation (29) to prove equation (37) in their Theorem 3 (iii), and  finally they use their (37) to prove equation (58) in their Theorem 4. Therefore, it seems that they do use Assumption~\ref{Ass:SmoothSpecAlpha} to obtain the convergence rate of the LW estimate in their Theorem 4. Note also that: (1) in their Proposition 3 there is a typo: the right assumption is $\Delta_m=O(m/\log^2 m)$; (2) equation (24) in their Proposition 5 is not enough to conclude that $\Delta_m=o(m/\log^2 m)$, they need their equation (25).

Theorem~\ref{thm:Whittle} applies to the discretized process because the rounding is a measurable transformations of the underlying Gaussian process. So, we have the following

\begin{corollary}\label{Cor:Whittle}
Let $\{X(t)\}_{t \in \mathbb{N}}$  be a stationary purely non-deterministic Gaussian process with autocovariance given by Definition \ref{generalDef} and $L \in \mathcal{L}$. Then, the LW estimator $\widehat{H}^{d}_{LW}$ of the discretized process $\{X_d(t)\}_{t \in \mathbb{N}}$  is consistent. 

Moreover, if Assumptions~\ref{Ass:AbsSum}--\ref{Ass:SmoothSpecAlpha} hold and $m=o\left(n^{2\beta/(2\beta+1)}\right)$, where $\beta$ is defined as in Lemma~\ref{Lemma:SpecDensGaussian}, then $\widehat{H}^{d}_{LW}$ satisfies (\ref{eq:WhittleConvRate}) with
\beqnr\nonumber
r =
\begin{cases}
 H-1/2 \qquad &if \;\; H< 5/6\\
 2 (1-H) \qquad &if \;\; H>5/6\\
 1/3 - \varepsilon \qquad &if \;\; H= 5/6
\end{cases}
\eeqnr
for any $\varepsilon>0$.

In addition, if $m=o(n^{2r/(2r+1)})$, then $\widehat{H}^{d}_{LW}$ is asymptotically normal and satisfies (\ref{eq:WhittleAsympNorm}).
\end{corollary}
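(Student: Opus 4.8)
The plan is to observe that rounding is a particular measurable transformation of the underlying Gaussian process and then to verify that it meets every hypothesis of Theorem~\ref{thm:Whittle}; the corollary then follows by reading off the two relevant Hermite indices and substituting them into the theorem. Write $g(x)=\delta\,[x/\delta]$, that is, rounding to the nearest multiple of $\delta$ as encoded in (\ref{eq:pmf}), so that $X_d(t)=g(X(t))$. Since $g$ is Borel measurable, the two things that actually need checking are that $g$ has Hermite rank $j_0=1$ and that the order of its second non-vanishing Hermite coefficient is $j_1=3$.

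First I would check $j_0=1$. With the normalization $H_1(x)=x$, the first Hermite coefficient of $g$ is $g_1=\int_{-\infty}^{\infty} g(x)\,x\,\frac{e^{-x^2/2}}{\sqrt{2\pi}}\,dx=E[g(X)X]$ for $X$ standard Gaussian (see (\ref{gCoeff})). Because $g$ is a non-decreasing, non-constant step function, $E[g(X)X]>0$; equivalently, $g_1^2/D_d$ is the strictly positive asymptotic ratio $\rho_d(k)/\rho(k)$ of Corollary~\ref{CorACF} shown in the right panel of Figure~\ref{varFigureIncr}. Hence $g$ has Hermite rank $1$, and part~(i) of Theorem~\ref{thm:Whittle} immediately gives consistency of $\widehat{H}^{d}_{LW}$.

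Next I would identify $j_1$. Rounding to the nearest grid point is an odd function up to the Lebesgue-null set of tie points, so all even Hermite coefficients of $g$ vanish and its expansion contains only odd polynomials; the coefficient $g_3$ is given in closed form by (\ref{eq:g3}) and is non-zero, so the second non-vanishing Hermite coefficient has order $j_1=3$. Substituting $j_1=3$ into the piecewise definition of $r$ in Theorem~\ref{thm:Whittle}(ii), the condition $j_1(2-2H)\gtrless 1$ becomes exactly $H\lessgtr 5/6$, which produces $r=H-1/2$ for $H<5/6$, $r=(j_1-1)(1-H)=2(1-H)$ for $H>5/6$, and $r=(j_1-1)/(2j_1)-\varepsilon=1/3-\varepsilon$ for $H=5/6$, as claimed. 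Under Assumptions~\ref{Ass:AbsSum}--\ref{Ass:SmoothSpecAlpha}, and with $\beta$ carried over unchanged from Lemma~\ref{Lemma:SpecDensGaussian}, parts~(ii) and~(iii) of Theorem~\ref{thm:Whittle} then deliver (\ref{eq:WhittleConvRate}) and the asymptotic normality (\ref{eq:WhittleAsympNorm}) under the stated trimming conditions on $m$.

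The only genuinely non-routine inputs are the non-vanishing statements $g_1\neq 0$ and $g_3\neq 0$ --- the latter is what pins down $j_1=3$ rather than a larger odd integer --- and both rest on the explicit evaluation of the Hermite coefficients of the rounding map obtained earlier (see (\ref{gCoeff}) and (\ref{eq:g3})). Beyond those, the corollary is a mechanical instantiation of Theorem~\ref{thm:Whittle} for the measurable map $g$, with all hypotheses on the underlying Gaussian process simply inherited, so I do not anticipate a substantive obstacle once those two facts are recorded.
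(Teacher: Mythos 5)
Your proposal is correct and follows essentially the same route as the paper: the paper's own proof simply invokes Theorem~\ref{thm:Whittle} for the (measurable, odd) rounding map, noting that $g_3\neq 0$ so that $j_1=3$, which yields exactly your piecewise $r$ with the $H=5/6$ cutoff. Your additional verification that the Hermite rank is $j_0=1$ (via $g_1>0$, equivalently the explicit theta-function formula (\ref{eq:g1})) is a detail the paper established earlier in the proof of Proposition~\ref{Prop:ACVGen}, so nothing substantive differs.
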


For fGn and fARIMA processes $\beta=2$, and therefore $\hat{H}_Y^{LW} - H = O_{P} \left(m^{-1/2} + \left(\frac{m}{n}\right)^{r}\right)$, with $r\in (0, 1/3)$. Moreover, if $m=n^{\gamma}$ with $\gamma > 0.4$, the term $O_P\left((m/n)^r\right)$ will dominate the term $O_P(m^{-1/2})$. As we discuss below, this consideration is important in order to understand the negative finite sample bias we observe in our numerical simulations.
%%%%%%%

\subsubsection{Numerical simulations} 

We numerically generated the Gaussian process as a fGn with unit variance and we considered two values of the Hurst exponent, namely  $H=0.7$ and $0.85$. This choice of  Hurst exponents is motivated by the need to have values below and above the critical value $H=5/6$. In order to test the dependence on the time series length we considered series of length $n=2^{10}$ and $n=2^{14}$. We then applied four discretization procedures corresponding to discretization parameter $\chi=0.1, 0.25, 0.5$ and the sign of the process. We performed $L=10^3$ simulations in order to obtain the statistical properties of the estimators.  

%Figures~\ref{perCoarse} and \ref{perSign} show examples of the sample periodogram for different time series length, Hurst exponent, and discretization level. As mentioned above, while the periodogram of the fGn is very straight in a log-log scale, the periodogram of the discretized process, as well as the periodogram of the sign process, changes its slope for increasing frequencies. The periodogram is biased upward for increasing frequencies. This suggests that the estimation of the Hurst exponent of a discretized process is likely negatively biased.

Our results are summarized in Tables~\ref{tab:LW07} and \ref{tab:LW085}. The tables show the mean,  standard error, and three quantiles (2.5\%, 50\%, and 97.5\%) of the estimator $\widehat{H}_{LW}$ defined by (\ref{LW:Def}) for four thresholds, namely $m=n^{0.5}$, $n^{0.6}$, $n^{0.7}$~and~$n^{0.8}$. We chose these thresholds to be consistent with DGH.

For the fGn the LW estimate increases with the threshold $m$, with a small negative bias for $m=n^{0.5}$ and a small positive bias for $m=n^{0.8}$. The bias is minimized for values of $m$ between $n^{0.6}$ and $n^{0.7}$, in agreement with the results for fARIMA processes reported in Robinson (1995b) and DGH (see Table I therein). As expected, the absolute bias decreases as we increase the sample size. For $n=2^{14}$, when $H=0.7$ the bias is not statistically significant at the 5\% level for $m=n^{0.5}$, $n^{0.6}$, $n^{0.7}$; when $H=0.85$ it is not statistically significant at the 5\% level for $m=n^{0.5}$. The standard error obviously decreases with $m$.

For both the discretized process and the sign process we observe a strong and statistically significant negative bias for all values of $n$, $m$, and $H$. The bias becomes more severe as the discretization becomes coarser, i.e. as $\delta$ increases. In general, the absolute bias increases with the threshold\footnote{The only exception is for $H=0.7$ in small samples ($n=2^{10}$) and for moderate discretizations ($\chi=0.25$, $0.5$). In this case the LW does change in a statistically significant way across different thresholds.}. %, as expected by Proposition~\ref{prop:PgramBias1} (see below). 
This is because, as $m$ increases, the objective function includes higher-frequency ordinates of the periodogram. From (\ref{eq:LWAsympExpan}) in the proof of Theorem~\ref{thm:Whittle} (see Appendix~\ref{App:Proof2}) we know that the $O_P\left((m/n)^r\right)$ term in the asymptotic expansion of $\widehat{H}_{LW}$ represents the contribution of the sample estimate, taken with the negative sign, of the higher-order components of the spectral density. From Corollary~\ref{Cor:SpecDensRoundfGn} we know that for the discretization of a fGn the second-order term of the spectral density is strictly positive for all $H$. Therefore, intuitively, the term $O_P\left((m/n)^r\right)$ generates the negative finite-sample bias that we observe here. On the other hand, as expected, the standard error decreases as $m$ increases.  
The LW estimator with $m=n^{0.5}$ is less biased by the discretization than the estimator with threshold $m=n^{0.8}$, but it has a larger dispersion and is more noisy. These properties are expected from the bias-variance tradeoff.  Note that for $n=2^{10}$ and $m=n^{0.8}$ the 97.5\% quantile of $\widehat{H}_{LW}$ is smaller than  the true $H$ for $\chi=0.1$ if $H=0.7$, and for both $\chi=0.1$ and the sign process if $H=0.85$. Even more strikingly, for $n=2^{14}$ the 97.5\% quantile of $\widehat{H}_{LW}$ is smaller than the true $H$ for the sign process and for $\chi=0.1$, $0.25$, both for $H=0.7$ and for $H=0.85$.

%\footnote{For $H=0.7$ the 97.5\% quantile for $\chi=0.25$ is actually equal to $0.7025$, which is not strictly smaller than $H$.} 
We also note that the effect of the bias due to the discretization is increasing with the long memory parameter, i.e. the larger is $H$, the larger is the discretization bias. Finally, note that the bias is a finite size effect. 

We ran the same simulations for various fARIMA, and we got similar results.

%%%%% TABLES LW%%%%%%

%%%%% TABLE WHITTLE H=0.7

% latex table generated in R 2.12.2 by xtable 1.5-6 package
% Sun May 27 01:45:21 2012
% latex table generated in R 2.12.2 by xtable 1.5-6 package
% Fri Mar 23 15:05:04 2012
\begin{table}[ht]
\begin{center}
{\fontsize{10}{8} \selectfont
%\scalebox{0.75}{
\begin{tabular}{lrcrrcrrcrrcr}
%\begin{tabular}{lrcrrcrrcr}
  \hline
\multirow{3}{*}{} & & \multicolumn{2}{c}{$m=n^{0.5}$} & & \multicolumn{2}{c}{$m=n^{0.6}$} & & \multicolumn{2}{c}{$m=n^{0.7}$} & & \multicolumn{2}{c}{$m=n^{0.8}$} \\
& & & $2.5\%$ & & & $2.5\%$ & & & $2.5\%$ & & & $2.5\%$\\
& &  Mean($\widehat{H}$) & $50\%$ & & Mean($\widehat{H}$) & $50\%$ & & Mean($\widehat{H}$) & $50\%$ & & Mean($\widehat{H}$) & $50\%$\\
& & (SE($\widehat{H}$)) & $97.5\%$ & & (SE($\widehat{H}$)) & $97.5\%$ & & (SE($\widehat{H}$)) & $97.5\%$ & & (SE($\widehat{H}$)) & $97.5\%$\\
  \hline
  \hline
& \multicolumn{4}{c}{} & & \multicolumn{2}{c}{$n=2^{10}$} & & \multicolumn{4}{c}{}\\
  \hline
  &   &  & 0.4411 &  &  & 0.5097 &  &  & 0.5584 &  &  & 0.5917 \\ 
Sign   &  & 0.6610 & 0.6606 &  & 0.6583 & 0.6584 &  & 0.6578 & 0.6570 &  & 0.6561 & 0.6563 \\ 
   &  & (0.0035) & 0.8670 &  & (0.0023) & 0.7951 &  & (0.0016) & 0.7553 &  & (0.0011) & 0.7219 \\ 
   &  &  & 0.4109 &  &  & 0.4828 &  &  & 0.5323 &  &  & 0.5507 \\ 
  $\chi=0.1$ &  & 0.6350 & 0.6393 &  & 0.6308 & 0.6341 &  & 0.6272 & 0.6287 &  & 0.6223 & 0.6227 \\ 
   &  & (0.0035) & 0.8339 &  & (0.0023) & 0.7646 &  & (0.0015) & 0.7166 &  & (0.0011) & 0.6884 \\ 
   &  &  & 0.4375 &  &  & 0.5069 &  &  & 0.5708 &  &  & 0.6073 \\ 
  $\chi=0.25$ &  & 0.6710 & 0.6757 &  & 0.6709 & 0.6726 &  & 0.6717 & 0.6739 &  & 0.6726 & 0.6730 \\ 
   &  & (0.0036) & 0.8899 &  & (0.0024) & 0.8053 &  & (0.0016) & 0.7644 &  & (0.0010) & 0.7350 \\ 
   &  &  & 0.4625 &  &  & 0.5287 &  &  & 0.5884 &  &  & 0.6266 \\ 
   $\chi=0.5$  &  & 0.6809 & 0.6855 &  & 0.6815 & 0.6842 &  & 0.6861 & 0.6883 &  & 0.6898 & 0.6907 \\ 
   &  & (0.0035) & 0.8894 &  & (0.0023) & 0.8138 &  & (0.0015) & 0.7745 &  & (0.0010) & 0.7555 \\ 
   &  &  & 0.4641 &  &  & 0.5406 &  &  & 0.6050 &  &  & 0.6483 \\ 
   Continuous &  & 0.6905 & 0.6922 &  & 0.6935 & 0.6955 &  & 0.7014 & 0.7034 &  & 0.7113 & 0.7118 \\ 
   &  & (0.0036) & 0.9113 &  & (0.0023) & 0.8300 &  & (0.0015) & 0.7955 &  & (0.0010) & 0.7761 \\ 
  \hline
  \hline
& \multicolumn{4}{c}{} & & \multicolumn{2}{c}{$n=2^{14}$} & & \multicolumn{4}{c}{}\\
  \hline
 &  &  & 0.5773 &  &  & 0.6211 &  &  & 0.6336 &  &  & 0.6409 \\ 
  Sign &  & 0.6814 & 0.6831 &  & 0.6768 & 0.6763 &  & 0.6688 & 0.6691 &  & 0.6606 & 0.6606 \\ 
   &  & (0.0016) & 0.7772 &  & (0.0009) & 0.7331 &  & (0.0005) & 0.7019 &  & (0.0003) & 0.6816 \\ 
   &  &  & 0.5594 &  &  & 0.6014 &  &  & 0.6102 &  &  & 0.6106 \\ 
   $\chi=0.1$ &  & 0.6659 & 0.6691 &  & 0.6567 & 0.6575 &  & 0.6447 & 0.6448 &  & 0.6323 & 0.6327 \\ 
   &  & (0.0016) & 0.7580 &  & (0.0009) & 0.7105 &  & (0.0006) & 0.6791 &  & (0.0003) & 0.6536 \\ 
   &  &  & 0.5811 &  &  & 0.6267 &  &  & 0.6469 &  &  & 0.6544 \\ 
   $\chi=0.25$ &  & 0.6880 & 0.6901 &  & 0.6849 & 0.6853 &  & 0.6802 & 0.6798 &  & 0.6752 & 0.6753 \\ 
   &  & (0.0016) & 0.7816 &  & (0.0009) & 0.7396 &  & (0.0006) & 0.7156 &  & (0.0003) & 0.6961 \\ 
   &  &  & 0.5874 &  &  & 0.6318 &  &  & 0.6561 &  &  & 0.6687 \\ 
   $\chi=0.5$ &  & 0.6928 & 0.6930 &  & 0.6923 & 0.6933 &  & 0.6900 & 0.6895 &  & 0.6887 & 0.6882 \\ 
   &  & (0.0016) & 0.7897 &  & (0.0009) & 0.7469 &  & (0.0005) & 0.7247 &  & (0.0003) & 0.7094 \\ 
   &  &  & 0.5941 &  &  & 0.6406 &  &  & 0.6675 &  &  & 0.6831 \\ 
   Continuous &  & 0.6988 & 0.6993 &  & 0.7000 & 0.7005 &  & 0.7008 & 0.7006 &  & 0.7043 & 0.7044 \\ 
   &  & (0.0016) & 0.7927 &  & (0.0009) & 0.7555 &  & (0.0006) & 0.7344 &  & (0.0003) & 0.7253 \\ 
   \hline
\end{tabular}
}
\caption{Estimation of the Hurst exponent of a fGn with $H=0.7$ and its discretization with the sign function and with $\chi=0.1$, $0.25$, and $0.5$. The time series has length $2^{10}$ (top) and $2^{14}$ (bottom). The estimation of the Hurst exponent is obtained with the Local Whittle estimator by maximizing the objective function (\ref{LW:Def}) (see text). The table reports the mean value of the estimator $\widehat{H}$ over $10^3$ numerical simulations, together with the standard error (SE) and the $2.5\%$, $50\%$, and $97.5\%$ percentile.}
\label{tab:LW07}
\end{center}
\end{table}

%%%%% TABLE WHITTLE H=0.85 %%%%%%

% latex table generated in R 2.12.2 by xtable 1.5-6 package
% Sun May 27 01:31:37 2012
\begin{table}[ht]
\begin{center}
{\fontsize{10}{8} \selectfont
\begin{tabular}{lrcrrcrrcrrcr}
  \hline
\multirow{3}{*}{} & & \multicolumn{2}{c}{$m=n^{0.5}$} & & \multicolumn{2}{c}{$m=n^{0.6}$} & & \multicolumn{2}{c}{$m=n^{0.7}$} & & \multicolumn{2}{c}{$m=n^{0.8}$} \\
& & & $2.5\%$ & & & $2.5\%$ & & & $2.5\%$ & & & $2.5\%$\\
& &  Mean($\widehat{H}$) & $50\%$ & & Mean($\widehat{H}$) & $50\%$ & & Mean($\widehat{H}$) & $50\%$ & & Mean($\widehat{H}$) & $50\%$\\
& & (SE($\widehat{H}$)) & $97.5\%$ & & (SE($\widehat{H}$)) & $97.5\%$ & & (SE($\widehat{H}$)) & $97.5\%$ & & (SE($\widehat{H}$)) & $97.5\%$\\
  \hline
  \hline
& \multicolumn{4}{c}{} & & \multicolumn{2}{c}{$n=2^{10}$} & & \multicolumn{4}{c}{}\\
  \hline
 &  &  & 0.5706 &  &  & 0.6519 &  &  & 0.6872 &  &  & 0.7056 \\ 
  Sign  &  & 0.8109 & 0.8179 &  & 0.8018 & 0.8029 &  & 0.7903 & 0.7924 &  & 0.7776 & 0.7783 \\ 
    &  & (0.0036) & 1.0294 &  & (0.0023) & 0.9474 &  & (0.0015) & 0.8801 &  & (0.0011) & 0.8457 \\ 
    &  &  & 0.5228 &  &  & 0.5928 &  &  & 0.6281 &  &  & 0.6362 \\ 
   $\chi=0.1$ &  & 0.7671 & 0.7681 &  & 0.7525 & 0.7530 &  & 0.7397 & 0.7394 &  & 0.7268 & 0.7268 \\ 
    &  & (0.0038) & 1.0013 &  & (0.0026) & 0.9072 &  & (0.0018) & 0.8495 &  & (0.0015) & 0.8122 \\ 
    &  &  & 0.5797 &  &  & 0.6515 &  &  & 0.7118 &  &  & 0.7312 \\ 
   $\chi=0.25$ &  & 0.8271 & 0.8354 &  & 0.8183 & 0.8220 &  & 0.8092 & 0.8091 &  & 0.7984 & 0.7993 \\ 
    &  & (0.0036) & 1.0488 &  & (0.0024) & 0.9541 &  & (0.0015) & 0.9043 &  & (0.0011) & 0.8639 \\ 
    &  &  & 0.6008 &  &  & 0.6737 &  &  & 0.7365 &  &  & 0.7612 \\ 
    $\chi=0.5$ &  & 0.8350 & 0.8423 &  & 0.8338 & 0.8389 &  & 0.8324 & 0.8332 &  & 0.8288 & 0.8302 \\ 
    &  & (0.0037) & 1.0538 &  & (0.0024) & 0.9739 &  & (0.0015) & 0.9228 &  & (0.0011) & 0.8919 \\ 
    &  &  & 0.6041 &  &  & 0.6930 &  &  & 0.7612 &  &  & 0.8004 \\ 
    Continuous &  & 0.8468 & 0.8554 &  & 0.8507 & 0.8538 &  & 0.8576 & 0.8591 &  & 0.8694 & 0.8698 \\ 
    &  & (0.0036) & 1.0568 &  & (0.0024) & 0.9952 &  & (0.0015) & 0.9484 &  & (0.0011) & 0.9338 \\ 
  \hline
  \hline
& \multicolumn{4}{c}{} & & \multicolumn{2}{c}{$n=2^{14}$} & & \multicolumn{4}{c}{}\\
  \hline
   &  &  & 0.7358 &  &  & 0.7632 &  &  & 0.7720 &  &  & 0.7675 \\ 
  Sign &  & 0.8331 & 0.8353 &  & 0.8242 & 0.8242 &  & 0.8077 & 0.8086 &  & 0.7887 & 0.7886 \\ 
   &  & (0.0015) & 0.9206 &  & (0.0009) & 0.8783 &  & (0.0006) & 0.8398 &  & (0.0003) & 0.8098 \\ 
   &  &  & 0.7153 &  &  & 0.7350 &  &  & 0.7317 &  &  & 0.7193 \\ 
  $\chi=0.1$  &  & 0.8143 & 0.8131 &  & 0.7971 & 0.7965 &  & 0.7735 & 0.7727 &  & 0.7484 & 0.7484 \\ 
   &  & (0.0016) & 0.9137 &  & (0.0010) & 0.8563 &  & (0.0007) & 0.8156 &  & (0.0005) & 0.7786 \\ 
   &  &  & 0.7447 &  &  & 0.7797 &  &  & 0.7898 &  &  & 0.7878 \\ 
  $\chi=0.25$ &  & 0.8425 & 0.8430 &  & 0.8374 & 0.8379 &  & 0.8252 & 0.8258 &  & 0.8092 & 0.8094 \\ 
   &  & (0.0015) & 0.9338 &  & (0.0009) & 0.8900 &  & (0.0006) & 0.8584 &  & (0.0003) & 0.8302 \\ 
   &  &  & 0.7486 &  &  & 0.7840 &  &  & 0.8028 &  &  & 0.8117 \\ 
  $\chi=0.5$ &  & 0.8464 & 0.8478 &  & 0.8450 & 0.8449 &  & 0.8382 & 0.8389 &  & 0.8309 & 0.8308 \\ 
   &  & (0.0015) & 0.9414 &  & (0.0009) & 0.8994 &  & (0.0006) & 0.8703 &  & (0.0003) & 0.8514 \\ 
   &  &  & 0.7497 &  &  & 0.7916 &  &  & 0.8158 &  &  & 0.8368 \\ 
  Continuous  &  & 0.8503 & 0.8506 &  & 0.8518 & 0.8514 &  & 0.8516 & 0.8519 &  & 0.8565 & 0.8569 \\ 
   &  & (0.0015) & 0.9427 &  & (0.0009) & 0.9057 &  & (0.0006) & 0.8856 &  & (0.0003) & 0.8770 \\ 
   \hline
\end{tabular}
}
\caption{Estimation of the Hurst exponent of a fGn with $H=0.85$ and its discretization with the sign function and with $\chi=0.1$, $0.25$, and $0.5$. The time series has length $2^{10}$ (top) and $2^{14}$ (bottom). The estimation of the Hurst exponent is obtained with the Local Whittle estimator by maximizing the objective function (\ref{LW:Def}) (see text). The table reports the mean value of the estimator $\widehat{H}$ over $10^3$ numerical simulations, together with the standard error (SE) and the $2.5\%$, $50\%$, and $97.5\%$ percentile.}
\label{tab:LW085}
\end{center}
\end{table}

%%%%%%%%%%%%%%%%%%%%%%%%

%%%%%%%%%%%%%%%%%%%%%%%% DFA %%%%%%%%%%%%%%%%%%%%%%%%%

\subsection{Detrended Fluctuation Analysis}\label{DFA}

\subsubsection{Definition and notation}

We now consider the Detrended Fluctuation Analysis (DFA) (see Peng {\it et al.} (1994)), a method to investigate the properties of a long memory process and to estimate the Hurst exponent. This method was introduced more than fifteen years ago to investigate physiological data, in particular the heartbeat signal. Since its introduction it has been applied to a large variety of systems, including physical, biological, economic, and technological data. Some examples of its application to economics and financial time series can be found in Schmitt {\it et al.} (2000), Lillo and Farmer (2004), Di Matteo {\it et al.} (2005), Yamasaki {\it et al.} (2005), and Alfarano and Lux (2007). The idea is to consider the integrated process and detrend it locally. The scaling of the fluctuations of the residuals as a function of the box size in which the regression is performed gives the estimate of the Hurst exponent. 
More precisely, let $\{X(t)\}$, $t=1, \ldots, n$, be a finite sample from a process $\{X(t)\}_{t\in\mathbb{N}}$ and denote the discrete integration of this sample as 
\beq
Y(k) = \sum_{t=1}^{k} X(t)\nonumber
\eeq

The integrated time series is divided into $[n/m]$ boxes of equal length $m$, where $[z]$ is the integer part of $z$. 
%For $j \in \{ 1, \ldots, [n/m]\}$, let us define the vector:
%
%\beq
%Y^{(j)} = (Y(1 + m(j - 1)), \ldots,  Y(mj))'
%\eeq
%
%Let $E_j$ be the vector subspace of $\mathbb{R}^m$ generated by the two vectors of $\mathbb{R}^m$, $e_1 = (1, \ldots, 1)'$ and $e_2 = (1+(j-1)m, 2+(j-1)m \ldots,  mj)'$, and $E_j^{\perp}$ its orthogonal vector subspace. Finally, let us define $P_A$ the matrix of the orthogonal projection on a vector subspace $A$ of $\mathbb{R}^m$.
In each box a least squares line is fit to the data (representing the trend in that box).  The $y$-coordinate of the straight line segments is denoted by $\widehat{Y}_m(k)$.  Next, one detrends the integrated time series, $Y(k)$, by subtracting the local trend, $\widehat{Y}_m(k)$, in each box.  For any given box size $m$, the root-mean-square fluctuation (or, simply, fluctuation function) of this integrated and detrended time series is calculated by
\begin{equation}\label{eq:DFAfunctionDef}
F(m)=\sqrt{\frac{1}{m \cdot [n/m]}\sum_{k=1}^{m \cdot [n/m]} [Y(k)-\widehat{Y}_m(k)]^2}.
\end{equation}
This computation is repeated over all time scales (box sizes) to characterize the relationship between $F(m)$ and the box size $m$.  Typically $F(m)$ increases with box size $m$. For a long memory process the Hurst exponent is given by the relation  $F(m)\propto m^H$. Therefore $H$ is estimated  by performing a log-regression of  $F(m)$ versus $m$. The proposers of this method claim that DFA is able to perform well also in the presence of non-stationarities, such as trends, (see Peng {\it et al.} (1994)), even if some recent results dispute this claim (see Bardet and Kammoun (2008)).

The partial DFA function computed in the $j$-th window of size $m$ is
\beq\nonumber
F_j^2(m)=\frac{1}{m} \sum_{k = m (j-1) + 1}^{m j} (Y(k) - \widehat{Y}_m(k))^2 
\eeq
for $j\in \{1, \ldots, [n/m]\}$. Then, we can write
\beq
F^2(m)=\frac{1}{[n/m]} \sum_{j = 1}^{[n/m]} F_j^2(m)
\eeq

Recently, Bardet and Kammoun (2008) (Lemma 2.2) showed that for a stationary process  the series $\{F_j(m)\}_{1\leq j \leq[n/m]}$ is a stationary process for all $m$.  This means that in order to study the asymptotic statistical properties of the DFA we can focus on $F_1^2(m)$ only.
%If $x(i)$ is a Fractional Gaussian Noise (FGN), with Hurst exponent $H$, the following property holds (see Bardet and Kammoun (2008)) 
%\textit{Property 3.1}, give the following asymptotic properties for $F_1^2(m)$
%
%\beqnr
%\mathbb{E} (F_1^2(m)) &=& D f(H) m^{2 H} (1 + O(m^{-1}))
%\eeqnr
%
They then used this result to provide an asymptotic expression (Theorem 4.2) for $\mathbb{E} \left[F_1^2(m)\right]$ for a general class of Gaussian long memory processes. Here we extend this result to non-Gaussian long memory processes (see Theorem \ref{ThmExpF} below). In doing that we show that the order of the correction of the asymptotic expansion given in Bardet and Kammoun (2008)  for generic Gaussian processes (Theorem 4.2) is incorrect and we provide the correct order. For the special case of fGn and fARIMA the result provided in Bardet and Kammoun (2008) is correct because the prefactor of the correction term observed for generic long memory processes cancels out exactly.    

\subsubsection{A theorem on the detrended fluctuation analysis of a general long memory process}

Here we generalize Theorem 4.2 of  Bardet and Kammoun (2008) to a general class of non-Gaussian long memory processes. The discretized process belongs to this class. 

%We prove the following:
%
\begin{theorem}\label{ThmExpF}
Let  $\{X(t)\}_{t \in \mathbb{N}}$ be a weakly stationary long memory process, with zero mean, finite variance, and the following autocovariance function 
\beq
\gamma(k)=A k^{2H-2}\left(1 + O\left(k^{-\beta}\right)\right) \quad as \;\; k\rightarrow\infty\,,
\eeq
with $A>0$, $H\in(0.5,1)$, and $\beta>0$. Then, for large $m$
\beqnr
\mathbb{E} \left[F_1^2(m)\right] = A' f(H) m^{2 H}
\begin{cases}
\left(1 + O\left(m^{-\min(2H-1, \beta)}\right)\right) & \; if \;\; \beta\neq2H-1\\
\left(1 + O\left(m^{1-2H} \ln m\right)\right) & \; if \;\; \beta=2H-1
\end{cases}
\eeqnr
where 
$A'=\frac{A}{H (2H-1)}$, and  
$f(H)=\frac{1 - H}{(1 + H) (2 + H) (1 + 2 H)}$.
\end{theorem}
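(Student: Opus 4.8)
The plan is to exploit that $F_1^2(m)$ is a quadratic form in $(Y(1),\dots,Y(m))$, hence in $(X(1),\dots,X(m))$, so that $\mathbb{E}[F_1^2(m)]$ depends on the process only through its autocovariance $\gamma$; in particular Gaussianity plays no role, which is exactly why one can drop the Gaussian hypothesis of Bardet and Kammoun (2008). Writing $\widehat Y_m$ for the ordinary least squares fit of $\{Y(k)\}_{k=1}^m$ against $\{1,k\}_{k=1}^m$, I would use the standard residual decomposition $mF_1^2(m)=\sum_k Y(k)^2-\tfrac1m(\sum_kY(k))^2-\tfrac1{S_m}(\sum_k(k-\bar k)Y(k))^2$, with $\bar k=\tfrac{m+1}2$ and $S_m=\sum_{k=1}^m(k-\bar k)^2=\tfrac{m(m^2-1)}{12}$, then take expectations, substitute $\operatorname{Cov}(Y(k),Y(l))=\tfrac12\big(v(k)+v(l)-v(|k-l|)\big)$ with $v(n):=\operatorname{Var}(\sum_{t=1}^nX(t))$, and observe that the $v(k),v(l)$ pieces cancel in the first two terms and vanish in the third (since $\sum_k(k-\bar k)=0$). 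This gives the exact identity
\[
m\,\mathbb{E}[F_1^2(m)]=\frac1{2m}\sum_{k,l=1}^m v(|k-l|)+\frac1{2S_m}\sum_{k,l=1}^m(k-\bar k)(l-\bar k)\,v(|k-l|),
\]
so everything reduces to the behaviour of $v(n)$.

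The first substantive step is a lemma on $v(n)$: I would show $v(n)=A'n^{2H}+\rho(n)$ with $A'=A/(H(2H-1))$, where $|\rho(n)|\le Cn^{2H-\min(2H-1,\beta)}$ when $\beta\neq 2H-1$ and $|\rho(n)|\le Cn\ln(n+1)$ when $\beta=2H-1$. Starting from $v(n)=n\gamma(0)+2\sum_{k=1}^{n-1}(n-k)\gamma(k)$: the term $n\gamma(0)$ is $O(n)$ and is absorbed into the error since the exponent $2H-\min(2H-1,\beta)$ is always $\ge 1$; the main part $2A\sum_{k=1}^{n-1}(n-k)k^{2H-2}$ is compared with $2A\int_0^n(n-x)x^{2H-2}dx=A'n^{2H}$, the Euler--Maclaurin remainder being $O(n)$; and the contribution of the relative correction $O(k^{-\beta})$ to $\gamma$ is estimated termwise, distinguishing the sign of $\beta-(2H-1)$, the boundary case $\beta=2H-1$ being the one that produces the logarithm.

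Then I would substitute $v(n)=A'n^{2H}+\rho(n)$ into the identity and split by linearity. For the $A'n^{2H}$ part, Euler--Maclaurin/Riemann approximation of the two double sums gives the contribution
\[
A'\!\left(\frac1{2m}\cdot\frac{m^{2H+2}}{(2H+1)(H+1)}+\frac1{2S_m}\cdot m^{2H+4}J\right)=A'\,f(H)\,m^{2H+1}\big(1+O(m^{-1})\big),
\]
where $J=\int_0^1\!\int_0^1(s-\tfrac12)(t-\tfrac12)|s-t|^{2H}\,ds\,dt$; the change of variables $(s,t)\mapsto(s+t,s-t)$ collapses $J$ to $\tfrac16\int_0^1 u^{2H}(1-3u+2u^3)\,du=-\tfrac{H}{4(2H+1)(H+1)(H+2)}$, whence $f(H)=\tfrac1{2(2H+1)(H+1)}+6J=\tfrac{1-H}{(1+H)(2+H)(1+2H)}$. (Equivalently, $Y$ behaves like a fractional Brownian motion with variance $A'u^{2H}$, so that $\mathbb{E}[F_1^2(m)]\approx A'm^{2H}\,\mathbb{E}\int_0^1(B_H(u)-\widehat{B_H}(u))^2\,du=A'f(H)m^{2H}$.) For the $\rho$ part, the crude bounds $|k-\bar k|<m/2$, $S_m\ge m^3/13$, and $\sum_{k,l=1}^m|\rho(|k-l|)|\le 2m\sum_{d=0}^{m-1}|\rho(d)|$ show its contribution to $m\,\mathbb{E}[F_1^2(m)]$ is $O\!\big(\sum_{d=0}^{m-1}|\rho(d)|\big)$, which is $O(m^{2H+1-\beta})$, $O(m^2)$, or $O(m^2\ln m)$ according as $\beta<2H-1$, $\beta>2H-1$, or $\beta=2H-1$. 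Dividing the identity by $m$ would then give exactly $\mathbb{E}[F_1^2(m)]=A'f(H)m^{2H}(1+O(m^{-\min(2H-1,\beta)}))$ (with the logarithmic variant when $\beta=2H-1$); the Euler--Maclaurin remainders from the leading term and the replacement $S_m\to m^3/12$ are $O(m^{2H-1})$ and are subsumed.

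The hard part will be the leading-term computation: the two double sums are each of order $m^{2H+1}$, and the cancellation producing the lower-order combination $A'f(H)m^{2H}$ has to be carried out precisely enough to pin down $f(H)$ — this is exactly the step where, as the paper points out, the analogous second-order computation in Bardet and Kammoun (2008) is correct only because of a cancellation special to the fGn/fARIMA autocovariances. The error estimate, by contrast, is robust, needing only the uniform size of the two weighted kernels and no further cancellation.
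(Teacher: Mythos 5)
Your argument is correct, and its skeleton is the same as the paper's: because $F_1^2(m)$ is a quadratic form in $(Y(1),\dots,Y(m))$, only second moments enter; the covariance of the partial sums is fBm-like, $v(n)=A'n^{2H}$ plus an error of the stated order (this is exactly the content of the paper's Proposition on $\Sigma_m$, proved there by the same Euler--Maclaurin estimates of $\sum_{k\le n}(n-k)k^{2H-2-\beta}$ that you sketch); and $f(H)$ comes out of a Riemann-sum limit of a Beta-type double integral. Where you genuinely differ is the algebraic organization. The paper writes $\mathbb{E}[F_1^2(m)]=\frac1m\left(\mathrm{Tr}\,\Sigma_m-\mathrm{Tr}(P_{E_1}\Sigma_m)\right)$ with the explicit projection kernel $\frac{2}{m(m-1)}\left((2m+1)-3(i+j)+\frac{6ij}{m+1}\right)$ and evaluates $\int_0^1\!\int_0^1(2-3(x+y)+6xy)\left(x^{2H}+y^{2H}-|x-y|^{2H}\right)dx\,dy$, whereas your exact OLS-residual identity uses the vanishing of the regression weights to kill the $v(k)+v(l)$ pieces, so both the constant and the error flow through the single kernel $v(|k-l|)$: the constant reduces to $\frac{1}{2(2H+1)(H+1)}+6J$ with $J=-\frac{H}{4(2H+1)(H+1)(H+2)}$ (I checked both your evaluation of $J$ and that this combination equals $f(H)$), and the error is controlled by $\sum_{d<m}|\rho(d)|$, which reproduces the three rates for $\beta<2H-1$, $\beta>2H-1$, $\beta=2H-1$ exactly. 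Your route buys cleaner error bookkeeping and a smaller integral; the paper's route produces the entrywise asymptotics of $\Sigma_m$ as a standalone proposition, which it reuses. One small inaccuracy in your closing comment: there is no cancellation of the leading power --- in your identity for $m\,\mathbb{E}[F_1^2(m)]$ both double sums and their combination are of order $m^{2H+1}$, and $f(H)>0$ for $H<1$ (it only degenerates as $H\to1$); all that is needed is an $O(m^{-1})$-accurate Riemann approximation of each sum, which is harmless since $\min(2H-1,\beta)<1$, so this slip does not affect the proof.
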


 First of all, note that the class of processes for which one can apply this theorem is more general than that of Definition \ref{wellbehavedDef}, but clearly less general than that of Definition \ref{generalDef}. Second, our theorem gives a different order of the correction compared to the one given in Theorem 4.2 of Bardet and Kammoun (2008). In fact, Bardet and Kammoun give $O(m^{-\min(1,\beta)})$, while our theorem gives $O(m^{-\min(2H-1,\beta)})$. In many cases of interest (for example when $\beta>1$) the leading error term is $O(m^{1-2H})$ rather than $O(m^{-1})$ as stated in Bardet and Kammoun (2008).
 
For fGn or fARIMA$(0, d, 0)$ we observe a specific behavior. In fact, in these cases the error term is $O(m^{-1})$, instead of $O(m^{-\min(2H-1, \beta)})$, because the prefactor of the terms of order $O(m^{1-2H})$ cancels out exactly, $\beta=2$ and the next sub-leading term is indeed $O(m^{-1})$. This depends on the specific functional form of the autocovariance of these processes. Therefore, as stated in Bardet and Kammoun (2008) Property 3.1, we have: 
\begin{corollary}\label{DFACorFGN}
For a fractional Gaussian noise (fGn) with variance $D$,
\beq
\mathbb{E} \left[F_1^2(m)\right] = D f(H) m^{2 H} (1 + O(m^{-1}))\,.
\eeq
\end{corollary}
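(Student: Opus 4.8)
The plan is to bypass the generic error bound of Theorem~\ref{ThmExpF} and instead exploit the \emph{exact} self-similarity that distinguishes the fGn from a generic long-memory process. For a fGn with variance $D$ the partial sums $Y(k)=\sum_{t=1}^{k}X(t)$ form a fractional Brownian motion sampled at the integers: a short summation-by-parts applied to the autocorrelation~(\ref{eq:fgnACF}) gives $\mathrm{Var}\bigl(Y(k)\bigr)=D\,k^{2H}$ exactly, and stationarity of the increments then yields $\mathrm{Cov}\bigl(Y(i),Y(j)\bigr)=\tfrac{D}{2}\bigl(i^{2H}+j^{2H}-|i-j|^{2H}\bigr)$. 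In particular the finite-dimensional distributions obey $\bigl(Y(ma_1),\dots,Y(ma_p)\bigr)\overset{d}{=}m^{H}\bigl(Y(a_1),\dots,Y(a_p)\bigr)$ for all positive integers $a_i$, and it is precisely this identity that will make the sub-leading $O(m^{1-2H})$ term of Theorem~\ref{ThmExpF} vanish.

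I would then rescale the first box. Since $\bigl(Y(1),\dots,Y(m)\bigr)\overset{d}{=}m^{H}\bigl(\widetilde B(1/m),\dots,\widetilde B(m/m)\bigr)$, where $\widetilde B$ denotes an fBm on $[0,1]$, and since the least-squares residual $e(k)=Y(k)-\widehat{Y}_m(k)$ is a scale-covariant linear functional of $\bigl(Y(1),\dots,Y(m)\bigr)$ — replacing the abscissae $k$ by $k/m$ leaves the span of the regressors $\{1,k\}$ unchanged — one obtains $e(k)\overset{d}{=}m^{H}\,\widetilde e(k/m)$, where $\widetilde e(s)$ is the residual at abscissa $s$ of the OLS detrending of the sample $\{\widetilde B(j/m)\}_{j=1}^{m}$. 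Hence
\[
\mathbb{E}\bigl[F_1^2(m)\bigr]=m^{2H}\cdot\frac1m\sum_{k=1}^{m}\mathbb{E}\bigl[\widetilde e(k/m)^2\bigr].
\]

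It then remains to show that $\frac1m\sum_{k=1}^{m}\mathbb{E}\bigl[\widetilde e(k/m)^2\bigr]=C+O(m^{-1})$ for a positive constant $C$. This is a discrete-to-continuous approximation: the discrete OLS coefficients solve normal equations built from $\tfrac1m\sum_{k}(k/m)^{a}$ and $\tfrac1m\sum_{k}(k/m)^{a}\widetilde B(k/m)$, which converge to their integral analogues with relative error $O(m^{-1})$, so that $\mathbb{E}\bigl[\widetilde e(k/m)^2\bigr]=h(k/m)+O(m^{-1})$ uniformly in $k$, where $h(s)=\mathbb{E}\bigl[(\widetilde B(s)-a^{*}-b^{*}s)^2\bigr]$ with $a^{*},b^{*}$ the continuous-time OLS coefficients; the Riemann sum $\tfrac1m\sum_k h(k/m)$ then differs from $\int_0^1 h$ by $O(m^{-1})$. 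The only delicate point is the mild non-smoothness of $h$ at the origin: $h$ is $C^1$ on $[0,1]$ because its single non-$C^2$ constituent $\tfrac{D}{2}s^{2H}$ has derivative vanishing at $0$, and in any event the first $O(1)$ ordinates contribute only $O(m^{-1-2H})$, which is negligible. Finally $C$ is identified with $D\,f(H)$ by matching the leading term with Theorem~\ref{ThmExpF} (for the fGn $A=D\,H(2H-1)$, so $A'=D$), and since $\beta=2$ for the fGn there is no further correction above $O(m^{-1})$; this yields $\mathbb{E}\bigl[F_1^2(m)\bigr]=D\,f(H)\,m^{2H}\bigl(1+O(m^{-1})\bigr)$.

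The main obstacle is the uniform $O(m^{-1})$ control in the discrete-to-continuous passage, given that the fBm covariance is only $C^1$ (not $C^2$) at the origin; treating the first few ordinates separately, as indicated above, handles it. An alternative route, closer to the wording of the text, bypasses self-similarity: one revisits the proof of Theorem~\ref{ThmExpF} and keeps the coefficient of the $O(m)$ correction term — which is $O(m^{1-2H})$ relative to the leading $O(m^{2H})$ term — explicit rather than absorbing it into the $O(\cdot)$; writing $\gamma(k)=\tfrac{D}{2}\bigl[(k+1)^{2H}+(k-1)^{2H}-2k^{2H}\bigr]$ as an exact second difference collapses the telescoping sums defining this coefficient and shows that it vanishes, the exact self-similarity being the conceptual reason behind the cancellation. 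The identical argument applies to the fARIMA$(0,d,0)$ process through~(\ref{eq:farimaACF}) (again $\beta=2$), which is why both processes recover the $O(m^{-1})$ rate stated in Bardet and Kammoun (2008).
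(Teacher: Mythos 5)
Your second, ``alternative'' route is in substance the paper's own proof: the paper omits the computation and points to Bardet and Kammoun (2008, Property 3.1), whose argument is exactly what you sketch --- rerun the proof of Theorem~\ref{ThmExpF} with the exact fGn autocovariance. Because $\rho(k)$ in (\ref{eq:fgnACF}) is an exact second difference of $k^{2H}$, the sums telescope, $\mathrm{Cov}(Y(i),Y(j))=\frac{D}{2}\left(i^{2H}+j^{2H}-|i-j|^{2H}\right)$ holds with no error term, so the $O(m^{1-2H})$ corrections of Proposition~\ref{PropSigma} are absent and only the $O(m^{-1})$ sum-to-integral errors in the trace computation survive. (Your aside that the ``identical argument'' covers fARIMA$(0,d,0)$ is not literally correct --- its autocovariance is not an exact second difference, and the cancellation there has to be checked separately --- but that lies outside the statement being proved.)

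Your main route (exact self-similarity of the partial sums, scale covariance of the OLS residual, a discrete-to-continuous passage, and identification of the constant by matching the leading term of Theorem~\ref{ThmExpF} with $A=DH(2H-1)$, $\beta=2$) is a genuinely different and viable argument, but its central quantitative step is mis-justified as written. The claim that $\frac1m\sum_k (k/m)^a\widetilde B(k/m)$ converges to $\int_0^1 s^a\widetilde B(s)\,ds$ with error $O(m^{-1})$ is false: pathwise and in $L^2$ that error is only $O(m^{-H})$, being governed by the H\"older regularity of the fBm path, so the asserted uniform bound $\mathbb{E}[\widetilde e(k/m)^2]=h(k/m)+O(m^{-1})$ does not follow from the stated reason; a Cauchy--Schwarz bound on the cross term between the continuous-fit residual and the coefficient error only gives $O(m^{-H})$, which would spoil the $O(m^{-1})$ rate. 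The step is repairable because every quantity you need is the expectation of a quadratic form in a Gaussian vector: take expectations first, so that the discrete-versus-continuous discrepancies become Riemann-sum errors of deterministic functions built from the kernel $\frac{D}{2}\left(s^{2H}+t^{2H}-|s-t|^{2H}\right)$, which is $C^1$ for $H>1/2$; this yields the cross term $O(m^{-1})$ uniformly, while the purely quadratic term in the coefficient error is $O(m^{-2H})=o(m^{-1})$. (Alternatively, sandwich the discrete minimized sum of squares between the two objective functions evaluated at the other problem's optimizer and use orthogonality of the continuous residual to the regressors.) Two minor slips do not affect the conclusion: the first $O(1)$ ordinates contribute $O(m^{-1})$, not $O(m^{-1-2H})$, to the average, since $\mathbb{E}[\widetilde e(k/m)^2]$ stays of order one near the origin (the residual there is essentially minus the fitted intercept); and the identification $C=Df(H)$ by matching with Theorem~\ref{ThmExpF} is legitimate, since that theorem is proved independently in the paper.
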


The general Theorem \ref{ThmExpF} gives the asymptotic properties of the fluctuation function for a time series obtained by discretizing or taking the sign of a Gaussian process. Specifically, for discretized processes we have the following

\begin{corollary}\label{DFACorDiscrete}
For the discretization of a Gaussian time series with the properties of Theorem \ref{ThmExpF}, for large $m$ 
\beqnr
\mathbb{E} \left[F_1^2(m)\right] = \left(\frac{\vartheta_2(0,e^{-1/2\chi})}{\sqrt{2\pi \chi}}\right)^2 A' f(H) m^{2 H}
\begin{cases}
\left (1 + O\left(m^{1-2 H}\right)\right) & \; if \;\; H\in (1/2, 5/6)\\
\left(1 + O\left(m^{-2/3}\ln m\right)\right) & \; if \;\; H=5/6\\
\left(1 + O\left(m^{4 H- 4}\right)\right) & \; if \;\; H\in (5/6, 1)
\end{cases}
\eeqnr
where $\vartheta_a(u, q)$ is the elliptic theta function, and $A'$ and $f(H)$ are defined as in Theorem~\ref{ThmExpF}.
\end{corollary}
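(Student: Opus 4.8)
The plan is to turn Corollary~\ref{DFACorDiscrete} into a one‑line application of Theorem~\ref{ThmExpF} by first casting the autocovariance of the discretized process into the form required by that theorem, using Proposition~\ref{Prop:ACVGen}. By hypothesis the underlying Gaussian process has $\gamma(k)=A\,k^{2H-2}\bigl(1+O(k^{-\beta})\bigr)$ with $A>0$, $H\in(1/2,1)$, $\beta>0$; in the language of Definition~\ref{generalDef} this is $\gamma(k)=k^{2H-2}L(k)$ with the slowly varying function $L(k)=A\bigl(1+O(k^{-\beta})\bigr)$, which converges to $A$ and is in particular bounded. Hence Definition~\ref{generalDef} holds and Proposition~\ref{Prop:ACVGen} applies to $\{X_d(t)\}$. (If one is uneasy that $\{X_d(t)\}$ need not be exactly centred, note that $F(m)$ is unchanged when a constant is added to $X$: the integrated series then acquires only a linear term, which the local linear detrending removes, so centring is without loss of generality.)

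First I would substitute $L(k)=A\bigl(1+O(k^{-\beta})\bigr)$ and $L(k)^2=O(1)$ into the conclusion of Proposition~\ref{Prop:ACVGen}. Collecting the two correction terms (the $k^{-\beta}$ carried over from the underlying process and the $k^{4H-4}$ produced by the discretization, the cross term being negligible), this gives
\[
\gamma_d(k) = \left(\frac{\vartheta_2(0,e^{-1/2\chi})}{\sqrt{2\pi\chi}}\right)^2 A\, k^{2H-2}\left(1 + O\!\left(k^{-\min(\beta,\,4-4H)}\right)\right) \qquad \text{as } k\to\infty .
\]
Thus $\{X_d(t)\}$ itself satisfies the hypothesis of Theorem~\ref{ThmExpF}, with constant $A_d=(\vartheta_2(0,e^{-1/2\chi})/\sqrt{2\pi\chi})^2 A$ and second‑order exponent $\beta_d=\min(\beta,4-4H)>0$.

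Next I would apply Theorem~\ref{ThmExpF} to $\{X_d(t)\}$ with parameters $(A_d,H,\beta_d)$. Since $A_d'=A_d/\bigl(H(2H-1)\bigr)=(\vartheta_2(0,e^{-1/2\chi})/\sqrt{2\pi\chi})^2 A'$, the prefactor becomes exactly $(\vartheta_2(0,e^{-1/2\chi})/\sqrt{2\pi\chi})^2 A'f(H)m^{2H}$, as in the Corollary. It remains to resolve the error term $O\bigl(m^{-\min(2H-1,\beta_d)}\bigr)$ (or its logarithmic variant when $\beta_d=2H-1$) into the three stated regimes. The key elementary fact is that $2H-1$ and $4-4H$ cross exactly at $H=5/6$, where both equal $2/3$: for $H<5/6$ one has $2H-1<4-4H$, for $H>5/6$ one has $4-4H<2H-1$. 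For the long‑memory processes at stake — in particular fGn and fARIMA, for which $\beta=2$, and more generally whenever $\beta\geq 2/3$, so that $\beta$ never attains the minimum — one gets $\min(2H-1,\beta_d)=2H-1$ with $\beta_d\neq 2H-1$ for $H\in(1/2,5/6)$, hence error $O(m^{1-2H})$; $\beta_d=2H-1=2/3$ for $H=5/6$, which is precisely the equality branch of Theorem~\ref{ThmExpF} and hence gives $O(m^{-2/3}\ln m)$; and $\min(2H-1,\beta_d)=4-4H$ with $\beta_d\neq 2H-1$ for $H\in(5/6,1)$, hence error $O(m^{4H-4})$. This is exactly the trichotomy in the statement.

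The step requiring the most care is the bookkeeping just described: correctly identifying which of the three competing corrections — the $k^{-\beta}$ term from the underlying Gaussian process, the $k^{4H-4}$ term produced by discretization (which, in the Hermite picture, is the cubic term $g_3^2\rho(k)^3$ measured against the linear term $g_1^2\rho(k)$), and the $m^{1-2H}$ error intrinsic to the DFA construction — dominates in each regime, and checking that $H=5/6$ lands on the equality branch of Theorem~\ref{ThmExpF} and so produces the $\ln m$ factor. One should also remark that, unlike for a pure fGn (Corollary~\ref{DFACorFGN}), there is no reason for the $O(m^{1-2H})$ term to cancel here, since $\gamma_d$ is only asymptotically proportional to an fGn autocovariance and the cubic Hermite contribution breaks the special structure responsible for the cancellation; but non‑cancellation is not needed for the stated $O$‑bound and can be left to the discussion following the Corollary.
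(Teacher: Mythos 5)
Your proposal is correct and follows essentially the same route as the paper, whose proof is precisely the composition of the discretized-autocovariance asymptotics (the paper cites Proposition~\ref{Prop:ACV}, you use Proposition~\ref{Prop:ACVGen} specialized to $L(k)=A(1+O(k^{-\beta}))$, which amounts to the same thing) with Theorem~\ref{ThmExpF}, followed by resolving $\min(2H-1,4-4H)$ into the three regimes with the crossover at $H=5/6$. Your explicit caveat that the stated trichotomy presumes the underlying correction exponent $\beta$ does not attain the minimum (e.g.\ $\beta\geq 2/3$, as for fGn and fARIMA) is a fair and slightly more careful reading than the paper's one-line proof.
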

%
%\begin{proof}
%For the discretized process we have $\beta=4-4H$. Then, for $H<5/6$ we have $2H-1<\beta$, for $H=5/6$ we have $2H-1=\beta=2/3$, and for $H>5/6$ we have $\beta<2H-1$. By applying Theorem~\ref{ThmExpF} we get the result.
%\end{proof}
%

%%%% SIGN PROCESS %%%%%

%\begin{corollary}\label{DFACorSign}
%For the sign of a Gaussian time series with the properties of Theorem \ref{ThmExpF} $\mathbb{E} \left[F_1^2(m)\right]$ satisfies the same equations as in Corollary~\ref{DFACorDiscrete}, with $ \left(\frac{\vartheta_2(0,e^{-1/2\chi}) }{\sqrt{2 \pi \chi}}\right)^2$ replaced by $\frac{2}{\pi D}$.
%
%\beqnr
%\mathbb{E} (F_1^2(m)) &=& \frac{2}{\pi} f(H) m^{2 H}
%\begin{cases}
%\left (1 + O\left(m^{1-2 H}\right)\right) & \; if \;\; 1/2<H<5/6\\
%\left(1 + O\left(m^{-2/3}\ln m\right)\right) & \; if \;\; H=5/6\\
%\left(1 + O\left(m^{4 H- 4}\right)\right) & \; if \;\; 5/6<H<1
%\end{cases}
%\eeqnr
%
%\end{corollary}
%
%\begin{proof}
%As in the previous case we have $\beta=4-4H$. Then, for $H<5/6$ we have $2H-1<\beta$,  for $H=5/6$ we have $2H-1=\beta=2/3$, and for $H>5/6$ we have $\beta<2H-1$. By applying Theorem~\ref{ThmExpF} we get the result.
%\end{proof}

\subsubsection{Numerical simulations}

We then studied the performance of the DFA on finite time series. As in the numerical analysis of the LW estimator, we considered a fGn with  $H=0.7$ and $0.85$ and we generated $L=10^3$ time series of length $n=2^{10}$ and $2^{14}$. We then applied four discretization procedures: $\chi=0.1, 0.25, 0.5$, and the sign of the process. Following a standard practice, we consider values of $m$ ranging from $m=4$ to $m=n/4$.

Figure \ref{dfa1} show an example of the root-mean-square fluctuation $F(m)$ as a function of the block size $m$.  As in the case of the LW estimator, a different behavior is observed for the Gaussian process and for its discretization. For the former  $F(m)$ is well described by a power law with exponent $H$ over the whole range of investigated block sizes $m$. On the contrary, for the discretized (or sign) process,  $F(m)$ changes significantly slope in the log-log scale, i.e. it has a varying local power law behavior. Only for large box sizes the function  $F(m)$ converges to the expected asymptotic behavior from above. This again suggest a significant negative bias in the estimation of $H$ on finite samples, bias that we quantitatively observe in estimation below. In fact, in order to estimate the Hurst exponent through DFA one needs to perform a best fit of $F$ with a power law function. The ambiguity is however the interval of values of $m$ where one performs the fit. To the best of our knowledge there is no rule for selecting optimally such an interval. One expects to obtain a less biased, but noisier, estimate of $H$ by performing the fit in a small region corresponding to large values of the block size $m$. To investigate this point we estimate the Hurst exponent by performing the fit over a fraction $q$ of the largest values of $\log_{10}[m]$. We consider four values of $q$, namely $q=1$ (the whole interval), $q=0.75$ (the largest three-quarter), $q=0.5$ (the largest half), and $q=0.25$ (the largest quartile).

The results of our analysis are summarized in Tables \ref{TableDFACoarse07} and \ref{TableDFACoarse085}. For all $q$ we observe an underestimation of $\widehat{H}$ for the discretized series. This underestimation is always significant in terms of standard errors. Specifically, for short series ($n=2^{10}$) there is a  severe underestimation for the sign and for $\chi=0.1, 0.25$,  both when $H=0.7$ and when $H=0.85$. For long series ($n=2^{14}$) we observe a severe underestimation for the sign and for $\chi=0.1$,  both when $H=0.7$ and when $H=0.85$. Strikingly, when $q=1$ and $n=2^{14}$, the 97.5\% quantile of $\widehat{H}$ is smaller than (or roughly equal to) the true value of $H$ for the sign and for $\chi=0.1$, both when $H=0.7$ and when $H=0.85$. Finally, note that for the discretized process is not always true that the less biased estimator is obtained for small values of $q$ (i.e. large $m$). In fact, in many cases we observe quite the contrary. For example, in short samples ($n=2^{10}$) the DFA estimate with $q=1$ is less biased than that with $q=0.25$, except when $H=0.85$ and the discretization is very coarse (sign and $\chi=0.1$). 
%In large samples, when $H=0.7$ the DFA estimate with $q=1$ is more biased than that with $q=0.25$ only for $\chi=0.1$, while it is more biased for the sign and for $\chi=0.1, 0.25$  when $H=0.85$. {\bf (Maybe we can cut from ``For example \ldots'' to here)} 
These results might be due to the shape of the root-mean-square fluctuation function $F$ that changes slowly slope when the block size $m$ varies  and converges to its asymptotic value from above (see Figure \ref{dfa1}). 
The DFA estimator with small $q$ works better (in terms of bias) in long time series for coarser  discretizations and larger $H$.

%%% FIGURES DFA %%%

%%% QUANTIZED %%%

\begin{figure}[t]
\begin{center}
\includegraphics[scale=0.28]{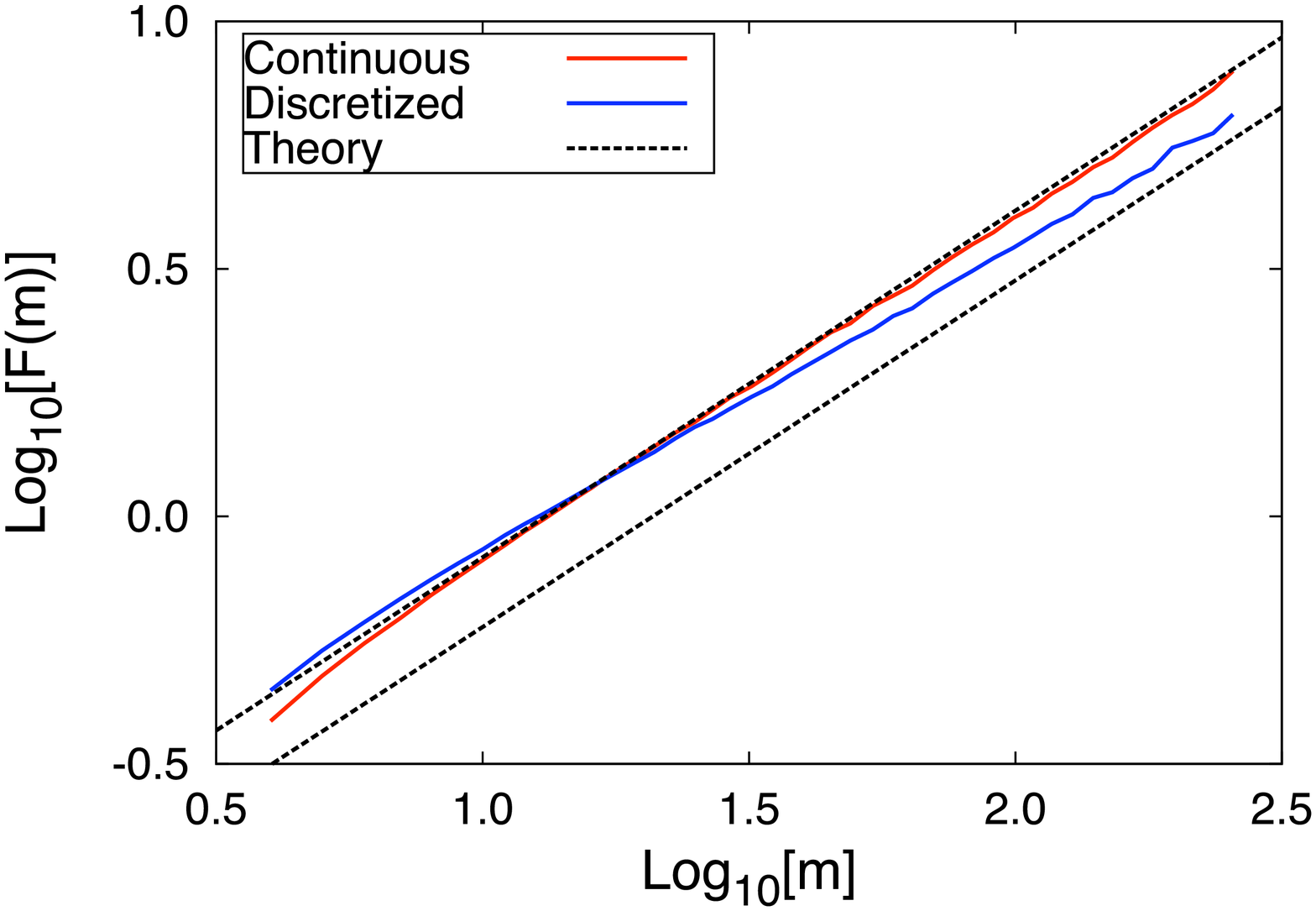}
\includegraphics[scale=0.28]{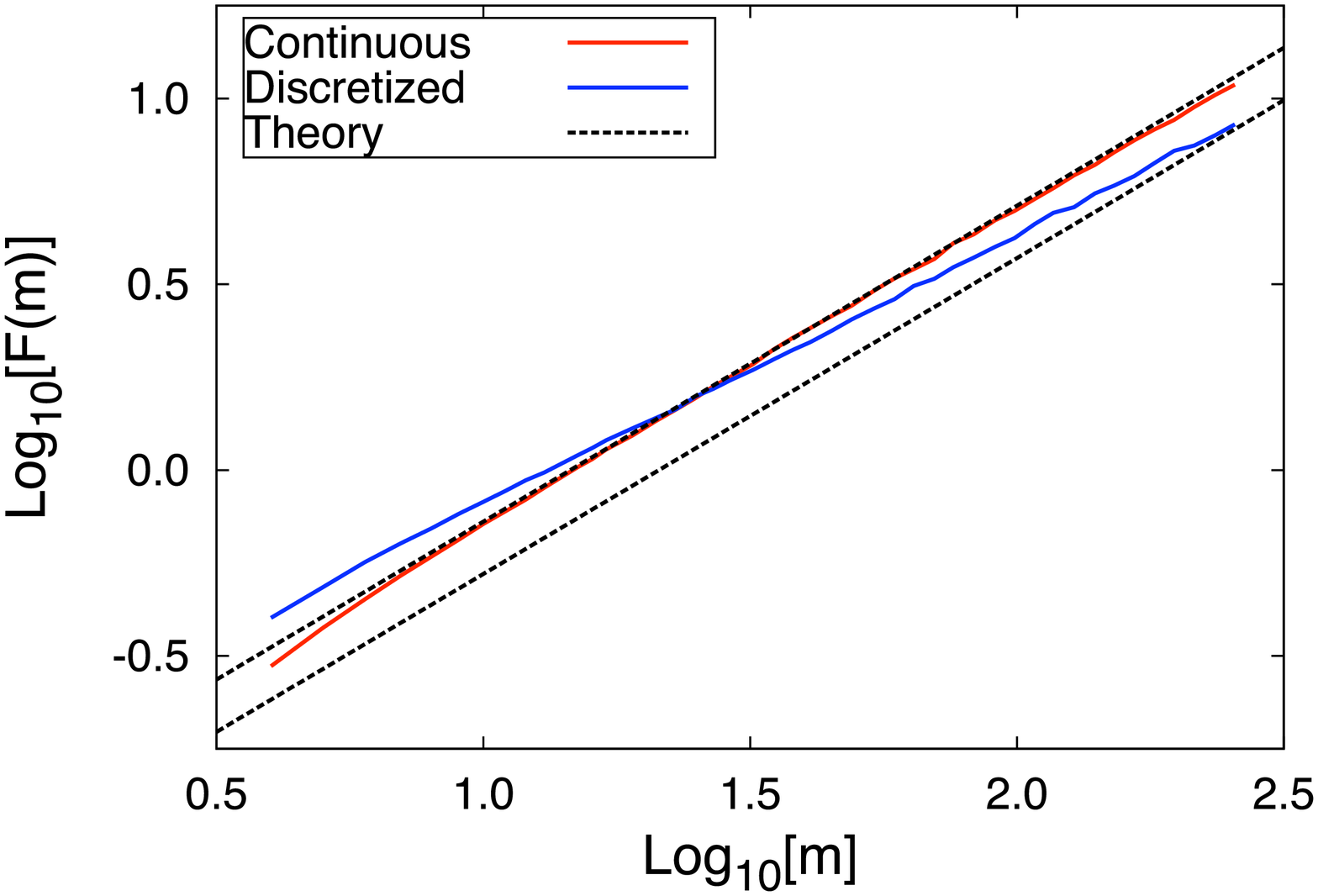}
\includegraphics[scale=0.28]{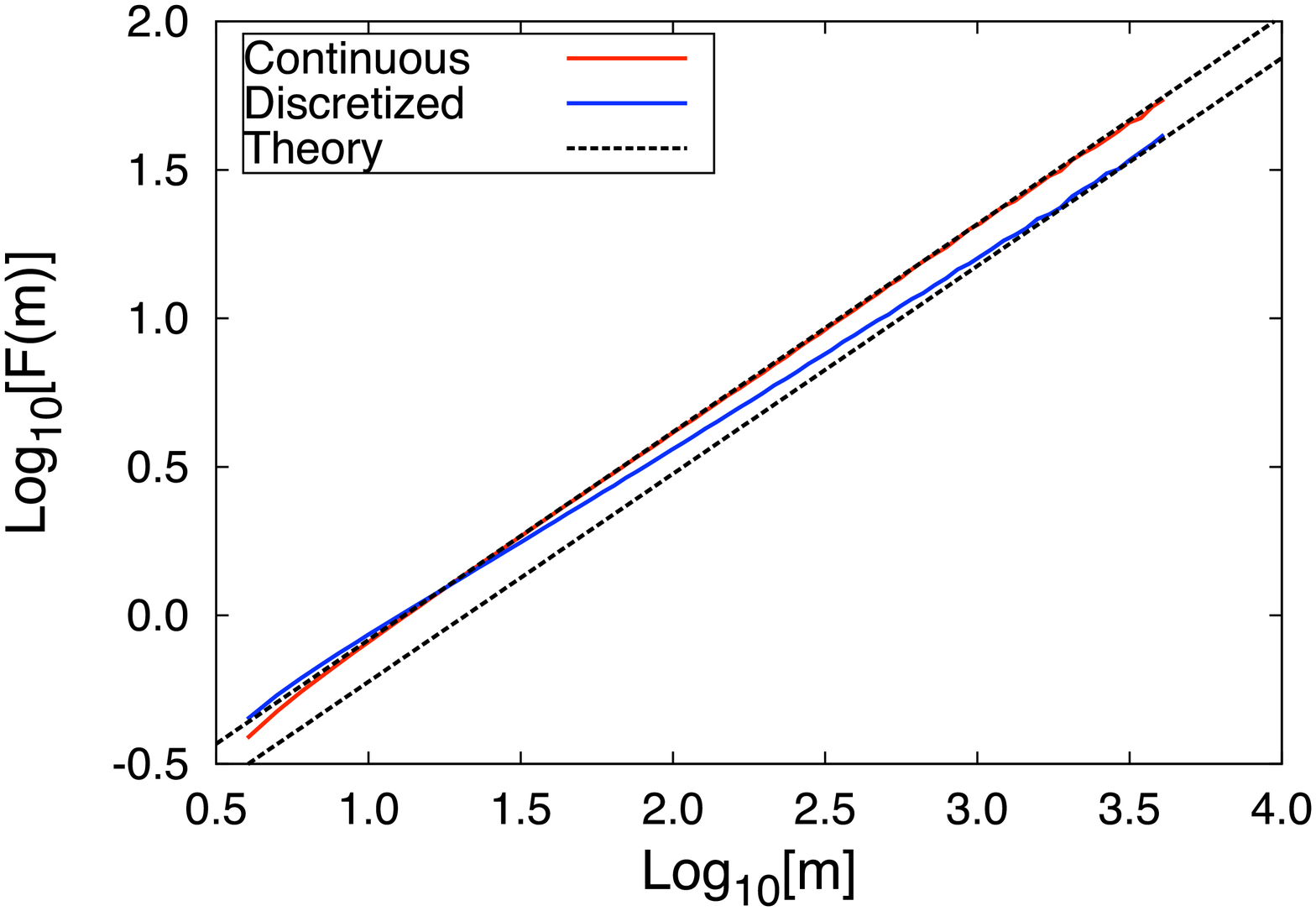}
\includegraphics[scale=0.28]{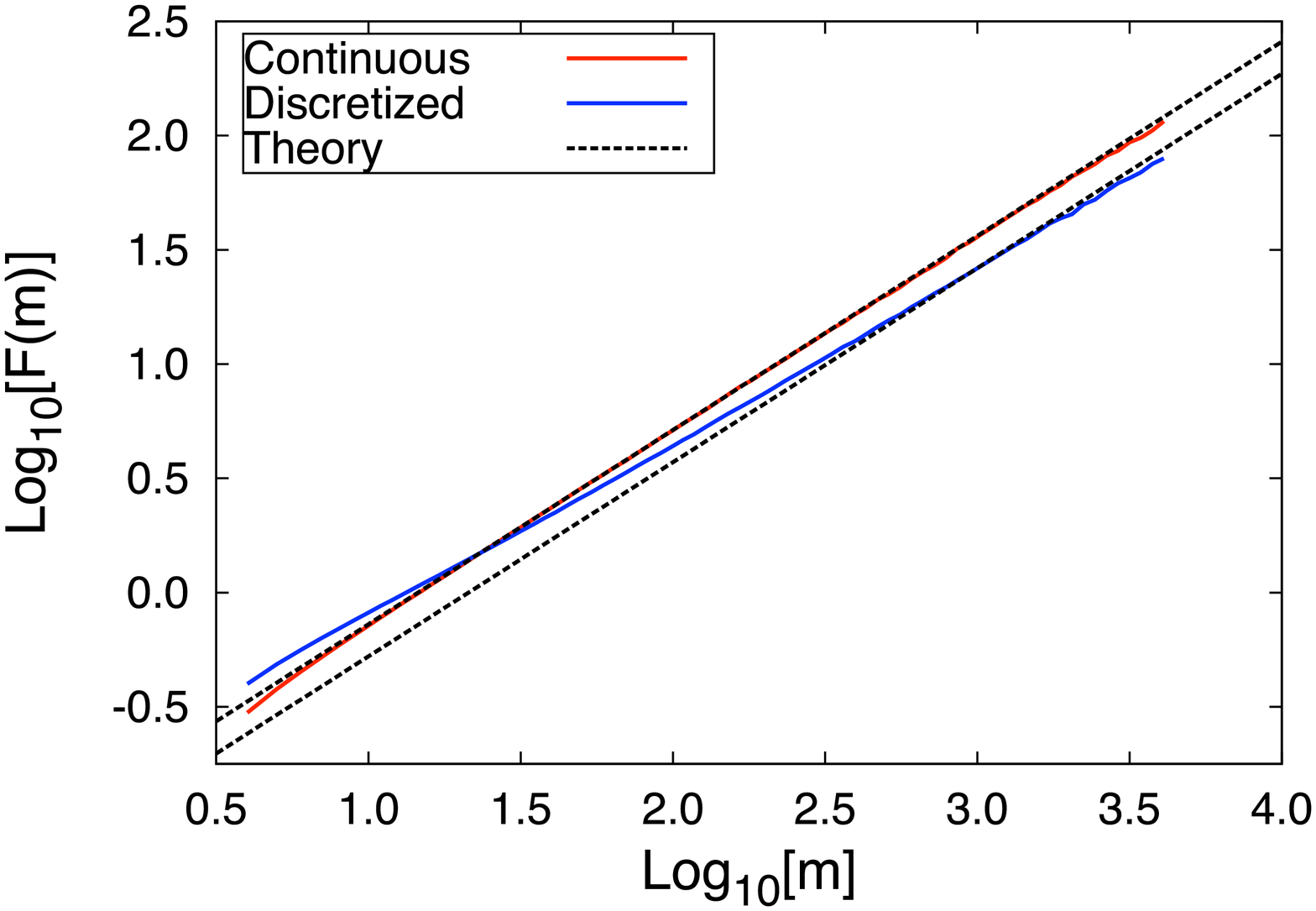}
\caption{Root mean square fluctuation $F(m)$ as a function of the block size $m$ of a numerical simulation of a fGn and its discretization with a scaling parameter $\chi=0.1$. The time series has length $n=2^{10}$ (top) and $n=2^{14}$ (bottom). The Hurst exponent of the fGn is $H=0.7$ (left) and $H=0.85$ (right). The figure also shows the leading term of the expansion of the $F(m)$ for large $m$ (dashed lines).}
\label{dfa1}
\end{center}
\end{figure}

%%%%% SIGN %%%%%%
%\begin{figure}[t]
%\begin{center}
%\includegraphics[scale=0.28]{}
%\includegraphics[scale=0.28]{}
%\includegraphics[scale=0.28]{}
%\includegraphics[scale=0.28]{}
%\caption{Root mean square fluctuation $F(m)$ as a function of the block size $m$ of a numerical simulation of a fGn and its sign. The time series has length $n=2^{10}$ (top) and $n=2^{14}$ (bottom). The Hurst exponent of the fGn is $H=0.7$ (left) and $H=0.85$ (right). The figure also shows the leading term of the expansion of the $F(m)$ for large $m$ (dashed lines).}
%\label{dfa2}
%\end{center}
%\end{figure}

%%%%%%%%%%%%%

%%%% TABLES DFA %%%%

%%%%% TABLE DFA H=0.7 %%%%%%

\begin{table}[ht]
\begin{center}
{\fontsize{10}{8} \selectfont
\begin{tabular}{lrcrrcrrcrrcr}
  \hline
\multirow{3}{*}{} & & \multicolumn{2}{c}{$q=0.25$} & & \multicolumn{2}{c}{$q=0.5$} & & \multicolumn{2}{c}{$q=0.75$} & & \multicolumn{2}{c}{$q=1$} \\
& & & $2.5\%$ & & & $2.5\%$ & & & $2.5\%$ & & & $2.5\%$\\
& &  Mean($\widehat{H}$) & $50\%$ & & Mean($\widehat{H}$) & $50\%$ & & Mean($\widehat{H}$) & $50\%$ & & Mean($\widehat{H}$) & $50\%$\\
& & (SE($\widehat{H}$)) & $97.5\%$ & & (SE($\widehat{H}$)) & $97.5\%$ & & (SE($\widehat{H}$)) & $97.5\%$ & & (SE($\widehat{H}$)) & $97.5\%$\\
  \hline
  \hline
& \multicolumn{4}{c}{} & & \multicolumn{2}{c}{$n=2^{10}$} & & \multicolumn{4}{c}{}\\
  \hline
  &  &  & 0.3676 &  &  & 0.4771 &  &  & 0.5313 &  &  & 0.5807 \\ 
 Sign  &  & 0.6482 & 0.6430 &  & 0.6505 & 0.6469 &  & 0.6493 & 0.6462 &  & 0.6562 & 0.6547 \\ 
   &  & (0.0052) & 0.9836 &  & (0.0029) & 0.8415 &  & (0.0019) & 0.7635 &  & (0.0012) & 0.7299 \\ 
   &  &  & 0.3308 &  &  & 0.4416 &  &  & 0.5026 &  &  & 0.5471 \\ 
 $\chi=0.1$   &  & 0.6177 & 0.6133 &  & 0.6168 & 0.6147 &  & 0.6150 & 0.6147 &  & 0.6224 & 0.6240 \\ 
   &  & (0.0050) & 0.9541 &  & (0.0029) & 0.8093 &  & (0.0018) & 0.7316 &  & (0.0012) & 0.6957 \\ 
   &  &  & 0.3667 &  &  & 0.4992 &  &  & 0.5538 &  &  & 0.5982 \\ 
  $\chi=0.25$ &   & 0.6610 & 0.6590 &  & 0.6628 & 0.6611 &  & 0.6631 & 0.6628 &  & 0.6717 & 0.6729 \\ 
   &   & (0.0051) & 0.9773 &  & (0.0029) & 0.8537 &  & (0.0019) & 0.7770 &  & (0.0012) & 0.7448 \\ 
   &   &  & 0.3632 &  &  & 0.4979 &  &  & 0.5625 &  &  & 0.6145 \\ 
  $\chi=0.5$ &   & 0.6699 & 0.6605 &  & 0.6720 & 0.6720 &  & 0.6749 & 0.6753 &  & 0.6862 & 0.6861 \\ 
   &   & (0.0052) & 1.0018 &  & (0.0030) & 0.8624 &  & (0.0019) & 0.7863 &  & (0.0012) & 0.7623 \\ 
   &   &  & 0.3702 &  &  & 0.5098 &  &  & 0.5751 &  &  & 0.6272 \\ 
  Continuous &   & 0.6773 & 0.6719 &  & 0.6840 & 0.6803 &  & 0.6894 & 0.6880 &  & 0.7040 & 0.7035 \\ 
   &   & (0.0052) & 1.0020 &  & (0.0030) & 0.8812 &  & (0.0019) & 0.8071 &  & (0.0012) & 0.7807 \\ 
  \hline
  \hline
& \multicolumn{4}{c}{} & & \multicolumn{2}{c}{$n=2^{14}$} & & \multicolumn{4}{c}{}\\
  \hline
 &   &  & 0.4577 &  &  & 0.5712 &  &  & 0.6102 &  &  & 0.6295 \\ 
  Sign &   & 0.6643 & 0.6614 &  & 0.6718 & 0.6723 &  & 0.6691 & 0.6697 &  & 0.6664 & 0.6670 \\ 
   &   & (0.0035) & 0.8934 &  & (0.0016) & 0.7691 &  & (0.0009) & 0.7242 &  & (0.0006) & 0.7010 \\ 
   &   &  & 0.4584 &  &  & 0.5537 &  &  & 0.5898 &  &  & 0.6043 \\ 
  $\chi=0.1$ &   & 0.6554 & 0.6519 &  & 0.6552 & 0.6577 &  & 0.6480 & 0.6486 &  & 0.6418 & 0.6425 \\ 
   &   & (0.0034) & 0.8706 &  & (0.0016) & 0.7571 &  & (0.0009) & 0.7069 &  & (0.0006) & 0.6801 \\ 
   &   &  & 0.4643 &  &  & 0.5745 &  &  & 0.6189 &  &  & 0.6380 \\ 
  $\chi=0.25$ &   & 0.6682 & 0.6674 &  & 0.6777 & 0.6799 &  & 0.6779 & 0.6791 &  & 0.6779 & 0.6781 \\ 
   &   & (0.0035) & 0.8999 &  & (0.0017) & 0.7819 &  & (0.0010) & 0.7371 &  & (0.0006) & 0.7169 \\ 
   &   &  & 0.4653 &  &  & 0.5806 &  &  & 0.6250 &  &  & 0.6502 \\ 
  $\chi=0.5$ &   & 0.6714 & 0.6634 &  & 0.6836 & 0.6849 &  & 0.6859 & 0.6861 &  & 0.6885 & 0.6891 \\ 
   &   & (0.0035) & 0.9043 &  & (0.0017) & 0.7870 &  & (0.0010) & 0.7442 &  & (0.0006) & 0.7268 \\ 
   &   &  & 0.4663 &  &  & 0.5854 &  &  & 0.6326 &  &  & 0.6616 \\ 
  Continuous &   & 0.6766 & 0.6756 &  & 0.6893 & 0.6908 &  & 0.6941 & 0.6954 &  & 0.7004 & 0.7015 \\ 
   &   & (0.0035) & 0.9047 &  & (0.0017) & 0.7917 &  & (0.0010) & 0.7546 &  & (0.0006) & 0.7400 \\ 
   \hline
\end{tabular}
  \caption{Estimation of the Hurst exponent of a fGn with $H=0.7$ and its discretization with the sign function and with $\chi=0.1$, $0.25$, and $0.5$. The time series has length $2^{10}$ (top) and $2^{14}$ (bottom). The estimation of the Hurst exponent is obtained with the Detrended Fluctuation Analysis by performing the least square regression over a $q$ fraction of the largest values of $\log_{10} [m]$ (see text). The table reports the mean value of the estimator $\widehat{H}$ over $10^3$ numerical simulations, together with the standard error (SE) and the $2.5\%$, $50\%$, and $97.5\%$ percentile.}
  \label{TableDFACoarse07}
 }
 \end{center}
 \end{table}
%

%%%%% TABLE DFA H=0.85 %%%%%

%
\begin{table}[ht]
\begin{center}
{\fontsize{10}{8} \selectfont
\begin{tabular}{lrcrrcrrcrrcr}
  \hline
\multirow{3}{*}{} & & \multicolumn{2}{c}{$q=0.25$} & & \multicolumn{2}{c}{$q=0.5$} & & \multicolumn{2}{c}{$q=0.75$} & & \multicolumn{2}{c}{$q=1$} \\
& & & $2.5\%$ & & & $2.5\%$ & & & $2.5\%$ & & & $2.5\%$\\
& &  Mean($\widehat{H}$) & $50\%$ & & Mean($\widehat{H}$) & $50\%$ & & Mean($\widehat{H}$) & $50\%$ & & Mean($\widehat{H}$) & $50\%$\\
& & (SE($\widehat{H}$)) & $97.5\%$ & & (SE($\widehat{H}$)) & $97.5\%$ & & (SE($\widehat{H}$)) & $97.5\%$ & & (SE($\widehat{H}$)) & $97.5\%$\\
  \hline
  \hline
& \multicolumn{4}{c}{} & & \multicolumn{2}{c}{$n=2^{10}$} & & \multicolumn{4}{c}{}\\
  \hline
 &   &  & 0.4475 &  &  & 0.5755 &  &  & 0.6327 &  &  & 0.6798 \\ 
  Sign &   & 0.7796 & 0.7647 &  & 0.7782 & 0.7795 &  & 0.7737 & 0.7735 &  & 0.7724 & 0.7727 \\ 
   &   & (0.0061) & 1.1787 &  & (0.0034) & 0.9876 &  & (0.0022) & 0.9045 &  & (0.0015) & 0.8574 \\ 
   &   &  & 0.4278 &  &  & 0.5232 &  &  & 0.5866 &  &  & 0.6187 \\ 
  $\chi=0.1$ &   & 0.7345 & 0.7205 &  & 0.7306 & 0.7288 &  & 0.7237 & 0.7224 &  & 0.7220 & 0.7227 \\ 
   &   & (0.0059) & 1.1161 &  & (0.0033) & 0.9401 &  & (0.0023) & 0.8700 &  & (0.0016) & 0.8298 \\ 
   &   &  & 0.4557 &  &  & 0.5888 &  &  & 0.6583 &  &  & 0.7038 \\ 
  $\chi=0.25$ &   & 0.7923 & 0.7749 &  & 0.7954 & 0.7979 &  & 0.7929 & 0.7944 &  & 0.7923 & 0.7931 \\ 
   &   & (0.0061) & 1.1848 &  & (0.0033) & 1.0102 &  & (0.0021) & 0.9231 &  & (0.0014) & 0.8764 \\ 
   &   &  & 0.4739 &  &  & 0.6140 &  &  & 0.6785 &  &  & 0.7253 \\ 
  $\chi=0.5$ &   & 0.8058 & 0.7939 &  & 0.8106 & 0.8114 &  & 0.8127 & 0.8147 &  & 0.8176 & 0.8200 \\ 
   &   & (0.0061) & 1.2039 &  & (0.0033) & 1.0145 &  & (0.0021) & 0.9344 &  & (0.0014) & 0.8996 \\ 
   &   &  & 0.4815 &  &  & 0.6262 &  &  & 0.7008 &  &  & 0.7522 \\ 
  Continuous &   & 0.8140 & 0.7936 &  & 0.8241 & 0.8248 &  & 0.8330 & 0.8346 &  & 0.8485 & 0.8502 \\ 
   &   & (0.0062) & 1.2178 &  & (0.0033) & 1.0272 &  & (0.0021) & 0.9580 &  & (0.0014) & 0.9341 \\ 
  \hline
  \hline
& \multicolumn{4}{c}{} & & \multicolumn{2}{c}{$n=2^{14}$} & & \multicolumn{4}{c}{}\\
  \hline
 &   &  & 0.5712 &  &  & 0.7131 &  &  & 0.7557 &  &  & 0.7640 \\ 
  Sign &   & 0.8231 & 0.8201 &  & 0.8227 & 0.8222 &  & 0.8150 & 0.8143 &  & 0.8028 & 0.8025 \\ 
   &   & (0.0039) & 1.0578 &  & (0.0017) & 0.9289 &  & (0.0010) & 0.8758 &  & (0.0006) & 0.8410 \\ 
   &   &  & 0.5760 &  &  & 0.6924 &  &  & 0.7219 &  &  & 0.7264 \\ 
  $\chi=0.1$ &   & 0.8071 & 0.8076 &  & 0.8018 & 0.8019 &  & 0.7874 & 0.7869 &  & 0.7695 & 0.7697 \\ 
   &   & (0.0039) & 1.0472 &  & (0.0018) & 0.9184 &  & (0.0010) & 0.8533 &  & (0.0007) & 0.8124 \\ 
   &   &  & 0.5814 &  &  & 0.7192 &  &  & 0.7657 &  &  & 0.7787 \\ 
  $\chi=0.25$ &   & 0.8271 & 0.8249 &  & 0.8304 & 0.8299 &  & 0.8272 & 0.8269 &  & 0.8185 & 0.8183 \\ 
   &   & (0.0039) & 1.0749 &  & (0.0018) & 0.9410 &  & (0.0010) & 0.8891 &  & (0.0006) & 0.8565 \\ 
   &   &  & 0.5871 &  &  & 0.7265 &  &  & 0.7756 &  &  & 0.7929 \\ 
  $\chi=0.5$ &   & 0.8296 & 0.8301 &  & 0.8350 & 0.8348 &  & 0.8357 & 0.8346 &  & 0.8333 & 0.8329 \\ 
   &   & (0.0039) & 1.0671 &  & (0.0018) & 0.9472 &  & (0.0010) & 0.8984 &  & (0.0006) & 0.8714 \\ 
   &   &  & 0.5849 &  &  & 0.7303 &  &  & 0.7814 &  &  & 0.8103 \\ 
  Continuous &   & 0.8316 & 0.8318 &  & 0.8388 & 0.8390 &  & 0.8436 & 0.8428 &  & 0.8499 & 0.8497 \\ 
   &   & (0.0039) & 1.0732 &  & (0.0018) & 0.9516 &  & (0.0010) & 0.9048 &  & (0.0006) & 0.8884 \\ 
\hline
  \end{tabular}
  \caption{Estimation of the Hurst exponent of a fGn with $H=0.85$ and its discretization with the sign function and with $\chi=0.1$, $0.25$, and $0.5$. The time series has length $2^{10}$ (top) and $2^{14}$ (bottom). The estimation of the Hurst exponent is obtained with the Detrended Fluctuation Analysis by performing the least square regression over a $q$ fraction of the largest values of $\log_{10} [m]$ (see text). The table reports the mean value of the estimator $\widehat{H}$ over $10^3$ numerical simulations, together with the standard error (SE) and the $2.5\%$, $50\%$, and $97.5\%$ percentile.}
  \label{TableDFACoarse085}
}
 \end{center}
\end{table}

%%%%%%%%%%%%%%%

\subsection{Discussion on the estimation of the Hurst exponent of a discretized process}\label{discussionH}

In conclusion our analytical considerations and numerical simulations show that both the local Whittle estimator and the DFA consistently give negatively biased estimates of the Hurst exponent $H$ when they are applied to discretized processes. These results are consistent with the literature on long-memory signal plus noise processes (Arteche (2004); Hurvich, Moulines and Soulier (2005)) and on general non-linear transformations of long-memory processes (Dalla, Giraitis and Hidalgo (2006)).
%This is in contrast with what happens when the log-periodogram regression is used to estimate the Hurst exponent of a Gaussian process. In fact, a large body of literature showed that for Gaussian processes the log-periodogram regression gives a positively biased estimate of the Hurst exponent. 
Our results show also that the size of the negative bias of $\widehat{H}$ for a discretized process can be significant because the second order correction of the spectral density or of the DFA fluctuation function is large for these processes. The negative bias can be partly overcome by taking small values of $m$ (for the LW estimator) or small intervals for the fit of the root-mean-square fluctuations (for the DFA). However, in both cases the variance of the estimators become large and thus the estimator is not significantly reliable.

In short samples, the LW estimator with $m=n^{0.8}$ and the DFA estimator with $q=1$ give similar results for $H=0.7$, while the former is slightly less biased than the latter for $H=0.85$. In large samples, instead, the DFA estimator with $q=1$ is less biased than the LW estimator with $m=n^{0.8}$, both for $H=0.7$ and $H=0.85$.  On the other hand, the LW estimator with $m=n^{0.5}$ performs better (in terms of both variance and bias) than the DFA estimator with $q=0.25$, both in small and in large samples, and for all values of $H$.  However, it is worth noticing that number of points used in the DFA regression\footnote{In our numerical studies, as in the original code of Peng {\it et al.} (1994), the x-coordinates of the DFA regression are arranged in a geometric series such that the ratio between consecutive box sizes is approximately $2^{1/8}$.} is  much smaller than the number of points used for the LW minimization. Therefore, it is not surprising that the variance of the LW estimate is systematically lower than that of the DFA estimate.

%we find that in terms of bias the best estimator of $H$ is the log-periodogram with GPH threshold. The DFA estimator is very sensitive to the window chosen for the regression. The plot of the root-mean-square fluctuation might help in determining a suitable window, but the choice seems quite arbitrary and thus this method is more suited for an exploratory analysis.

Despite the fact that we have presented results only for the fGn and fARIMA$(0, d, 0)$ processes, we have also run extensive Monte Carlo simulations for stationary fARIMA$(1,d, 0)$ and fARIMA$(0, d, 1)$ processes. The results are similar to those obtained for fGn and fARIMA$(0, d, 0)$, with a statistically significant underestimation of the Hurst exponent for the sign process and for coarse discretizations, both for the LW estimation and for the DFA method. However, for stationary fARIMA$(1,d, 0)$ and fARIMA$(0, d, 1)$ the size of the underestimation depends on the high-frequency behavior of the spectral density of the original process. 
%If the coefficient of the characteristic polynomial is positive, the size of the underestimation is smaller than for fGn and fARIMA$(0, d, 0)$. If the coefficient of the characteristic polynomial is negative, the size of the underestimation is larger than for fGn and fARIMA$(0, d, 0)$.
%In terms of log-periodogram regression these results are intuitive. In fact, when we estimate the Hurst exponent of the continuous-valued fARIMA$(1, d, 0)$ or fARIMA$(0, d, 1)$ process, if the coefficient of the characteristic polynomial is positive, the bias due to the high-frequency component of the spectral density is positive. On the other hand, if the coefficient of the characteristic polynomial is negative, the bias due to the high-frequency component of the spectral density is negative. When we consider the discretized (or the sign) process these biases combine with the effect of the discretization, which is negative. Therefore, if  the coefficient is positive the two effects have opposite signs and the total bias on the estimate of the Hurst exponent is smaller. On the other hand, if  the coefficient is negative the two effects have the same sign and the total bias on the estimate of the Hurst exponent is larger. 
For the sake of brevity, we do not report these results here, but they are available from the authors by request.

%%%%%%%%%%%%%%%% CONCLUSIONS %%%%%%%%%%%%%%

\section{Discussion and concluding remarks }
\label{conclusion}

\subsection{Relation to the measurement error literature}

As mentioned in the introduction, round-off error can be considered a form of measurement error, even if the type of measurement error typically considered in the literature is different, being modeled as a 
%Our results are related to the literature on the effect of measurement errors on the estimation of the dynamic properties of a latent time series. Discretization can be thought as a special kind of measurement error. However, it is important to notice that the discretization error is not strictly speaking a measurement error according to the usual definition of measurement error. A measurement error is usually defined as a 
white noise uncorrelated with the latent process. 
On the other hand, the round-off error is a deterministic function of the latent process itself, and it is neither uncorrelated with the latent process nor a white noise. 

Recently, in the context of short-memory processes, Hansen and Lunde (2010) have studied the effect of sampling errors on the dynamic properties of an underlying ARMA$(p, q)$ time series. They have proved that the estimates of both the persistence parameter and the autocorrelation function are negatively biased by the measurement error. 
%Hansen and Lunde (2010) have proposed a consistent Instrumental Variable (IV) estimator for the persistence parameter and an approximate IV-based estimator of the autocorrelation function of the latent time series. 
%The approximate autocorrelation function proposed by Hansen and Lunde (2010) significantly reduces the measurement error bias on large (10000 points) samples, but it is still significantly smaller than the population autocorrelation function on medium-sized (1000 points)  and small (250 points) samples, as you can see from Figure 2 therein.  {\bf (... Forse possiamo togliere quest'ultimo periodo ...)}
Our paper contributes in this direction for the case of long-memory processes subject to round-off errors. In fact, we are able to compute exactly the asymptotic scaling factor between the autocorrelation function of the realized process and that of the latent process. This scaling factor is a function only of the adimensional parameter $\chi$, which in turn depends on the grid size ($\delta$) and the variance of the underlying series ($D$). In most cases of interest the researcher knows the level of discretization. In principle, if the researcher knew also the variance of the underlying process, then she could estimate the autocorrelation function of the latent variable exactly for large lags. The same holds true for the spectral density at small frequencies and the DFA function for large box sizes. In general, however, the variance of the underlying process is not known {\it ex ante}, and the researcher needs to make inference on it. 
Even if the problem of the inference on $\chi$ (or $D$) is outside the scope of this paper, it is worth mentioning that by considering the  fraction $q_0$ of points of the discretized process with value zero it is possible to infer the value of the variance of the underlying process. In fact, we showed that $q_0=erf[\delta/2\sqrt{2D}]$ and therefore the knowledge of the grid size $\delta$ and the measurement of $q_0$ allow to infer $D$. Finally, note that other methods could be used to estimate $D$, such as, for example, by comparing the autocovariance of the discretized process with two different values of $\chi$. 

%It is worth noting that Hansen and Lunde (2010)  proved that the measurement error reduces persistence and autocorrelation function of the latent series also in the case of unit root processes. Thus, they suggest that, even if the realized series appears as a long memory process, the latent series could be a unit root process, or a local-to-unit root process. Our results prove that the discretization error reduces the autocorrelation function also for long memory processes. Therefore, the decrease of the autcovariance of a realized series could be consistent both with the effect of measurement error on a unit root process and with the effect of discretization error on a long memory process.

In the context of long memory processes, Hurvich, Moulines, and Soulier (2005) considered a process that can be decomposed into the sum of a long memory signal (possibly non-stationary) and a white noise, possibly contemporaneously correlated with the signal. They proved that the observed series has the same Hurst exponent as the underlying signal, and from numerical simulations observed that standard estimators of the Hurst exponent, such as the GPH estimator, suffer from a negative bias. Hurvich, Moulines, and Soulier (2005)  proposed a corrected local Whittle estimator, which is unbiased, but has larger asymptotic standard errors. 
Recently Rossi and Santucci de Magistris (2011) analyzed the effect of measurement errors on the estimation of the Hurst exponent. Specifically, they studied the effect of measurement errors on the estimation of the dynamic properties of the realized variance, as an {\it ex-post} estimator of the integrated variance. 
%They considered a periodogram-based estimator using the local Whittle criterion function to estimate the long memory parameter. 
Rossi and Santucci de Magistris (2011) find that discrete frequency sampling and market microstructure noise induce a finite-sample bias in the fractionally integration semiparametric estimates. Based on Hurvich, Moulines, and Soulier (2005)  they propose an unbiased local Whittle estimator (with larger asymptotic standard errors) that accounts for the presence of the measurement error. However, the corrected Whittle estimator proposed by Hurvich, Moulines, and Soulier (2005) and Rossi and Santucci de Magistris (2011) does not apply to the round-off error, because the round-off error does not satisfy the standard assumptions on the measurement error. More recently, in a working paper, Corsi and Ren\`o (2011) proposed an indirect inference approach to estimate the long memory parameter of a latent integrated volatility series. 

Our paper contributes to this literature by providing the second order term of the spectral density and of the root-mean-square fluctuation function for the discretized process. A possible extension is to identify analytically and numerically the asymptotic region over which the bias on the long memory estimates due to these second order corrections becomes negligible. From Corollary~\ref{Cor:Whittle} we know the order in probability of the bias of the LW estimator, which can help identifying a suitable bandwidth $m$ for the minimization of the Whittle likelihood function (\ref{LW:ObjFun}) in terms of bias reduction. Similarly, one could use our results on the second order correction to the root-mean-square fluctuation to estimate a suitable threshold $q$ for the DFA regression (see Section~\ref{DFA}) in terms of bias reduction. This is outside the scope of the present paper and is left as future work.

 %Under an empirical point of view Hurvich, Moulines, and Soulier (2005), Hansen and Lunde (2010), Rossi and Santucci de Magistris (2011), and Corsi and Ren\`o (2011) are all motivated the recent literature on realized (volatility) measures, such as the realized variance, that are imperfect estimates of actual volatility.

\subsection{Concluding remarks}

We have presented an extensive analytical and numerical investigation of the properties of a continuously valued long memory process subject to round-off error.  We have shown that
%, despite the fact that the approach of Dittmann and Granger (2002) cannot be applied to a discretized process, 
the discretized process is also long memory with the same Hurst exponent as the latent process. We have explicitly computed the leading term of the asymptotic expansion of the autocovariance function, of the  spectral density for small frequencies, and of the root-mean-square fluctuation for large box sizes.  We have shown that the autocovariance, the spectral density and the root-mean-square fluctuation are asymptotically rescaled by a factor smaller than one, and we have computed exactly this scaling factor. More important in all three cases we have computed the order of magnitude the second order correction. This term is important to quantify the bias of the Hurst exponent estimators based on these quantities.  An in depth analysis of Hurst exponent estimators, namely the periodogram and the Detrended Fluctuation Analysis, shows that both the estimators are significantly negatively biased as a consequence of the order of magnitude of the second order corrections.

These results can be considered the starting point for several future research directions. A first interesting question is whether the strong negative bias in the estimation of the Hurst exponent observed for the discretization transformation is also observed for other types of (non-linear) transformation of the Gaussian process. Since one often obtains non-Gaussian long memory processes by transforming a fGn or a fARIMA process (Dittman and Granger (2002)), the question of the bias of the estimator of the Hurst exponent is of great interest. 
A second extension is to study the effect of round-off error on non-stationary processes. The discretization of a non-stationary process is, for example, a more realistic description of the effect of tick size on the price process, and therefore it will have more direct applications to the microstructure literature. The discretization of the price process induces round-off errors in the observed returns, and therefore also in the realized volatility. From a financial econometrics perspective, it would be interesting to consider also processes with stochastic volatility, i.e. processes where the increments of the process are uncorrelated, but the volatility is significantly correlated or is a long memory process. A very small scale example of this type of analysis has been performed in La Spada {\it et al.} (2011), where discretization of simulated price process with volatility described by an ARCH process has been considered, finding qualitative results similar to those presented in this paper.
% Because price, as volatility, is non-negative, positive non-Gaussian processes should be used to model these quantities. A first approximation could be to assume that prices follow a geometric Brownian motion, so that the log-prices are instantaneously Gaussian distributed. 
Finally, an interesting topic for further research is the extension of our results to continuous time process, and then the study of the combined effect of round-off error and discrete time sampling. We believe that these extensions are potentially of interest in the literature on realized volatility measures, such as the realized variance, that are imperfect estimates of actual volatility. 

Since most natural and socio-economic time series are observed on a grid of values, i.e. the measured process is naturally discretized, we believe that our results are potentially of interest in many contexts. The severe bias of the long memory property of a discretized process should warn the scholars investigating processes on the underestimation of the Hurst exponent due to the non-linear transformation induced by round-off errors.

\section*{Acknowledgments} We are grateful to Yacine A\"it-Sahalia, Fulvio Corsi, Gabriele Di Cerbo, Ulrich K. M\"uller, and Christopher Sims for their useful comments and discussions. We wish to thank J. D. Farmer for useful discussions inspiring the beginning of this work.

\newpage

%%%%%%%%%%%% APPENDICES %%%%%%%%%%%%%%%

%%%%%%%%%%% DISTRIBUTIONAL PROPERTIES %%%%%%
\appendixtitleon

\begin{appendices}

\section{Distributional properties}\label{App:DistrProp}

In this appendix we consider the distributional properties of the discretization of a generic stationary Gaussian process. %The probability mass function of the discretized process is
%\begin{equation}\nonumber
%p_d(x)=\sum_{n=-\infty}^{\infty} q_n \delta_D(x-n\delta)\,, \qquad \text{where} \quad q_n=\int_{(n-1/2)\delta}^{(n+1/2)\delta}p(x)dx
%\end{equation}
%and $\delta_D(x)$ is the Dirac delta function.

From (\ref{eq:pmf}) the $m$-th moment of the discretized process can be written as $E[X_d^m]= \sum_{n=-\infty}^{\infty} q_n (n\delta)^m$.
%\begin{equation}\nonumber
%E[x_d^m]= \sum_{n=-\infty}^{\infty} q_n (n\delta)^m
%\end{equation}
%
Since $X$ is Gaussian distributed, the variance $D_d$ of the discretized process can be calculated explicitly.  From the above expression with $m=2$ we obtain
\begin{equation}
D_d=D\sum_{n=-\infty}^{\infty} \frac{n^2}{2\chi}\left[erf\left(\frac{2n+1}{2\sqrt{2\chi}}\right)-erf\left(\frac{2n-1}{2\sqrt{2\chi}}\right)\right]
\label{Dd}
\end{equation}
Left panel of Figure \ref{varFigureIncr} shows the ratio $D_d/D$ as a function of the scaling parameter $\chi$. It is worth noting that  this ratio is not monotonic. For small $\chi$ the ratio goes to zero because $\delta$ is very large relatively to $D$ and essentially all the probability mass falls in the bin centered at zero. In this regime the variance ratio goes to zero as %$D_d/D\simeq 2\sqrt{2/(\pi \chi)} \, e^{-1/8\chi}$.
\begin{equation}\nonumber
\frac{D_d}{D}\simeq 2\sqrt{\frac{2}{\pi}} \frac{e^{-1/8\chi}}{\sqrt{\chi}}.
\end{equation}
When $\chi\gg 1$ the ratio tends to one because the effect of discretization becomes irrelevant. In this regime $D_d/D \simeq  1+ 1/(12\chi)$.
%\begin{equation}
% \frac{D_d}{D}\simeq  1+\frac{1}{12\chi}
%\end{equation} 

Analogously it is possible to calculate the kurtosis $\kappa_d=E[X_d^4]/(E[X_d^2])^2$ of the discretized process. It is direct to show that the kurtosis is 
\begin{equation}
\kappa_d=\frac{\sum_{n=1}^\infty n^4 \left[erf\left(\frac{2n+1}{2\sqrt{2\chi}}\right)-erf\left(\frac{2n-1}{2\sqrt{2\chi}}\right)\right]}{\left(\sum_{n=1}^\infty n^2 \left[erf\left(\frac{2n+1}{2\sqrt{2\chi}}\right)-erf\left(\frac{2n-1}{2\sqrt{2\chi}}\right)\right]\right)^2}
\label{kurt}
\end{equation}
%The right panel of Fig. \ref{varFigureIncr} shows the kurtosis of the discretized process. 
For small $\chi$ the kurtosis diverges as %$\kappa_d\simeq \sqrt{\pi/(8\chi)} \, e^{1/8\chi}$ 
\begin{equation}\nonumber
\kappa_d\simeq \sqrt{\frac{\pi}{8\chi}} e^{1/8\chi}
\end{equation}
because the fourth moment goes to zero slower than the squared second moment. For large $\chi$ the kurtosis converges as expected to the Gaussian value 3 as $\kappa_d\simeq 3-1/(120 \, \chi^{2})$. Note that the kurtosis reaches its asymptotic value 3 from below, since it reaches a minimum of  roughly $2.982$ at $\chi \simeq 0.53$ and then converges to three from below.

%Finally the parameter $\chi$ sets the fraction $q_0$ of points which in the discretized process have value zero.  It is direct to show that $q_0=erf[1/2\sqrt{2\chi}]$. This is clearly a monotonically decreasing function. In the numerical examples below we will use $\chi=0.1$, $0.25$, and $0.5$ corresponding to $q_0=0.886$, $0.683$, and $0.521$, respectively.

%Trivially, when the discretization is obtained by taking the sign function the variance of the discretized process is $D_d=1$ and the kurtosis is $\kappa_d=1$.

%%% PROOFS OF SECTION NULLMODEL %%%%%
%
\section{Proofs  for Section~\ref{nullmodel}}\label{App:Proof1}
\begin{proof}[{\bf Proof of Proposition~\ref{Prop:ACVGen}}]
The discretized process, $X_d(t)$, is a non-linear transformation of the underlying real-valued process, $X(t)$. Let us denote the discretization transformation with $g(\cdot)$, so that $X_d(t)=g(X(t))$. From (\ref{hexpansion}) we know that
\beq
\gamma_d(k)=\sum_{j=1}^{\infty} g_j^2 \rho^j(k) \qquad \forall \; k=0, 1, \ldots \nonumber
\eeq
 where $\rho$ is the autocorrelation function of the underlying continuous process, and $g_j$ are defined in (\ref{gCoeff}).

%In order to compute the correlation properties of the discretized process we need to compute the coefficients $g_j$. 
Since the function $g(x)$ is an odd function $g_j=0$ for $j$ even, while  all the odd coefficients are non-vanishing.  Therefore the discretization function has Hermite rank 1 and can be written as an infinite sum of Hermite (odd) polynomials. The generic $g_j$ coefficient is
\begin{equation}
g_j=\frac{1}{\sqrt{2\pi D}}\sum_{n=-\infty}^\infty n\delta \int_{(n-1/2)\delta}^{(n+1/2)\delta} H_j\left(\frac{x}{\sqrt{D}}\right) e^{-x^2/2D} dx = \sqrt{\frac{D}{2\pi}}\sum_{n=-\infty}^\infty \frac{n}{\sqrt{\chi}} \int_{(n-1/2)/\sqrt{\chi}}^{(n+1/2)/\sqrt{\chi}} H_j(x) e^{-x^2/2} dx\nonumber
\end{equation} 

The first Hermite polynomial is $H_1(x)=x$ and the coefficient $g_1$ is 
\begin{equation}\label{eq:g1}
\frac{g_1}{\sqrt{D}}=\sqrt{\frac{2}{\pi \chi}} \sum_{k=0}^\infty \exp\left[-\frac{(2k+1)^2}{8\chi}\right]=\frac{1}{\sqrt{2\pi \chi}}\vartheta_2(0,e^{-1/2\chi})
\end{equation}
where $\vartheta_a(u,q)$ is the elliptic theta function. For large $\chi$, $g_1\simeq \sqrt{D}$, while for small $\chi$ the coefficient $g_1$ goes to zero as 
\begin{equation}
g_1\simeq \sqrt{D}\sqrt{\frac{2}{\pi}}\frac{e^{-1/8\chi}}{\sqrt{\chi}}\nonumber
\end{equation}

The second non-vanishing coefficient $g_3$ is given by
\begin{eqnarray}\label{eq:g3}
\frac{g_3}{\sqrt{D}}=-\sqrt{\frac{1}{3\pi \chi}} \sum_{k=0}^\infty \exp\left[-\frac{(2k+1)^2}{8\chi}\right]+\frac{1}{\sqrt{48\pi \chi^3}} \sum_{k=0}^\infty (2k+1)^2 \exp\left[-\frac{(2k+1)^2}{8\chi}\right]
\end{eqnarray}
%For small $\chi$, $g_3$ goes to zero as 
%\begin{equation}
%g_3 \simeq \frac{\sqrt{D}}{4\sqrt{3}}\frac{e^{-1/8\chi}}{\chi^{3/2}}
%\end{equation}
%while for large $\chi$, $g_3$ goes to zero as {\bf Da calcolare}. 
%The functional dependence of $g_1^2$ and $g_3^2$ on $\chi$ is shown in Figure  \ref{g1g3}.

%%% FIGURE g-COEFFICIENTS %%%

%\begin{figure}[t]
%\begin{center}
%\includegraphics[scale=0.243]{}
%\includegraphics[scale=0.243]{}
%\caption{Hermite coefficients $g_1^2$ (left) and $g_3^2$ (right) of the discretization function as a function of the discretization parameter $\chi$. The variance of the original process is normalized to 1.}
%\label{g1g3}
%\end{center}
%\end{figure}

%%%%%%%%%%

In principle one could calculate all the coefficients $g_j$. Here we want to focus on the case when the correlation coefficient $\rho$ is small (i.e. $k$ is large), so that it suffices to consider the first two coefficients $g_1$ and $g_3$. 
%The reason for this choice is that we are interested in the long memory properties of the discretized series which manifest themselves in the asymptotic behavior of the autocorrelation function, precisely where $\rho(k)$ is small. 
%
Therefore, from (\ref{hexpansion}) we have
\beqnr
\gamma_d(k)%&=&\sum_{j=0}^{\infty} g_{2j+1}^2 \rho^{2j+1}(k) \qquad \forall \; k=0, 1, \ldots \nonumber \\
%&=& g_1^2 \rho (1 + O(\rho^2)) \qquad \text{for small }\; \rho\,, \nonumber\\
&=& \frac{g_1^2}{D} \gamma(k) \left(1 + O\left(\gamma(k)^2\right)\right) \qquad \text{as}\; k\to \infty\,.
\label{hexpansionProof_21}
\eeqnr
If we plug (\ref{ACVgeneralDef}) and (\ref{eq:g1}) into~(\ref{hexpansionProof_21}), we get the result.
\end{proof}
\begin{proof}[{\bf Proof of Proposition~\ref{Prop:ACV}}]
Under the assumption that $L \in \mathcal{L}$, we can write  $\sum_{i=0}^{\infty} b_i k^{-\beta_i} = b_0\, \left(1+ O\left(k^{-\beta_1}\right)\right)$; by substituting this expression into (\ref{eq:ACVGen}) we get the result.
\end{proof}
\begin{proof}[{\bf Proof of Corollary~\ref{CorACF}}]
It follows from the definition of autocorrelation function and Proposition~\ref{Prop:ACVGen}.
\end{proof}
\begin{proof}[{\bf Proof of Proposition \ref{Prop:GenSpecDens}}]
From Proposition~\ref{Prop:ACVGen} we can write
%
%\beq
$\gamma_d(k) \sim \left(\frac{ \vartheta_2(0,e^{-1/2\chi}) }{\sqrt{2 \pi \chi}}\right)^2  k^{2H-2} L(k)$, as $k \to \infty\,.$ %\nonumber
%\eeq
%
%where the notation $x_k \sim y_k$ means that $x_k/y_k \to 1$ as $k\to \infty$. 
Then, the proposition follows from Theorem 3.3 (a) in Palma (2007).

\end{proof}
\begin{proof}[{\bf Proof of Proposition~\ref{Prop:SpecDens}}]
For the sake of simplicity, we consider the case of an underlying process with unit variance, namely $D=1$. In order to extend the proof to non unit variance we need simply to do the transformation $L(k) \to L(k)/D$. Obviously, $L(k)/D$ is still slowly varying at infinity. 
Moreover, we consider only the case $I=\infty$, the case $I<\infty$ being trivial.

This proof is divided in two parts: in the first part we prove (\ref{eq:SpecDensRound1stOrd}) and (\ref{eq:SpecDensRound2ndOrd_1}) under the assumption that $L\in \mathcal{L}$; in the second part we prove (\ref{eq:SpecDensRound2ndOrd_2}) under the additional Assumption~\ref{Ass:AbsSum}.

\paragraph{First part} 
From Proposition~\ref{Prop:ACV} and Theorem 2.1 in Beran (1994) we know that the spectral density $\phi_d(\omega)$ exists. 
%From the Herglotz's theorem we know that
%
%\beqnr
%\phi_d(\omega)=(2 \pi)^{-1} \sum_{k=-\infty}^{\infty} \gamma_d(k) \exp^{i k\omega}= \frac{D_d}{2 \pi} + \frac{1}{\pi} \sum_{k=1}^{\infty} \gamma_d(k) \cos(k \omega)\nonumber
%\eeqnr
% 
%
Moreover, if $L \in \mathcal{L}$, then $\exists K > 0$ s.t. $\forall \, k\geq K,\; L(k)=\sum_{i=0}^{\infty} b_i k^{-\beta_i}$ and the series converges absolutely.
Then, from Herglotz's theorem and (\ref{hexpansion}) we can write
\beqnr
\phi_d(\omega)= \frac{D_d}{2 \pi} + \frac{1}{\pi} \sum_{k=1}^{K} \gamma_d(k) \cos(k \omega) 
 + \frac{1}{\pi} \sum_{k > K}^{\infty} \sum_{j=0}^{\infty} g_{2j+1}^2 \left(k^{2H-2}\sum_{i=0}^{\infty} b_i k^{-\beta_i} \right)^{2j+1}\cos(k \omega)\nonumber
\eeqnr
As $\omega \to 0^+$, the first two terms on the RHS converge to a constant, while the third term diverges.

Since $\sum_i b_i k^{-\beta_i}$ converges absolutely $\forall \, k \geq K$, we can write  
\beq
\sum_{i=0}^{\infty} b_i k^{-\beta_i} = b_0 + b_1 k^{-\beta_i} + b_2 k^{-\beta_2} (1+o(1))\nonumber
\eeq
Therefore,
\beqnr
&& \sum_{k > K}^{\infty} \sum_{j=0}^{\infty} g_{2j+1}^2 \left(k^{2H-2}\sum_{i=0}^{\infty} b_i k^{-\beta_i} \right)^{2j+1} \cos(k \omega) = \sum_{k>K}^{\infty} \left\{g_1^2 k^{2H-2} \left[b_0+b_1k^{-\beta_1} + b_2 k^{-\beta_2} (1+o(1))\right] + \right.\nonumber\\
%&=&\sum_{k > K}^{\infty}\sum_{j=0}^{\infty} g_{2j+1}^2 k^{(2j+1)(2H-2)} \left[b_0 + b_1 k^{-\beta_i} + b_2 k^{-\beta_2} (1+o(1))\right]^{2j+1}\cos(k \omega)\nonumber\\
&& \left. + g_3^2 k^{6H-6} \left[b_0^3+3 b_0^2 b_1 k^{-\beta_1}(1+o(1))\right] + g_5^2 k^{10H-10} b_0^5 (1+o(1))\right\} \cos(k\omega)\nonumber\\
&=&g_1^2 b_0 \sum_{k=1}^{\infty} k^{2H-2} \cos(k \omega) + g_1^2 b_1\sum_{k=1}^{\infty} k^{2H-2-\beta_1} \cos(k \omega) + O\left(\sum_{k=1}^{\infty} k^{2H-2-\beta_2} \cos(k \omega)\right) + \nonumber \\
&& + g_3^2 b_0^3 \sum_{k=1}^{\infty} k^{6H-6} \cos(k \omega) + O\left(\sum_{k=1}^{\infty} k^{6H-6-\beta_1} \cos(k\omega)\right) + O\left(\sum_{k=1}^{\infty} k^{10H-10} \cos(k \omega)\right) + O(1)\,,\label{phiProof_5}
\eeqnr
where the term $O(1)$ comes from substituting $\sum_{k>K}^{\infty}$ with $\sum_{k=1}^{\infty}$.

Then, we introduce the following representation of a trigonometric series
\beq
\sum_{k =1}^{\infty}  k^{-s} \cos(k \omega) = \frac{1}{2}\left(Li_{s}(e^{-i \omega})+Li_{s}(e^{i \omega})\right)\label{eq:PolyLogRep}
\eeq
where  $Li_{s}(z) = \sum_{k=1}^{\infty} z^k/k^s$ is the polylogarithm. 
%Then, we can write
%
%\beqnr
%&&
%\sum_{j=0}^{J_H} g_{2j+1}^2 \sum_{i \in \widetilde{I}(j)} \widetilde{b}_{j, i} \sum_{k>K} k^{-\alpha_{j , i}} \cos(k \omega) = \sum_{j=0}^{J_H} g_{2j+1}^2 \sum_{i \in \widetilde{I}(j)} \frac{\widetilde{b}_{j, i}}{2} \left(Li_{\alpha_{j, i}}(e^{-i \omega})+Li_{\alpha_{j, i}}(e^{i \omega})\right)+ O(1)\label{phiProof_6}
%\eeqnr
% 
The series expansion of $Li_{s}(e^z)$ for small $z$ is (see Gradshteyn and  Ryzhik (1980))
\beqnr
Li_s(e^z) &=& \Gamma(1 - s) \,(-z)^{s-1} + \sum_{k=0}^\infty {\zeta(s-k) \over k!} \,z^k\,, \qquad if \;\; s \notin \mathbb{N}\label{eq:ExpansionPolyLog1}\\
Li_s(e^z) &=& {z^{s-1} \over (s - 1)!} \left[H_{s-1} - \ln(-z) \right] + \sum_{k=0,\,k\ne s-1}^\infty {\zeta(s-k) \over k!} \, z^k\,, \qquad if \;\; s \in \mathbb{N}\label{eq:ExpansionPolyLog2}
\eeqnr
where $\zeta(\cdot)$ is the analytic continuation of the Riemann zeta function over the complex plane and $H_s$ is the $s$th harmonic number.
Finally, we plug (\ref{eq:ExpansionPolyLog1}) or (\ref{eq:ExpansionPolyLog2}) into (\ref{eq:PolyLogRep}), and then we apply (\ref{eq:PolyLogRep}) into (\ref{phiProof_5}). By substituting (\ref{eq:g1}) for $g_1^2$ and after some algebraic manipulation we get the result.

\paragraph{Second part} Under Assumption~\ref{Ass:AbsSum} (i) the series $L(k)=\sum_{i=0}^{\infty} b_i k^{-\beta_i}$ converges absolutely  $\forall \, k\geq 1$. Therefore, by Mertens' theorem\footnote{See G. H. Hardy, \textit{Divergent Series}, 2nd ed., (1991), Chapter X Theorem 160.}, for any $j\geq 0$  we can write
\beqnr
\left(\sum_{i=0}^{\infty} b_i k^{-\beta_i}\right)^{2j+1} = \sum_{i=0}^{\infty} \widetilde{b}_{j, i} k^{-\widetilde{\beta}_{j, i}} \qquad \forall\, k \geq 1\,,\nonumber
%\label{eq:Mertens}
\eeqnr
where the series on the RHS is the Cauchy product.  
%Each $\widetilde{b}_{j, i}$ is a product of $(2j+1)$ $b_i$'s,  and each $\widetilde{\beta}_{j, i}$ is a linear combination (with integral coefficients) of $\beta_i$'s. 
Note that $\widetilde{b}_{0, i}= b_i$ and $\widetilde{\beta}_{0, i}=\beta_i$ $\forall \, i$,  while $\widetilde{b}_{j, 0}=b_0^{2j+1}$ and $\widetilde{\beta}_{j, 0}=0$  $\forall \, j$. By absolute convergence of the original series the Cauchy product also converges absolutely $\forall \, k \geq 1$. 

From Herglotz's theorem and Proposition~\ref{Prop:ACV} we can write
\beqnr
\phi_d(\omega)% &=& \frac{D}{2\pi} + \frac{2}{2\pi} \sum_{k=1}^{\infty} \gamma_d(k) \cos(k \omega) \nonumber\\
%&=& \frac{D}{2\pi} + \frac{1}{\pi} \sum_{k=1}^{\infty} \sum_{j=1}^{\infty} \left(g_j^2 k^{-\alpha} \sum_{i=0}^{I} b_i k^{-\beta_i}\right)^j \cos(k \omega) \nonumber\\
&=& \frac{D_d}{2\pi} + \frac{1}{\pi} \sum_{k=1}^{\infty} \sum_{j=0}^{\infty} \sum_{i=0}^{\infty} g_{2j+1}^2 \widetilde{b}_{j, i} k^{-\alpha_{j, i}} \cos(k \omega)\,, \nonumber
\eeqnr
where $\alpha_{j , i} \equiv (2j+1)(2-2H)+\widetilde{\beta}_{j, i}$.

Since $\left\{g^2_{2j+1}\right\}$ are bounded above (because $\sum_{j} g_{2j+1}^2=D_d<\infty$), Assumption~\ref{Ass:AbsSum} (i) implies that the double series $\sum_{j} \sum_{i} g_{2j+1}^2 \widetilde{b}_{j, i} k^{-\alpha_{j, i}}$ converges absolutely $\forall\, k\geq1$. Moreover, under Assumption~\ref{Ass:AbsSum} $(ii)$ there is only a finite number of terms in $L(k)$ (and therefore in $\sum_{j} \sum_{i} g_{2j+1}^2 \widetilde{b}_{j, i} k^{-\alpha_{j, i}}$) that are not summable over $k$. Thus, from Rudin (1976) (Chapter 8, Theorem 8.3) we can invert the order of summation and write

\beqnr
\phi_d(\omega) &=& \frac{D_d}{2\pi} + \frac{1}{\pi}  \sum_{j=0}^{\infty} g_{2j+1}^2 \sum_{i=0}^{\infty} \widetilde{b}_{j, i} \sum_{k=1}^{\infty} k^{-\alpha_{j, i}} \cos(k \omega) \nonumber
\eeqnr
In other words, the Fourier transform of the series becomes the series of the Fourier transforms. From this point on the proof is very similar to that of Lemma~\ref{Lemma:SpecDensGaussian} and therefore omitted for brevity.
\end{proof}
\begin{proof}[{\bf Proof of Corollary~\ref{Cor:SpecDensAnalytic}}]
If $L$ is analytic at infinity, then $L(k) = \sum_{n=0}^{\infty} b_n k^{-n}$ for large $k$  for some $\{b_n\}$, and the series converges absolutely within its radius of convergence. Then, $L \in \mathcal{L}$ with $\beta_1 \in \mathbb{N}$, and by applying Proposition~\ref{Prop:SpecDens} and (\ref{eq:g1}) we get (\ref{eq:SpecDensAnalytic}). It is straightforward to see that $c_2>0$ and  $c_1>0$ for $H>5/6$.
\end{proof}
\begin{proof}[{\bf Proof of Corollary~\ref{Cor:SpecDensRoundfGn}}]
For a fGn $L(k)=\frac{D}{2}\left(\left(1+\frac{1}{k}\right)^{2H}+\left(1-\frac{1}{k}\right)^{2H}-2\right)k^2$, which is an even analytic function at infinity and whose power series converges absolutely $\forall \, k\geq 1$. Because $L(k)$ is analytic and even, $\beta_i = 2 i$  $\forall \, i \geq 0$ and we can write 
\beq
\Gamma\left(2H-1-\beta_i\right)=(-1)^{2i-1}\frac{\Gamma\left(1-2H\right)\Gamma\left(2H\right)}{\Gamma\left(2(i+1-H)\right)}\longrightarrow 0 \qquad \text{as} \; i \to \infty\,.\nonumber
\eeq
Thus, the autocovariance of a fGn satisfies Assumption~\ref{Ass:AbsSum}, and therefore from Corollary~\ref{Cor:SpecDensAnalytic} it follows that the spectral density of a discretized fGn satisfies (\ref{eq:SpecDensAnalytic}) with $c_0$ given by (\ref{eq:SpecDensRound2ndOrd_2}).

From Corollary~\ref{Cor:SpecDensAnalytic} we already know that the second-order term is strictly positive if $H\geq 5/6$. Hence, we just need to prove that $c_0>0$ if $H<5/6$. Since $c_0$ is given by (\ref{eq:SpecDensRound2ndOrd_2}) we can write
\beqnr
c_0=\frac{D_d}{2\pi} +  \frac{g_1^2}{\pi} \sum_{i=0}^{\infty} \binom{2H}{2(i+1)}\, \zeta\left(2(i+1-H)\right) + \sum_{j=1}^{\infty}\sum_{i=0}^{\infty} \frac{g_{2j+1}^2}{\pi D^{2j+1}}\, \widetilde{b}_{j, i}\, \zeta\left((2j+1) (2-2H)+\widetilde{\beta}_{j, i}\right)\label{eq:cZero1}
\eeqnr
where we used the fact that for a fGn $b_i = \binom{2H}{2(i+1)} D$ and $\beta_i = 2 i$ $\forall \, i \geq 0$. The symbol $\binom{\cdot}{\cdot}$ denotes the generalized binomial coefficient.
%Note that according to the notation of Proposition~\ref{Prop:SpecDens} $\widetilde{I}(j)=\infty$ for all $j$.

First, since $\{b_i\}$ are strictly positive and $(2j+1)(2-2H)>1 \; \forall \, j\geq 1$ if $H<5/6$, the third term on the RHS of (\ref{eq:cZero1}) is strictly positive for $H<5/6$.

Second, from Sinai (1976) and Beran (1994) we know that the spectral density of a fGn satisfies
\beq
\phi(\omega) = 2 c^*_{\phi}  \left(1-\cos \omega \right) \sum_{j=-\infty}^{\infty} |2\pi j+ \omega|^{-2H-1} \,, \quad \omega \in [-\pi, \pi]\,.\nonumber
\eeq
where $c_{\phi}^*=\frac{D}{2\pi} \sin \left(\pi H \right) \Gamma \left(2H +1\right)$. 
For small $\omega$ the behavior of the above spectral density follows by Taylor expansion at zero:
\beq
\phi(\omega) = c_{\phi}^* |\omega|^{1-2H} - \frac{1}{12} c_{\phi}^* |\omega|^{3-2H} + o\left(|\omega|^{3-2H}\right)\,.\label{eq:fGnAsympBeran}
\eeq
%
%Note that the second-order term of the above spectral density is $O\left(|\omega|^{3-2H}\right)$, with $3-2H\in(1, 2)$.

On the other hand, because $L(k)$ is analytic with $\beta_1=2$, it satisfies Assumption~\ref{Ass:CoeffGenDef2}; therefore, the fGn satisfies the conditions of Lemma~\ref{Lemma:SpecDensGaussian}. Following the proof of that lemma, after some algebraic manipulations,  we get
\beq
\phi(\omega) = c_{\phi}^* |\omega|^{1-2H} + c_0^* -\frac{1}{12} c_{\phi}^* |\omega|^{3-2H} + o\left(|\omega|^{3-2H}\right)\,,\label{eq:fGnAsympLemma}
\eeq
where $c_0^*= D \pi^{-1}\left(\frac{1}{2} + \sum_{i=0}^{\infty} \binom{2H}{2(i+1)} \zeta\left(2(i+1-H)\right)\right)$.
By comparing (\ref{eq:fGnAsympBeran}) and (\ref{eq:fGnAsympLemma}) it follows that $c_0^*$ must be equal to zero, and therefore%\footnote{Note that the general result is: $\sum_{i=0}^{\infty} \binom{2-\alpha}{2(i+1)} \zeta\left(\alpha+2i\right)=-\frac{1}{2}$ for any $\alpha \in (0, 1)$. We are not aware of any previous paper showing or proving this result, and therefore it might be new. Moreover, from preliminary numerical calculations it seems that it holds for any $\alpha \in (\infty, 2) \setminus \mathbb{Z}$.}
\beq
\sum_{i=0}^{\infty} \binom{2H}{2(i+1)} \zeta\left(2(i+1-H)\right)=-\frac{1}{2}\,.\label{eq:SumBinomZeta}
\eeq

Finally, note that from (\ref{hexpansion}) we know that $D_d=\sum_{j=0}^{\infty} g_{2j+1}^2$. Hence, by plugging (\ref{eq:SumBinomZeta}) into (\ref{eq:cZero1}) we obtain
\beqnr
c_0&=&\frac{\sum_{j=0}^{\infty} g^2_{2j+1}}{2\pi} -  \frac{g_1^2}{2 \pi} +  \sum_{j=1}^{\infty} \sum_{i=0}^{\infty} \frac{g_{2j+1}^2}{\pi D^{2j+1}}\, \widetilde{b}_{j, i}\, \zeta\left((2j+1) (2-2H)+\widetilde{\beta}_{j, i}\right)\nonumber\\
&=& \frac{\sum_{j=1}^{\infty}g_{2j+1}^2}{2 \pi} +  \sum_{j=1}^{\infty} \sum_{i=0}^{\infty} \frac{g_{2j+1}^2}{\pi D^{2j+1}}\, \widetilde{b}_{j, i}\, \zeta\left((2j+1) (2-2H)+\widetilde{\beta}_{j, i}\right)\nonumber > 0
\eeqnr
\end{proof}

%
%\begin{proof}[{\bf Proof of Proposition~\ref{Prop:GenSpecDensSign}}]
%The proof is similar to the proof of Proposition~\ref{Prop:GenSpecDens}, and thus omitted.
%\end{proof}
%

%
%\begin{proof}[{\bf Proof of Proposition~\ref{Prop:SpecDensSign}}]
%The proof is similar to the proof of Proposition~\ref{Prop:SpecDens}, and thus omitted.
%\end{proof}
%

%
%\begin{proof}[{\bf Proof of Corollary~\ref{Cor:SpecDensAnalyticSign}}]
%The proof is similar to the proof of Corollary~\ref{Cor:SpecDensAnalytic}, and thus omitted.
%\end{proof}
%
%%%%

%%%%% PROOFS OF SECTION ESTIMATION %%%%

\section{Proofs for Section~\ref{estimation}}\label{App:Proof2}

\begin{proof}[{\bf Proof of Lemma~\ref{Lemma:SpecDensGaussian}}]
From Theorem 2.1 in Beran (1994) we know that the spectral density $\phi(\omega)$ exists and from Hergotz's theorem it is the discrete Fourier transform of the autocovariance $\gamma(k)=k^{2H-2} \sum_i b_i k^{-\beta_i}$.

Let $\alpha_i=2-2H+\beta_i \;\forall\,i\geq0$. Under Assumption~\ref{Ass:AbsSum} (ii) there is only a finite number of terms in the autocovariance of $X$ that are not summable, and therefore there will be only a finite number of divergent terms in the spectral density.
Moreover, under Assumption~\ref{Ass:AbsSum} (i) the series $\sum_{i=0}^{I} b_i k^{-\beta_i}$ converges absolutely $\forall\,k\geq 1$. Therefore, by Rudin (1976) (Theorem 8.3) we can write
\beqnr
\phi(\omega) %&=& \frac{D}{2\pi} + \frac{1}{\pi} \sum_{i=0}^{J-1} b_i \sum_{k=1}^{\infty} k^{-\widetilde{\beta}_j} \cos(k \omega) + \frac{1}{\pi} \sum_{i=J}^{I} b_i \sum_{k=1}^{\infty} k^{-\widetilde{\beta}_j} \cos(k \omega)\nonumber\\ 
&=& \frac{D}{2\pi} + \frac{1}{\pi} \sum_{i=0}^{I} b_i \sum_{k=1}^{\infty} k^{-\alpha_i} \cos(k \omega) \label{eq:Series1}
\eeqnr

By using the polylogarithm representation (\ref{eq:PolyLogRep}) introduced above, for small $\omega$ we can plug~(\ref{eq:ExpansionPolyLog1}) and (\ref{eq:ExpansionPolyLog2}) into (\ref{eq:Series1}). Let $I_1=\{ i \in \mathbb{N}: \alpha_i \notin \mathbb{N}\}$ and $I_2=\{ i \in \mathbb{N}: \alpha_i \in \mathbb{N}\}$. 
%Note that $\alpha_0 = 2-2H \in (0, 1)$, so that the leading term $i=0$ does not belong to any of these sets. 
Then, under Assumption~\ref{Ass:AbsSum} (iii) we can rearrange the double series in (\ref{eq:Series1}) in the following way
\beqnr\label{eq:Series2}
\phi(\omega) = \frac{D}{2\pi} &+& \frac{1}{\pi} \, b_0 \left(\Gamma(2H-1) \sin\left((1-H)\pi\right) |\omega|^{1-2H} + \sum_{j=0}^{\infty} (-1)^j  \frac{\zeta(2-2H-2 j)}{ (2j)!} \omega^{2j}\right) \nonumber\\
&+&  \frac{1}{\pi} \sum_{i \in I_1} b_i \left(\Gamma(1-\alpha_i) \sin\left(\pi \alpha_i \over{2}\right) |\omega|^{\alpha_i-1} + \sum_{j=0}^\infty  (-1)^j {\zeta(\alpha_i-2j) \over (2j)!} \omega^{2j}\right)\nonumber\\
&+&  \frac{1}{\pi} \sum_{i \in I_2} b_i \left(\frac{|\omega|^{\alpha_i-1}}{(\alpha_i - 1)!}  \left[\sin\left(\pi \alpha_i \over{2}\right) H_{\alpha_i-1} + \frac{\pi}{2} \cos\left(\pi\alpha_i \over{2}\right) - \sin\left(\pi \alpha_i \over{2}\right) \ln|\omega| \right] + \right.\nonumber\\
&& \left. \qquad + \sum_{j=0,\,j\ne \alpha_i-1}^\infty (-1)^j {\zeta(\alpha_i-2j) \over (2j)!} \omega^{2j}\right)\,, \qquad as \;\; \omega\to 0^+\,,
\eeqnr
where $\zeta(\cdot)$ is the analytic continuation of the Riemann zeta function over the complex plane and $H_s$ is the $s$th harmonic number.

Under Assumption~\ref{Ass:AbsSum} we can collect all the terms of the same oder and rearrange (\ref{eq:Series2}) in powers of $\omega$. %Moreover, all the series involving terms that converge as $\omega\to 0^+$ converge uniformly in neighborhood of $\omega=0$ by the Weierstrass criterion.  %converge absolutely in a neighborhood of $\omega=0$, and by the Weierstrass criterion they also converge uniformly.
Let $c_0^*$ be the term of order $O(1)$ in (\ref{eq:Series2}); then, 
\beqnr\label{eq:TermOrder1}
c_0^*=\frac{D}{2\pi} + \frac{1}{\pi} \sum_{i: \,\alpha_i \neq 1}^I b_i \zeta\left(\alpha_i\right)\,.\nonumber
%\begin{cases}
%D/2\pi + \sum_{i=0}^I b_i \zeta\left(\alpha_i\right) \quad if \;\; \widetilde{\beta_i}\neq 1 \; \forall i\\
%D/2\pi + \sum_{i=0, i\neq j}^I b_i \zeta\left(\alpha_i\right)-\log(2\pi) \quad if \;\; \widetilde{\beta_j} = 1\; for \; some \; j
%\end{cases}
\eeqnr
%
%Note that, because $\{\beta_i\}$ are strictly increasing, if there exist $i\in\mathbb{N}$ s.t. $\alpha_i=1$, then it is unique.
%
Under Assumption~\ref{Ass:CoeffGenDef2} $\alpha_1 \neq 1$, and therefore, if also $\alpha_1\neq 2$, we can write % If $\alpha_1 < 1$ (i.e., $1-2H+\beta_1<0$), then we can write
\beqnr\label{eq:Case1}
\phi(\omega) = c_{\phi} b_0 |\omega|^{1-2H} + c_0^* + c^*_1 |\omega|^{1-2H+\beta_1} + o(|\omega|^{\min\left(0, 1-2H+\beta_1\right)})  \qquad \text{as} \;\; \omega\to 0^+\,,
\eeqnr
where $c_{\phi}$ is defined as in Proposition~\ref{Prop:GenSpecDens} and $c^*_1= \pi^{-1}b_1\Gamma(2H-1-\beta_1) \sin\left(\frac{2H-\beta_1}{2}\pi\right)$.

If $\alpha_1 = 2$, by Assumption~\ref{Ass:CoeffGenDef2} $c_0^*\neq 0$ and we can write
\beqnr\label{eq:Case2}
\phi(\omega) = c_{\phi} b_0 |\omega|^{1-2H} + c^*_0 + o(1) \,, \quad as \;\; \omega\to 0^+.
\eeqnr

Putting together (\ref{eq:Case1}) and (\ref{eq:Case2}), and noting that if $c_0^*=0$ in (\ref{eq:Case1}), then by Assumption~\ref{Ass:CoeffGenDef2} $\beta_1 \leq 2$, we get the result
\beqnr
\phi(\omega) = c_{\phi}  |\omega|^{1-2H} \left(b_0 + c_\beta |\omega|^{\beta} + o(|\omega|^{\beta})\right) \,, \quad as \;\; \omega\to 0^+\,,\nonumber
\eeqnr
where $c_{\beta}\neq 0$ and $\beta \in (0, 2]$.
\end{proof}

\begin{proof}[{\bf Proof of Theorem~\ref{thm:Whittle}}]
Following the proof of Theorem 4 in DGH, because $j_0=1$ we can write $Y_t$ as a signal plus noise process $Y_t=W_t + Z_t$, where
\beq
W_t = g_1 H_1\left(X_t\right) = g_1 X_t \qquad Z_t = \sum_{j=j_1}^{\infty} g_j H_j\left(X_t\right)\nonumber
\eeq
where $j_1$ is the second non-vanishing term in the Hermite expansion and $H_j(\cdot)$ is the $j$th Hermite polynomial.

\paragraph{Part (i)} If $L \in \mathcal{L}$, the spectral density of $W_t$ satisfies Assumption A in DGH, i.e.,
\beq
\phi_w(\omega)= c_w |\omega|^{1-2H} \left(1+o(1)\right) \qquad as\;\; \omega \to 0^+\,,\nonumber
\eeq
where $c_w=(g_1^2/D) c_{\phi}  b_0$. Moreover, since $X_t$ is a stationary purely non-deterministic Gaussian process, it is also linear with finite fourth moments. Consequently, $W_t = g_1X_t$ is also linear with finite fourth moments. Then, we can write
\beq
W_t = \sum_{j=0}^{\infty} a_j \varepsilon_{t-j}\nonumber
\eeq
where $\sum_{j=0}^{\infty} a_j^2 < \infty$ and $\varepsilon_t$ are i.i.d. Gaussian variables with zero mean and unit variance. Let $\alpha(\omega) = \sum_{j=0}^{\infty} a_j e^{i j \omega}$.
%
%\beq
%\alpha(\omega) = \sum_{j=0}^{\infty} a_j e^{i j \omega}\nonumber
%\eeq
%
From Proposition 4 in DGH it follows that $W_t$ satisfies also Assumption B therein.

We show below that the spectral density of $Z_t$ satisfies $\phi_{z}(\omega)\leq C |\omega|^{1-2H_z}$, as $\omega \to 0^+$, for some $C>0$ and $H_z \geq 0.5$ such that
\beqnr\label{eq:d_zCases}
H>H_z = 
\begin{cases}
0.5 & \; if \;\; j_1(2-2H)>1\\
0.5-j_1 (1-H) & \; if \;\; j_1(2-2H)<1\\
\varepsilon & \; if \;\; j_1(2-2H)=1\\
\end{cases}
\eeqnr
for any $\varepsilon  \in (0.5, H)$.

Indeed, if $j_1(2-2H)>1$ from (\ref{hexpansion})
\beq
\sum_{k=1}^{\infty} \gamma_z(k) = \sum_{k=1}^{\infty} \sum_{j=j_1}^{\infty} g_j^2 \left(\rho(k)\right)^{j} \leq C \sum_{k=1}^{\infty} k^{-j_1 (2-2H)} <\infty \nonumber
\eeq
for some $C > 0$. Therefore, $H_z=0.5 < H$.

If $j_1(2-2H) < 1$, we can prove that 
\beq
\phi_{z}(\omega) = \frac{g_{j_1}^2}{D^{j_1}} b_0 c_{\phi} |\omega|^{1-2 H_z} (1 + o(1)) \leq C |\omega|^{1-2 H_z} \qquad as \; \omega \to 0^+\,,\nonumber
\eeq
for some $C>0$ and $H_z = H-(j_1-1) (1-H) < H \in (0.5, 1)$. Similarly, if $j_1(2-2H) = 1$, we can prove that
\beq
\phi_{z}(\omega) = C \ln |\omega|^{-1} (1 + o(1)) \leq C |\omega|^{-\varepsilon} \qquad as \; \omega \to 0^+\,,\nonumber
\eeq
for some $C>0$ and for any $\varepsilon>0$. The proof of the above results is a special case of the proof of Proposition~\ref{Prop:SpecDens}, and thus omitted. The results above prove (\ref{eq:d_zCases}).

Since $W_t$ satisfies Assumptions A and B in DGH and the spectral density $\phi_{z}$ satisfies the asymptotic conditions above, consistency of $\widehat{H}^Y_{LW}$ follows from Theorem 3 (i) in DGH. 

Moreover, if we write the periodogram of $Y_t$ as $I_{Y}(\omega_j)=I_{W}(\omega_j) + v_j$, where $I_{W}$ is the periodogram of the ``signal'' $W_t$ and $v_j$ is the contribution of the ``noise'' $Z_t$ at the $j$th Fourier frequency, it is straightforward to show (see DGH pg. 225--226) that
\beqnr
\widehat{H}^Y_{LW}-H&=&\left(\widehat{H}^X_{LW}-H\right)(1+o_P(1)) - \left(m^{-1}\sum_{j=1}^m \left(\log\left(\frac{j}{m}\right)+1\right) \frac{ v_j}{c_w\omega_{j}^{1-2H}}\right)(1+o_P(1))+O_P\left(\frac{\log m}{m}\right)\nonumber\\
&=& \left(\widehat{H}^X_{LW}-H\right)(1+o_P(1)) + O_P\left(\left(\frac{m}{n}\right)^{H-H_z} + \frac{\log m}{m}\right)\label{eq:LWAsympExpan}\,,
\eeqnr
where $\widehat{H}^X_{LW}$ denotes the LW estimator of $\{X_t\}$ if the sequence $\{X_t\}$ were observed.
 
 Note that, roughly speaking, $v_j$ represents the sample estimate of the higher-order terms of the spectral density of $Y_t$ at the $j$th Fourier frequency. For the discretization of a fGn we know from Corollary~\ref{Cor:SpecDensRoundfGn} that  the second-order term of the spectral density is strictly positive for all $H$; therefore, in that case, we expect that the second term on the RHS of the first line of (\ref{eq:LWAsympExpan}) will induce a negative finite sample bias on $\widehat{H}^Y_{LW}$. The order of magnitude of this finite sample bias is $O_{P}\left(\left(m/n\right)^{H-H_z}\right)$.

\paragraph{Part (ii)} Under Assumptions~\ref{Ass:AbsSum} and \ref{Ass:CoeffGenDef2}, from Lemma~\ref{Lemma:SpecDensGaussian} it follows that $X_t$ and therefore $W_t$ satisfy Assumption $T(\alpha_0, \beta)$ in DGH, with $\alpha_0=2H-1$ and $\beta$ defined as in Lemma~\ref{Lemma:SpecDensGaussian}. Moreover, under Assumption~\ref{Ass:SmoothSpecAlpha} we can combine the second part of Proposition 5 in DGH with Proposition 3 and Theorem 2 therein, and under the assumption that $m=o\left(m^{2\beta/(2\beta+1)}\right)$ we can write
\beq
\widehat{H}^X_{LW} - H =  - \left(\frac{m}{n}\right)^{\beta} \left(\frac{c_{\beta}}{b_{0}}\right) \frac{B_{\beta}}{2} -\frac{V_m}{2} (1+o_P(1)) + o_{p}\left(m^{-1/2} + \left(\frac{m}{n}\right)^{\beta}\right)\label{eq:LWLinearAsympExpan}
\eeq
where $c_{\beta}$ is defined as in Lemma~\ref{Lemma:SpecDensGaussian}, $B_{\beta} = (2\pi)^\beta \beta/(\beta+1)^2$, and 
\beq
V_m = m^{-1} \sum_{j=1}^m\left(\log \left(\frac{j}{m}\right)+1\right) \left(\eta_j-\mathbb{E}\eta_j\right)\nonumber
\eeq
with $\eta_j=I_X(\omega_j)/\phi_x(\omega_j)$. 
%Note that (\ref{eq:LWLinearAsymp}) provides the exact prefactor of the first-order term in the asymptotic expansion of the LW estimator. The sign of this term is important to understand the sign of the finite-sample bias we observe in our numerical simulations.

Let $r = H-H_z$. By plugging (\ref{eq:LWLinearAsympExpan}) into (\ref{eq:LWAsympExpan}) we obtain
\beq
\widehat{H}^Y_{LW}-H =  -\frac{V_m}{2}  - \left(\frac{m}{n}\right)^{\beta} \left(\frac{c_{\beta}}{b_{0}}\right) \frac{B_{\beta}}{2} + O_P\left(\left(\frac{m}{n}\right)^{r} + \frac{\log m}{m}\right) + o_{p}\left(m^{-1/2} + \left(\frac{m}{n}\right)^{\beta} + V_m\right)\,.\label{eq:LWAsympExpan2}
\eeq
Moreover, under Assumption~\ref{Ass:SmoothSpecAlpha} and $m=o\left(m^{2\beta/(2\beta+1)}\right)$, by Robinson's (1995) Theorem 2
\beq\label{eq:RobinsonThm2}
\sqrt{m} V_m \overset{d}{\rightarrow} N(0, 1)\,, \qquad as \;\; n \to \infty\,. 
\eeq
Therefore, $V_m = O_P\left(m^{-1/2}\right)$ and from (\ref{eq:LWAsympExpan2}) follows (\ref{eq:WhittleConvRate}).

\paragraph{Part (iii)} If $m=o(n^{2r/(2r+1)})$, equation (\ref{eq:WhittleAsympNorm}) follows from applying (\ref{eq:RobinsonThm2}) in (\ref{eq:LWAsympExpan2}).
\end{proof}

\begin{proof}[{\bf Proof of Corollary~\ref{Cor:Whittle}}]
The result of the corollary follows directly from Theorem~\ref{thm:Whittle} and from noticing that the second non-vaninshing Hermite coefficient for the discretized process is $g_3\neq 0$, so that $j_1=3$. 
\end{proof}

For the proof of Theorem~\ref{ThmExpF} we need the following lemma. Note that the proofs of the lemmas are at the end of this Appendix. 

%
%%% LEMMA %%%
%
\begin{lemma} \label{Lemma:remainderEM}
Let $m\in \mathbb{N}$, $\alpha<1$, and 
\beqnr
R(\alpha) &=& - \alpha(\alpha-1) \int_{1}^{m} \frac{B_2(\{1-t\})}{2!} t^{\alpha-2}\mathrm{d}t\label{remainderEM:alpha}\\
\widetilde{R}(\alpha) &=& -  \int_{1}^{m} \frac{B_2(\{1-t\})}{2!} t^{\alpha-2}\left(\alpha(\alpha-1) \log t -1 + 2\alpha\right)\mathrm{d}t\label{remainderEM:alphaLog}
\eeqnr
where $B_2(x)=1/6-x+x^2$ is the third Bernoulli polynomial and $\{x\}$ represents the fractional part of the real number $x$. Then, $R(\alpha)$ and $\widetilde{R}(\alpha)$ converge as $m\rightarrow \infty$.
\end{lemma}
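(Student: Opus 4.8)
The plan is to show that the two improper integrals $R(\alpha)$ and $\widetilde R(\alpha)$ converge as $m\to\infty$ by a direct domination argument, exploiting the boundedness of the periodic Bernoulli term $B_2(\{1-t\})$ and the integrability at infinity of the power $t^{\alpha-2}$ (and $t^{\alpha-2}\log t$) when $\alpha<1$. First I would record that $B_2(x)=\tfrac16-x+x^2$ is a fixed polynomial on $[0,1]$, hence $t\mapsto B_2(\{1-t\})$ is a bounded periodic function: there is a constant $C_0>0$ with $|B_2(\{1-t\})|\le C_0$ for all $t\ge 1$ (indeed $C_0=\tfrac16$ works, since $B_2$ attains its extrema $1/6$ and $-1/12$ on $[0,1]$). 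This immediately reduces both statements to the convergence of $\int_1^\infty t^{\alpha-2}\,\mathrm dt$ and $\int_1^\infty t^{\alpha-2}\log t\,\mathrm dt$.

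For $R(\alpha)$: since $\alpha<1$ we have $\alpha-2<-1$, so $\int_1^\infty t^{\alpha-2}\,\mathrm dt=\tfrac{1}{1-\alpha}<\infty$, and therefore
\[
|R(\alpha)|\;\le\;\frac{|\alpha(\alpha-1)|\,C_0}{2}\int_1^\infty t^{\alpha-2}\,\mathrm dt\;=\;\frac{|\alpha(\alpha-1)|\,C_0}{2(1-\alpha)}\;<\;\infty .
\]
By absolute convergence of the integrand, $R(\alpha)$ converges as $m\to\infty$. For $\widetilde R(\alpha)$ the integrand is $-\tfrac{B_2(\{1-t\})}{2}\,t^{\alpha-2}\bigl(\alpha(\alpha-1)\log t-1+2\alpha\bigr)$; bounding $|B_2(\{1-t\})|\le C_0$ and using the triangle inequality, it suffices to check that $\int_1^\infty t^{\alpha-2}\log t\,\mathrm dt<\infty$ and $\int_1^\infty t^{\alpha-2}\,\mathrm dt<\infty$. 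The second is the bound already used. For the first, fix any $\varepsilon\in(0,1-\alpha)$; then $\log t\le C_\varepsilon\,t^{\varepsilon}$ for $t\ge1$ with a suitable $C_\varepsilon>0$, so $t^{\alpha-2}\log t\le C_\varepsilon\,t^{\alpha-2+\varepsilon}$ with exponent $\alpha-2+\varepsilon<-1$, which is integrable on $[1,\infty)$. Hence the integrand of $\widetilde R(\alpha)$ is absolutely integrable on $[1,\infty)$, and $\widetilde R(\alpha)$ converges as $m\to\infty$.

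This argument is essentially routine; there is no real obstacle beyond being careful that the claimed bound $\alpha<1$ is exactly what makes $t^{\alpha-2}$ (and $t^{\alpha-2}\log t$, after absorbing the log into an arbitrarily small power) integrable at infinity, so that the oscillation of the periodic Bernoulli factor is immaterial and dominated convergence applies. One could alternatively phrase the conclusion as: the functions $m\mapsto R(\alpha)$ and $m\mapsto\widetilde R(\alpha)$ are Cauchy as $m\to\infty$, since for $m'>m$ the tail $\bigl|\int_m^{m'}\cdots\bigr|\le C_0\int_m^{\infty} t^{\alpha-2}(\,|\alpha(\alpha-1)|\log t+|1-2\alpha|+1)\,\mathrm dt\to0$; but invoking absolute integrability is the cleanest route.
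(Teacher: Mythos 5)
Your proof is correct and follows essentially the same route as the paper: bounding $|B_2(\{1-t\})|$ by $1/6$, using convergence of $\int_1^\infty t^{\alpha-2}\,\mathrm{d}t$ for $R(\alpha)$, and absorbing $\log t$ into an arbitrarily small power $t^{\varepsilon}$ with $\varepsilon<1-\alpha$ for $\widetilde{R}(\alpha)$.
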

Note that  $R(\alpha)$ and $\widetilde{R}(\alpha)$ are the remainders of a first order Euler-Maclaurin expansion of the sums $\sum_{k=1}^m k^{\alpha}$  and $\sum_{k=1}^m k^{\alpha}\log k$, respectively.
%

%%%

%%%%% PROOF OF THEOREM ON DFA %%%%%%
%
For the proof of Theorem~\ref{ThmExpF} we need the following lemmas.
%
%%% LEMMAS %%%
 %
 \begin{lemma} \label{Lemma:DFASum}
 Let $i, k \in \mathbb{N}$, $\alpha>0$, and $\alpha\neq 1, 2$. Then, as $i\rightarrow\infty$,
 \beq
\sum_{k=1}^{i} (i-k) k^{-\alpha} = A_0(\alpha)  i^{2-\alpha} + A_1(\alpha) \, i + O(1)\nonumber
\eeq
where $A_0(\alpha)=\frac{1}{(1-\alpha) (2-\alpha)}$ and $A_1(\alpha)=-\frac{(\alpha+2)(\alpha+3)}{12 (1-\alpha)}+R(\alpha)$
%\beq
%A_0(z)=\frac{1}{(1-z) (2-z)} \quad \mathrm{and}\quad A_1(z)=-\frac{(z+2)(z+3)}{12 (1-z)}+R(z) \label{ConstA}
%\eeq
%
and $R(\cdot)$ is defined as in~(\ref{remainderEM:alpha}).
\end{lemma}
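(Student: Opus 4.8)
The plan is to strip off the linear weight and reduce everything to ordinary power sums, to which the first-order Euler--Maclaurin formula applies. First I would write
\[
\sum_{k=1}^{i}(i-k)k^{-\alpha}=i\sum_{k=1}^{i}k^{-\alpha}-\sum_{k=1}^{i}k^{1-\alpha},
\]
so that it suffices to expand $\sum_{k=1}^{i}k^{s}$ for the two exponents $s=-\alpha$ and $s=1-\alpha$. The hypotheses $\alpha>0$ and $\alpha\neq1,2$ are exactly what makes $s<1$ and $s\neq-1$ in both cases, so $\int_{1}^{i}t^{s}\,\mathrm{d}t=(i^{s+1}-1)/(s+1)$ has no logarithmic term and the remainder integral $R(s)$ of (\ref{remainderEM:alpha}) converges by Lemma~\ref{Lemma:remainderEM}.

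Next I would apply the first-order Euler--Maclaurin formula with explicit $B_2$-remainder to $f(t)=t^{s}$,
\[
\sum_{k=1}^{m}k^{s}=\frac{m^{s+1}-1}{s+1}+\frac{1+m^{s}}{2}+\frac{s\,(m^{s-1}-1)}{12}-s(s-1)\int_{1}^{m}\frac{B_2(\{1-t\})}{2}\,t^{s-2}\,\mathrm{d}t,
\]
which for $s<1$ becomes $\sum_{k=1}^{m}k^{s}=\frac{m^{s+1}}{s+1}+\frac{m^{s}}{2}+c(s)+O(m^{s-1})$ with finite constant $c(s)=-\frac1{s+1}+\frac12-\frac{s}{12}+R(s)$. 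Plugging the $s=-\alpha$ expansion (multiplied by $i$) and the $s=1-\alpha$ expansion into the decomposition and collecting terms: apart from $A_0(\alpha)i^{2-\alpha}$ (the two leading powers combine through $\frac1{1-\alpha}-\frac1{2-\alpha}=\frac1{(1-\alpha)(2-\alpha)}$), the two copies of $\tfrac12 i^{1-\alpha}$ that cancel exactly, and the single term $c(-\alpha)\,i$ linear in $i$, every remaining $i$-dependent contribution is $O(i^{-\alpha})=o(1)$; the non-growing constants, namely $-c(1-\alpha)$ together with those $o(1)$ tails, make up the $O(1)$. Finally the elementary identity $-\frac1{1-\alpha}+\frac12+\frac{\alpha}{12}=-\frac{(\alpha+2)(\alpha+3)}{12(1-\alpha)}$ identifies the coefficient of $i$ with $A_1(\alpha)$, the $R$-term there being the convergent Euler--Maclaurin remainder of Lemma~\ref{Lemma:remainderEM} associated with the exponent $-\alpha$.

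The only point requiring care is one of precision, not a genuine obstacle: one must retain the $m^{s}$ term of the Euler--Maclaurin expansion, since after multiplication by $i$ the $s=-\alpha$ expansion contributes $\tfrac12 i^{1-\alpha}$, which is \emph{not} $O(1)$ when $\alpha\in(0,1)$, and only its exact cancellation against the matching term from $\sum_{k}k^{1-\alpha}$ brings the error down to $O(1)$. Beyond that, $\alpha\neq1,2$ keeps $\int_{1}^{m}t^{s}\,\mathrm{d}t$ free of logarithms ($s\neq-1$) and $\alpha>0$ keeps $s<1$ so that $R(s)$ converges, for both $s=-\alpha$ and $s=1-\alpha$; no further case analysis is needed.
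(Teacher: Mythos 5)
Your proof is correct and follows essentially the same route as the paper: the decomposition $i\sum_{k}k^{-\alpha}-\sum_{k}k^{1-\alpha}$, first-order Euler--Maclaurin applied to each power sum with the remainder controlled by Lemma~\ref{Lemma:remainderEM}, the exact cancellation of the $\tfrac12 i^{1-\alpha}$ terms, and the same algebraic identification of $A_1(\alpha)$. The only difference is cosmetic: you explicitly note that the remainder entering $A_1(\alpha)$ is the one associated with the exponent $-\alpha$ in the notation of (\ref{remainderEM:alpha}), which is in fact slightly more precise than the paper's own wording.
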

 \begin{lemma}\label{Lemma:DFASum2}
 Let $i, k \in \mathbb{N}$. Then, as $i\rightarrow\infty$,
 \beqnr
 \sum_{k=1}^{i} (i-k) k^{-1} &=& i \ln i + \left(-\frac{5}{12}+R_1\right) i +  O(1)\nonumber\\
 \sum_{k=1}^{i} (i-k) k^{-2} &=& \left(\frac{5}{3}+R_2\right) i - \ln i +  O(1)\nonumber
 \eeqnr
 where $R_1 \equiv R(\alpha=-1)$ and $R_2 \equiv R(\alpha=-2)$.
 \end{lemma}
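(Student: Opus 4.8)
The plan is to reduce each sum to ordinary power sums by linearity and then apply a first-order Euler--Maclaurin expansion, exactly as in the proof of Lemma~\ref{Lemma:DFASum}; these are precisely the two boundary cases $\alpha=1$ and $\alpha=2$ excluded there, and the only genuinely new feature is that the relevant power sum is now the harmonic sum $H_i=\sum_{k=1}^i k^{-1}$, whose Euler--Maclaurin expansion is what produces the $i\ln i$ and $-\ln i$ terms. Throughout, $R_1$ and $R_2$ denote the first-order Euler--Maclaurin remainders $R(-1)$ and $R(-2)$ of (\ref{remainderEM:alpha}) with upper limit of integration $i$; they are $O(1)$, in fact convergent, by Lemma~\ref{Lemma:remainderEM} (for $f(t)=t^{\alpha}$ one has $f''(t)=\alpha(\alpha-1)t^{\alpha-2}$, so $R(\alpha)=-\int_1^i\tfrac{B_2(\{1-t\})}{2!}f''(t)\,\mathrm{d}t$ is exactly this remainder, with $B_2=\tfrac16$).

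For the first identity, write $\sum_{k=1}^i(i-k)k^{-1}=i\sum_{k=1}^i k^{-1}-\sum_{k=1}^i 1=iH_i-i$, and expand $H_i$ with $f(t)=t^{-1}$, $f'(t)=-t^{-2}$:
\[
H_i=\int_1^i\frac{\mathrm{d}t}{t}+\frac{f(1)+f(i)}{2}+\frac{B_2}{2!}\bigl(f'(i)-f'(1)\bigr)+R_1=\ln i+\tfrac12+\tfrac1{12}+R_1+O(i^{-1})=\ln i+\tfrac{7}{12}+R_1+O(i^{-1}).
\]
Multiplying by $i$ and subtracting $i$ gives $i\ln i+\bigl(\tfrac{7}{12}-1+R_1\bigr)i+O(1)=i\ln i+\bigl(-\tfrac{5}{12}+R_1\bigr)i+O(1)$, which is the first claim.

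For the second identity, write $\sum_{k=1}^i(i-k)k^{-2}=i\sum_{k=1}^i k^{-2}-\sum_{k=1}^i k^{-1}=i\sum_{k=1}^i k^{-2}-H_i$. Expanding with $f(t)=t^{-2}$, $f'(t)=-2t^{-3}$,
\[
\sum_{k=1}^i k^{-2}=\bigl(1-i^{-1}\bigr)+\frac{1+i^{-2}}{2}+\frac{B_2}{2!}\bigl(-2i^{-3}+2\bigr)+R_2=\tfrac53+R_2-i^{-1}+O(i^{-2}),
\]
so $i\sum_{k=1}^i k^{-2}=\bigl(\tfrac53+R_2\bigr)i+O(1)$; subtracting $H_i=\ln i+O(1)$ from the previous step (now $R_1$ is absorbed into $O(1)$) yields $\bigl(\tfrac53+R_2\bigr)i-\ln i+O(1)$. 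I do not expect any real obstacle: the argument is bookkeeping of the first-order Euler--Maclaurin constants, convergence of the remainders is already supplied by Lemma~\ref{Lemma:remainderEM}, and the only point needing a moment's care is that the $O(i^{-1})$ tails, once multiplied by $i$, contribute only $O(1)$.
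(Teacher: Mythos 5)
Your proof is correct and follows exactly the route the paper intends: the paper omits this proof as ``similar to the proof of Lemma~\ref{Lemma:DFASum}'', and you carry out precisely that first-order Euler--Maclaurin computation in the two boundary cases $\alpha=1,2$, with the remainders $R_1=R(-1)$, $R_2=R(-2)$ controlled by Lemma~\ref{Lemma:remainderEM} and the constants $-\tfrac{5}{12}$ and $\tfrac{5}{3}$ coming out as stated. The bookkeeping (splitting off $iH_i-i$ and $i\sum k^{-2}-H_i$, and noting the $O(i^{-1})$ tails contribute only $O(1)$ after multiplication by $i$) is accurate.
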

 %
  %%%%

  %
  %%% PROPOSITION %%%
Before proving Theorem~\ref{ThmExpF} we prove the following proposition.
\begin{proposition}\label{PropSigma}
Under the assumptions of Theorem~\ref{ThmExpF}, let us define $\Sigma_m=Cov(Y(i), Y(j))$, 
i.e., the covariance matrix of the integrated process $(Y(1), \ldots, Y(m))$.
Then, 
\begin{itemize}
\item[(i)] if $\beta\neq 2H-1$, then
\beqnr
\Sigma_m=\frac{A}{2H (2H-1)}&&\left(i^{2 H} \left(1+O(i^{-\min(2H-1, \beta)})\right)+j^{2 H} \left(1+O(j^{-\min(2H-1, \beta)})\right)+\right.\nonumber\\
&&\left.-|i-j|^{2 H} \left(1+O(|i-j|^{-\min(2H-1, \beta)})\right)\right]\nonumber
\eeqnr
\item[(ii)] if $\beta=2H-1$, then
\beqnr
\Sigma_m=\frac{A}{2H (2H-1)}&&\left[i^{2 H} \left(1+O(i^{1-2H}\ln i)\right)+j^{2 H} \left(1+O(j^{1-2H}\ln j)\right)+\right.\nonumber\\
&&-\left.|i-j|^{2 H} \left(1+O(|i-j|^{1-2H} \ln |i-j|) \right)\right]\nonumber
\eeqnr
\end{itemize}
\end{proposition}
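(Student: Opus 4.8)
The plan is to reduce the computation of $\Sigma_m=\mathrm{Cov}(Y(i),Y(j))$ to the variance of a single partial sum and then to estimate that variance with Lemmas~\ref{Lemma:DFASum} and~\ref{Lemma:DFASum2}. Set $V(n):=\mathrm{Var}(Y(n))=\sum_{s,t=1}^{n}\gamma(s-t)$ for $n\ge1$, with the convention $V(0)=0$, and assume without loss of generality $i\ge j$. Since $Y(i)-Y(j)=\sum_{t=j+1}^{i}X(t)$ has, by weak stationarity, the same covariance structure as $Y(i-j)$ — so that $\mathrm{Var}(Y(i)-Y(j))=\sum_{s,t=1}^{i-j}\gamma(s-t)=V(|i-j|)$ — expanding $\mathrm{Var}(Y(i))=\mathrm{Var}\bigl(Y(j)+(Y(i)-Y(j))\bigr)$ yields the polarization identity $\Sigma_m=\tfrac12\bigl(V(i)+V(j)-V(|i-j|)\bigr)$. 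The proposition then follows once I have, as $n\to\infty$, the expansions $V(n)=\frac{A}{H(2H-1)}\,n^{2H}\bigl(1+O(n^{-\min(2H-1,\beta)})\bigr)$ when $\beta\neq 2H-1$, and $V(n)=\frac{A}{H(2H-1)}\,n^{2H}\bigl(1+O(n^{1-2H}\ln n)\bigr)$ when $\beta= 2H-1$: substituting $n=i,j,|i-j|$ and pulling out the factor $\tfrac12$ reproduces both displays in the statement (for bounded indices the $O$-terms are trivially valid, since $V(n)$ and $n^{2H}$ are then both bounded away from $0$ and $\infty$).

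To obtain the asymptotics of $V(n)$ I would write $V(n)=nD+2\sum_{k=1}^{n-1}(n-k)\gamma(k)$ with $D=\gamma(0)$, and split $\gamma(k)=A\,k^{2H-2}+r(k)$ with $r(k)=O(k^{2H-2-\beta})$. For the leading piece I would apply Lemma~\ref{Lemma:DFASum} with $\alpha=2-2H\in(0,1)$ (so $\alpha\neq1,2$, $2-\alpha=2H$, and $A_0(2-2H)=\tfrac{1}{2H(2H-1)}$), which gives $\sum_{k=1}^{n-1}(n-k)k^{2H-2}=\tfrac{n^{2H}}{2H(2H-1)}+A_1(2-2H)\,n+O(1)$. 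Combined with the term $nD$ this yields $V(n)=\frac{A}{H(2H-1)}\,n^{2H}+O(n)+\mathcal{E}(n)$, where $\mathcal{E}(n)=2\sum_{k=1}^{n-1}(n-k)\,r(k)$. Since $n^{1-2H}=n^{-(2H-1)}\le n^{-\min(2H-1,\beta)}$, the $O(n)$ contribution is already of the asserted order, so everything reduces to bounding $\mathcal{E}(n)$.

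For $\mathcal{E}(n)$ I would distinguish three cases according to the sign of $2H-1-\beta$. If $\beta<2H-1$, then $2H-2-\beta\in(-1,0)$ and $|\mathcal{E}(n)|\le C\,n\sum_{k=1}^{n-1}k^{2H-2-\beta}=O(n^{2H-\beta})$, a relative error $O(n^{-\beta})=O(n^{-\min(2H-1,\beta)})$. If $\beta>2H-1$, then $\sum_{k\ge1}k^{2H-2-\beta}<\infty$, so $|\mathcal{E}(n)|=O(n)$, a relative error $O(n^{1-2H})=O(n^{-\min(2H-1,\beta)})$. If $\beta=2H-1$, then $r(k)=O(k^{-1})$ and $|\mathcal{E}(n)|\le C\,n\sum_{k=1}^{n-1}k^{-1}=O(n\ln n)$ — Lemma~\ref{Lemma:DFASum2} furnishes the exact constants, but only the order is needed here — giving relative error $O(n^{1-2H}\ln n)$. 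This establishes the expansions for $V(n)$, and the polarization identity then delivers both parts of the proposition.

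I expect no serious obstacle here: the only non-mechanical ingredient is the polarization step expressing $\Sigma_m$ through $V$ at the three arguments $i$, $j$, $|i-j|$; the rest is routine Euler--Maclaurin bookkeeping already isolated in Lemmas~\ref{Lemma:DFASum}--\ref{Lemma:DFASum2}. The one point requiring a little care is that the boundary case $\beta=2H-1$ must be split off, since there the remainder sum acquires the extra logarithm; and one should keep track of the fact that the $O(n)$ remainder coming from $nD$ and from the $A_1(2-2H)\,n$ term is always dominated by (or of the same order as) the $n^{2H-\min(2H-1,\beta)}$ error, which is exactly what makes the exponent $\min(2H-1,\beta)$ the correct one.
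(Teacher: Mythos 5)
Your proof is correct and follows essentially the paper's own route: the polarization identity $\Sigma_m=\tfrac12\bigl(V(i)+V(j)-V(|i-j|)\bigr)$ is exactly the paper's decomposition of the double covariance sum into $\sum_{k}(i-k)\gamma(k)+\sum_{k}(j-k)\gamma(k)-\sum_{k}(|i-j|-k)\gamma(k)+\min(i,j)D$, and the leading term is extracted from the same Lemma~\ref{Lemma:DFASum} with $\alpha=2-2H$. The only (harmless) deviation is that you bound the remainder sums crudely by $n\sum_k|r(k)|$ rather than applying Lemmas~\ref{Lemma:DFASum}--\ref{Lemma:DFASum2} to them with exact constants, which suffices for the stated orders and even spares you the paper's separate subcase $\beta=2H$.
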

\begin{proof}[{\bf Proof}]
Under the assumptions on $X(t)$, for $1 \leq i, j, \leq m$ we can write
\beqnr
\Sigma_m &=& Cov(Y(i), Y(j))=\sum_{k=1}^{i} \sum_{l=1}^{j} Cov(X(k), X(l)) = \sum_{k=1}^{i} (i-k) \, \gamma(k) + \nonumber\\
&&+ \sum_{k=1}^{j} (j-k)\, \gamma(k) - \sum_{k=1}^{|i-j|} (|i-j|-k)\, \gamma(k) + \min(i, j) \,D\label{Sigma}
\eeqnr
where $D$ is the variance of the process $X(t)$.
%The first three terms are the same up to a change of variable. So, we can focus on the first one.
By substituting the explicit functional form for $\gamma(k)$ we get
\beq
\sum_{k=1}^{i} (i - k)\, \gamma(k) \leq A\sum_{k=1}^{i}(i - k)\, k^{2H-2} + M \sum_{k=1}^i (i-k) k^{2H - 2 -\beta}\nonumber
\eeq
for some $M>0$ sufficiently large. By Lemma~\ref{Lemma:DFASum} we have $A\sum_{k=1}^{i}(i - k)\, k^{2H-2} = A \left(\frac{i^{2H}}{2H(2H-1)}+O(i)\right)$.

Now, we consider the following cases:
\bi
\item[(i)]{$\beta\neq2H-1$.} In this case we have to distinguish two cases.

If $\beta\neq2H$, we can use Lemma~\ref{Lemma:DFASum} and obtain
\beqnr
\sum_{k=1}^{i} (i - k)\, k^{2H - 2 -\beta} = A_0(2H-2-\beta) \, i^{2H-\beta}+A_1(2H-2-\beta)\, i + O(1) %= \nonumber\\
= i^{2H} \,O\left(i^{-\min(2H-1, \beta)}\right)\nonumber
\eeqnr
%\frac{i^{2H-\beta}}{(2H-\beta)(2H-1-\beta)}
%
If $\beta=2H$, then we can use Lemma~\ref{Lemma:DFASum2} 
\beqnr
\sum_{k=1}^{i} (i - k)\, k^{2H - 2 -\beta} = \left(\frac{5}{3}+R_2\right)i-\ln i + O(1) =  O\left(i\right)\nonumber
\eeqnr
Then, we repeat the same calculation for the second and third term in (\ref{Sigma}). 
By noting that  $\min(i, j)$ is either of order $O(i)$ or $O(j)$ and putting together all the terms, we obtain the result.

\item[(ii)]{$\beta=2H-1$.}
In this case we can use Lemma~\ref{Lemma:DFASum2}
\beqnr
\sum_{k=1}^{i} (i - k)\, k^{2H - 2 -\beta} = \left(i \ln i + \left(-\frac{5}{12}+R_1\right)\,i+O(1)\right) = i^{2H} O\left(i^{1-2H}\ln i\right)\nonumber
\eeqnr
Then, we repeat the same calculation for the second and third term in (\ref{Sigma}).  
By noting that $\min (i, j)$ is either of order $O(i)$ or $O(j)$ and putting together all the terms, we obtain the result.
\ei
\end{proof}
%

%
%%%%% Proof of the Theorem %%%%%%%%
 \begin{proof}[{\bf Proof of Theorem~\ref{ThmExpF}}]
First, for $j \in \{ 1, \ldots, [n/m]\}$ let us define the vector:
\beq
Y^{(j)} = (Y(1 + m(j - 1)), \ldots,  Y(mj))^{\top}\,, \nonumber
\eeq
where $x^{\top}$ means the transpose of $x$.

Then, following Bardet and Kammoun (2008), 
\beqnr
F_1^2(m) = \frac{1}{m} (Y^{(1)} - P_{E_1} Y^{(1)})^{\top} (Y^{(1)} - P_{E_1} Y^{(1)})%\nonumber\\
= \frac{1}{m} (Y^{(1) \top}Y^{(1)} - Y^{(1) \top}P_{E_1}Y^{(1)})\nonumber\,,
\eeqnr
where $E_1$ is the vector subspace of $\mathbb{R}^{m}$ generated by the two vectors $e_1=(1, \ldots, 1)^{\top}$ and $e_2=(1, 2, \ldots, m)^{\top}$, $P_{E_1}$ is the matrix of the orthogonal projection on $E_1$, and the second equality holds because the projection operator is idempotent.
%, {\it i.e.} $P_A^2=P_A$ for all $A$
As a consequence,
\beq
\mathbb{E}\left[F_1^2(m)\right]= \frac{1}{m} \left(\mathrm{Tr}(\Sigma_m)- \mathrm{Tr}(P_{E_1} \Sigma_m)\right)\nonumber
\eeq
%
%\bi
\paragraph{Case (i)} If $\beta\neq 2 H-1$, from Proposition~\ref{PropSigma} we get
\beqnr
\mathrm{Tr}(\Sigma_m)%&=&\frac{A}{2 H (2 H -1)} \, 2 \sum_{i=1}^m i^{2 H}  \left(1+O(i^{-\min(2H-1, \beta)})\right)\nonumber\\
%&=&\frac{A}{H (2 H -1)}  m^{2H+1} \left(\frac{1}{m} \sum_{i=1}^m \left(\frac{i}{m}\right)^{2 H}  \left(1+O(i^{-\min(2H-1, \beta)})\right)\right)\nonumber\\
&=&\frac{A}{H (2 H -1)} m^{2H+1}\left(\int_{0}^{1} x^{2 H} \mathrm{d}x + O\left(m^{-\min(2H-1, \beta)}\right)+O\left(m^{-1}\right)\right)\,,\nonumber
\eeqnr
where the error $O\left(m^{-1}\right)$ comes from approximating the sum with the integral; therefore,
\beq
\mathrm{Tr}(\Sigma_m)=\frac{A m^{2H+1}}{H (2 H -1)} \frac{1}{2H+1} \left(1+O\left(m^{-\min(2H-1, \beta)}\right)\right)\label{TrSigma}
\eeq

For the term $\mathrm{Tr}(P_{E_1} \Sigma_m)$, we use the following representation of the projection operator 
\beq\label{Projection}
P_{E_1} = \frac{2}{m(m - 1)}\left((2 m + 1)- 3(i+j)+6\frac{i\cdot j}{1+m}\right)\,.
\eeq
Then, using (\ref{Projection}) and Proposition~\ref{PropSigma} we can write
\beqnr
\mathrm{Tr}(P_{E_1}\Sigma_m)=&&\frac{2 A m^{2 H +1} m^2 }{m(m-1)2 H(2 H -1)}
\left[\sum_{i=1}^{m}\sum_{j=1}^m \frac{1}{m^2}\left(\left(2+\frac{1}{m}\right)-3\frac{i+j}{m}+6\frac{i j}{m (m+1)}\right)\cdot\right.\nonumber\\
&&\left(\left(\frac{i}{m}\right)^{2 H} \left(1+O\left(i^{-\min(2H-1, \beta)}\right)\right)+\left(\frac{j}{m}\right)^{2 H} \left(1+O\left(j^{-\min(2H-1, \beta)}\right)\right)+\right.\nonumber\\
&&\left.-\left(\frac{|i-j|}{m}\right)^{2 H} \left(1+O\left(|i-j|^{-\min(2H-1, \beta)}\right)\right)\right)\Bigg]\,.\nonumber
\eeqnr
Approximating sums with integrals we get
\beqnr
\mathrm{Tr}(P_{E_1}\Sigma_m)&=&\frac{A m^{2 H +1}}{H(2 H -1)} \left(1+O\left(m^{-\min(2H-1, \beta)}\right)+O\left(m^{-1}\right)\right)\cdot \nonumber\\
&\quad& \cdot \int_0^1\int_0^1\left[\left(2-3(x+y)+6xy\right) \left(x^{2 H}+y^{2H}-|x-y|^{2H}\right)\right]\mathrm{d}x\mathrm{d}y \nonumber\\
&=&\frac{A m^{2 H +1}}{H(2 H -1)}\frac{1 + H (4 + H)}{(1 + H) (2 + H) (1 + 2 H)} \left(1+O\left(m^{-\min(2H-1, \beta)}\right)\right) \label{TrPSigma}
\eeqnr

Putting together (\ref{TrSigma}) and (\ref{TrPSigma}) we obtain
\beq
\frac{1}{m} \left(\mathrm{Tr}(\Sigma_m)- \mathrm{Tr}(P_{E_1} \Sigma_m)\right) = \frac{A}{H(2H-1)}f(H)m^{2H}\left(1+O\left(m^{-\min(2H-1, \beta)}\right)\right)\nonumber
\eeq
which is the formula of $\mathbb{E}\left[F_1^2(m)\right]$.
\paragraph{Case (ii)} If $\beta=2H-1$, the proof is exactly the same, except for replacing all the terms $O\left(i^{-\min(2H-1, \beta)}\right)$ with the terms $O\left(i^{1-2H}\ln i\right)$.
%\ei
\end{proof}
%

%%%%% Proof of the Corollaries %%%%%%%%
%
 \begin{proof}[{\bf Proof of Corollary~\ref{DFACorFGN}}]
 It follows from the autocovariance of a fractional Gaussian noise (see formula (\ref{eq:fgnACF})). The proof is very similar to the proof of Theorem~\ref{ThmExpF}, and thus omitted. However, a complete proof can be found in Bardet and Kammoun (2008) (see Proof of Property 3.1. therein).
\end{proof}

 \begin{proof}[{\bf Proof of Corollary~\ref{DFACorDiscrete}}]
 It follows from Proposition~\ref{Prop:ACV} and Theorem~\ref{ThmExpF}.
\end{proof}

%
% \begin{proof}[{\bf Proof of Corollary~\ref{DFACorSign}}]
% It follows from Proposition~\ref{Prop:ACVSign} and Theorem~\ref{ThmExpF}.
%\end{proof}

%%%

%%% PROOFS OF LEMMAS FOR SECTION ESTIMATION %%%

\subsection*{Proofs of lemmas for Section~\ref{estimation}}
 \begin{proof}[{\bf Proof of Lemma~\ref{Lemma:remainderEM}}]
First we prove that $R(\alpha)$ converges. Because $|B_2(\{1-t\})|\leq B_2(0)=1/6$ for all $t$, we have $|R(\alpha)| \leq \frac{|\alpha (\alpha-1)|}{12} \int_{1}^{m} t^{-2+\alpha}\mathrm{d}t$, where the integral on the right-hand side converges as $m\to \infty$ because $2-\alpha>1$.

Now we prove that $\widetilde{R}(\alpha)$ converges. Pick $\varepsilon>0$ s.t. $2-\alpha-\varepsilon>1$. One can always find such $\varepsilon$ because $\alpha<1$ by assumption. Because $\log t$ is slowly varying, there exists $T>1$ s.t. $t^{-2+\alpha}\log t < t^{-2+\alpha+\varepsilon}$ for all $t > T$. Because $|B_2(\{1-t\})|\leq 1/6$ for all $t\geq1$, we can write
\beq
|\widetilde{R}(\alpha)| <   \frac{1}{12} \int_{1}^{T} t^{-2+\alpha}\left(\alpha(\alpha-1) \log t -1 + 2\alpha\right)\mathrm{d}t + \frac{\alpha(\alpha+1)}{12} \int_{T}^{m} t^{-2+\alpha+\varepsilon} \mathrm{d}t\nonumber% \rightarrow \frac{\alpha(\alpha+1)}{12 (1-\alpha-\varepsilon)}<\infty \quad as \;\; m\to \infty \,,
\eeq
where the second integral converges as $m\to \infty$ because $2-\alpha-\varepsilon>1$.
\end{proof}
\begin{proof}[{\bf Proof of Lemma~\ref{Lemma:DFASum}}]
By using Euler-Maclaurin formula up to the first order we obtain
\beqnr
i \sum_{k=1}^{i} k^{-\alpha} &=& i \left(\frac{i^{1-\alpha}-1}{1-\alpha} + \frac{i^{-\alpha}+1}{2}+\frac{B_2}{2!} (-\alpha)(i^{-\alpha-1}-1)+R(\alpha)\right)\nonumber
\eeqnr
where $B_2=1/6$ is the third Bernoulli number, and $R(\alpha)$ is the remainder of the Euler-Maclaurin expansion given by (\ref{remainderEM:alpha}). From Lemma \ref{Lemma:remainderEM} we know that $R(\alpha)$ converges, as $i\rightarrow \infty$. So, we can write
 \beq
i \sum_{k=1}^{i} k^{-\alpha} =   \frac{i^{2-\alpha}}{1-\alpha} + A_1(\alpha) i + \frac{i^{1-\alpha}}{2} - \frac{\alpha \, i^{-\alpha}}{12}\nonumber
\eeq
where $A_1(\cdot)$ is defined as in Lemma~\ref{Lemma:DFASum}. Note that the second-order term is $O(i)$.

Similarly, we can write
\beq
\sum_{k=1}^{i} k^{1-\alpha} = \frac{i^{2-\alpha}}{2-\alpha} + \frac{i^{1-\alpha} }{2} + A_1(\alpha-1) + \frac{(1-\alpha) \, i^{-\alpha}}{12}\nonumber
\eeq
Note that in this case the second-order term is $O(i^{1-\alpha})$.

Putting together these two terms we have $\sum_{k=1}^{i} (i-k) k^{-\alpha} = A_0(\alpha) i^{2-\alpha}+A_1(\alpha) i + O(1)$.
%
%\beq
%\sum_{k=1}^{i} (i-k) k^{-\alpha} = A_0(\alpha) i^{2-\alpha}+A_1(\alpha) i + O(1)\nonumber
%\eeq
%
where $A_0(\cdot)$ is defined as Lemma~\ref{Lemma:DFASum}. 
Note that the terms of order $i^{1-\alpha}$ cancel out exactly.
\end{proof} 
 \begin{proof}[{\bf Proof of Lemma~\ref{Lemma:DFASum2}}]
The proof is similar to the proof of Lemma~\ref{Lemma:DFASum}, and thus omitted.
\end{proof} 
% 
%%%

%%%%%% SIGN PROCESS %% %%%%

\section{Sign Process}\label{App:Sign}
Taking the sign of a stochastic process can be thought of as an extreme form of discretization. Hence, to study the asymptotic properties of the sign process we can use the same technique outlined in Section~\ref{SubSec:ACV}  for general nonlinear transformations of Gaussian processes. By decomposing the sign transformation on the basis of Hermite polynomials we get the following

\begin{proposition}\label{Prop:ACVSign}
Let $\{X(t)\}_{t \in \mathbb{N}}$  be a stationary Gaussian process with autocovariance function given by Definition \ref{generalDef}. Then, the autocovariance $\gamma_s(k)$ and the autocorrelation $\rho_s(k)$ of the sign process satisfy
%The autocovariance function $\gamma_s(k)$ and the autocorrelation function $\rho_s(k)$ of the sign of a discrete time weakly stationary Gaussian process with autocorrelation $\rho(k)$ is
\beq
\gamma_s(k)=\rho_s(k)=\frac{2}{\pi}\arcsin \rho(k)
\eeq
\end{proposition}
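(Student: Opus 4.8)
Since $\operatorname{sign}(cx)=\operatorname{sign}(x)$ for every $c>0$, I may assume without loss of generality that $X(t)$ has unit variance, so that $\rho=\gamma$. Write $X_s(t)=g(X(t))$ with $g(x)=\operatorname{sign}(x)$. The function $g$ is odd and bounded, hence lies in $L^2$ with respect to the Gaussian weight, and it is measurable with the event $\{X(t)=0\}$ having probability zero, so $g(X(t))^2=1$ almost surely. Expanding $g$ in Hermite polynomials as in (\ref{Hermite}) with coefficients (\ref{gCoeff}), oddness forces $g_j=0$ for $j$ even, while $g(X(t))^2=1$ a.s.\ gives $D_s:=\operatorname{Var}[X_s(t)]=\sum_{j\ge 0}g_{2j+1}^2=1$. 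Consequently, by the covariance/correlation formulas (\ref{hexpansion}),
\beq
\gamma_s(k)=\rho_s(k)=\sum_{j=0}^{\infty} g_{2j+1}^2\,\rho(k)^{2j+1}\qquad\forall\,k\ge 1,\nonumber
\eeq
so it remains to identify this power series.

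The plan for the coefficients is a one-line integration by parts. Writing $H_n=\operatorname{He}_n/\sqrt{n!}$ with $\operatorname{He}_n$ the (un-normalized) probabilists' Hermite polynomials, the Rodrigues identity $\operatorname{He}_n(x)e^{-x^2/2}=(-1)^n\frac{d^n}{dx^n}e^{-x^2/2}$ gives, for $n$ odd,
\beq
g_n=\frac{2}{\sqrt{n!}}\int_{0}^{\infty}\operatorname{He}_n(x)\frac{e^{-x^2/2}}{\sqrt{2\pi}}\,dx=\sqrt{\tfrac{2}{\pi}}\,\frac{\operatorname{He}_{n-1}(0)}{\sqrt{n!}}.\nonumber
\eeq
Using $\operatorname{He}_{2j}(0)=(-1)^j(2j)!/(2^j j!)$ and $n=2j+1$ this yields
\beq
g_{2j+1}^2=\frac{2}{\pi}\cdot\frac{(2j)!}{4^{j}(j!)^2(2j+1)}=\frac{2}{\pi}\,\binom{2j}{j}\frac{1}{4^{j}(2j+1)}.\nonumber
\eeq
Plugging this into the series and invoking the classical Taylor expansion $\arcsin z=\sum_{j\ge0}\binom{2j}{j}z^{2j+1}/\bigl(4^{j}(2j+1)\bigr)$, valid for $|z|\le 1$, gives $\gamma_s(k)=\rho_s(k)=\tfrac{2}{\pi}\arcsin\rho(k)$; as a sanity check, evaluating at $z=1$ recovers $\sum_j g_{2j+1}^2=1=D_s$.

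I do not expect a serious obstacle here: the only slightly delicate points are the explicit evaluation of the Hermite coefficients (routine integration by parts plus the known value of $\operatorname{He}_{2j}(0)$) and recognizing the resulting generating function as $\tfrac{2}{\pi}\arcsin$, together with noting that the $\arcsin$ series converges on all of $[-1,1]$ so that the identity is legitimate even in the degenerate case $|\rho(k)|=1$. As an alternative, essentially equivalent route one can bypass the Hermite computation entirely and argue directly: for a standard bivariate normal pair with correlation $\rho$, $\mathbb{E}[\operatorname{sign}(X)\operatorname{sign}(Y)]=4\,\mathbb{P}(X>0,Y>0)-1$, and Sheppard's orthant formula $\mathbb{P}(X>0,Y>0)=\tfrac14+\tfrac{1}{2\pi}\arcsin\rho$ (itself provable by differentiating in $\rho$ via Price's theorem and integrating from the independent case $\rho=0$) gives the claim immediately; I would include this as a remark since it also explains the geometric meaning of the formula.
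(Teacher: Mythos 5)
Your proof is correct and follows essentially the same route as the paper: you expand $\operatorname{sign}(\cdot)$ in Hermite polynomials, evaluate the odd coefficients explicitly (your Rodrigues computation gives exactly the paper's $g_{2j+1}^2=\tfrac{2}{\pi}\binom{2j}{j}4^{-j}(2j+1)^{-1}$, equivalently $2^{j+1/2}\Gamma(j+1/2)/(\pi\sqrt{(2j+1)!})$), and resum the series as $\tfrac{2}{\pi}\arcsin\rho(k)$, with the variance check $D_s=1$ handled just as the paper does via $\sum_j g_{2j+1}^2=1$. The Sheppard orthant-formula remark is a nice, equivalent shortcut but not part of the paper's argument.
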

Therefore, also the sign transformation preserves the long memory property and the Hurst exponent. Moreover, if the autocorrelation $\rho$ is small (for example if the lag $k$ is large) we have 
\beq
\gamma_s(k)=\rho_s(k) = \frac{2}{\pi}\left(\rho(k)+\frac{\rho^3(k)}{6}\right)+ O(\rho^5(k)).
\eeq
This expression has been obtained several times, as, for example, in the context of binary time series (see Keenan (1982)). Note that, trivially, when the discretization is obtained by taking the sign function the variance of the discretized process is $D_s=1$.

All the results on the discretized process presented above hold true for the sign process as well, with $\left(\frac{\vartheta_2(0,e^{-1/2\chi}) }{\sqrt{2 \pi \chi}}\right)^2$ replaced by $\frac{2}{\pi D}$ and $g_3=-\frac{1}{\sqrt{3\pi}}$. Numerical results are available from the authors by request.

%Figures \ref{sign}, \ref{perSign} and \ref{dfa2} show the autocorrelation, the spectral density and the root-mean-square fluctuation function of the sign of a fGn for different values of $H$, together with the first approximation and the exact result. As in the case of the discretization, the agreement between simulations and theory is very good.

%
\begin{proof}[{\bf Proof of Proposition~\ref{Prop:ACVSign}}]
As the discretization, the sign transformation is an odd function, and therefore $g_j=0$ when $j$ is even. When $j$ is odd the coefficients of the sign function in Hermite polynomials are
\begin{equation}
g_j= 2\int_0^\infty H_j(x) \frac{e^{-x^2/2}}{\sqrt{2\pi}}dx=\frac{2^{j+1/2} \Gamma(j+1/2)}{\pi \sqrt{(2j+1)!}}.\nonumber
\end{equation} 
By inserting these value in (\ref{hexpansion}) we obtain the autocorrelation (and autocovariance) function of the sign of a Gaussian process 
\begin{equation}
\gamma_s(k)=\rho_s(k)=\sum_{j=1}^{\infty}g_j^2 \rho(k)^j=\frac{2}{\pi}\arcsin \rho(k)\nonumber
\end{equation} 
\end{proof}

\end{appendices}

\newpage

%%% BIBLIOGRAPHY %%%
%\bibliographystyle{plainnat}

%%%%%

\end{document}